\documentclass[11pt]{article}
\usepackage[utf8]{inputenc}
\usepackage[skins,xparse,breakable]{tcolorbox}
\usepackage{xcolor}
\definecolor{linkblue}{HTML}{001487}
\usepackage{graphicx} 
\usepackage{stmaryrd}
\usepackage{amsmath, amssymb}
\usepackage{mdframed}
\usepackage{mathtools}
\usepackage{braket}

\usepackage{graphicx}

\usepackage[margin=1in]{geometry}
\usepackage{amsmath,amssymb,amsthm,mathtools}
\usepackage{thm-restate}
\usepackage{booktabs}
\usepackage{algorithm}
\usepackage{algpseudocode}
\algrenewcommand{\algorithmicrequire}{\textbf{Input:}}
\algrenewcommand{\algorithmicensure}{\textbf{Output:}}
\algnewcommand{\lIf}[2]{\State\algorithmicif\ #1\ \algorithmicthen\ #2}
\usepackage[colorlinks=true,linkcolor=blue,citecolor=blue,urlcolor=blue]{hyperref}

\theoremstyle{plain}
\newtheorem{theorem}{Theorem}[section]

\newtheorem{lemma}[theorem]{Lemma}
\newtheorem{proposition}[theorem]{Proposition}

\newtheorem{fact}[theorem]{Fact}
\theoremstyle{definition}
\newtheorem{definition}{Definition}

\newcommand{\len}{\mathrm{len}}
\newcommand{\poly}{\mathrm{poly}}
\renewcommand{\d}{\mathrm{d}}

\newcommand{\SU}{\mathrm{SU}}
\newcommand{\R}{\mathbb{R}}

\newcommand{\val}{\operatorname{val}}

\newcommand{\Pthirtytwo}{\mathcal{P}_{32}}

\usepackage{authblk}

\begin{document}

\title{Breaking the cubic barrier for the inverse-free \\ Solovay-Kitaev algorithm}

\date{}

\clearpage

\author[1]{Adam Bouland\thanks{ \texttt{abouland@stanford.edu}}}
\author[1]{Victor Shyaka\thanks{ \texttt{victorks@stanford.edu}}}
\author[1]{Chenyi Zhang\thanks{ \texttt{chenyiz@stanford.edu}}}

\affil[1]{Department of Computer Science, Stanford University}

\maketitle
\thispagestyle{empty}

\begin{abstract}
The Solovay-Kitaev algorithm states that it is possible to efficiently compile unitaries to accuracy $\epsilon$ in $\text{polylog}(1/\epsilon)$ time using any universal gate set. While Solovay and Kitaev's initial work addressed inverse-closed gate sets, recently Bouland and Giurgic\v{a}-Tiron showed it is possible to efficiently compile  arbitrary gate sets without inverses. However, the exponent of their compilation algorithm is high. For example the inverse-free exponent for a qubit is over $8.62$, while information theoretically it is known that an inverse-free exponent of $3$ is possible in any dimension by a result of Oszmaniec, Sawicki, and Horodecki. This leads to a natural question: can one improve the exponent in Bouland and Giurgic\v{a}-Tiron's algorithm? And are there any cases where the inverse-free exponent is below 3, thus surpassing existing information-theoretic arguments?

To answer this question, 
we first show that a special case of the inverse-free compiling problem -- namely Pauli gates supplemented by an irrational rotation of a qubit - admits a highly efficient compilation algorithm with an exponent at most $\approx 2.138$. 
Our algorithm fuses ideas from prior algorithms of Kuperberg and Sardharwalla, Cubitt, Harrow and Linden to construct an efficient ``Pauli golf'' compilation routine. We then extend these ideas to the general inverse-free case for a qubit, where we give a compiling algorithm with  exponent at most $2.988$. This result uses a novel form of approximate ``aiming'' of unitary corrections to improve the compilation efficiency. Our results give two examples of quantum compiling algorithms surpassing information-theoretic arguments, and demonstrate that there are many efficiency gains possible for inverse-free compilation.

\end{abstract}

\newpage \section{Introduction}
\setcounter{page}{1}

\vspace{-0.5em}

The Solovay-Kitaev algorithm is a fundamental result in quantum computing theory.
It states that it is possible to efficiently compile any universal quantum gate set into any other, with only polylogarithmic overhead.
More formally, for any unitary $U$ of fixed dimension, any universal gate set $G$, and any $\epsilon>0$, there is a sequence of gates from $G$ which $\epsilon$-approximates $U$ of length $\text{polylog}(1/\epsilon)$, and moreover there is an efficient algorithm to find this sequence.
This result is fundamental to quantum computing, as it implies that the runtime of quantum algorithms is independent of the choice of universal gate set up to polylogarithmic factors \cite{nielsen2000quantum,dawson2006solovay}.

The initial proofs of the Solovay-Kitaev algorithm required that gate set be inverse-closed, i.e. if a gate $g\in G$ then also $g^\dagger \in G$.
In this setting the best exponent for the polylog was $3$ for over 20 years by the algorithm of Kitaev, Shen and Vyalyi \cite{kitaev2002classical}. Subsequently in 2023 Kuperberg improved the general inverse-full exponent to $\approx1.440$ \cite{kuperberg2023breaking}, which Yehuda  then improved to  $\approx1.138$ for the special case of a qubit \cite{yehuda2026improved}.
There have also been numerous improvements to these algorithms for particular gate sets like Clifford + T which are important in fault-tolerance protocols, where optimal $O(\log(1/\epsilon))$ compilation algorithms are known e.g. \cite{ross2014optimal,kliuchnikov2014asymptotically,paetznick2013repeat,bocharov2015efficient}.
Interestingly it is known that for any algebraic gate set, that $O(\log(1/\epsilon))$ length sequences exist \cite{bourgain2010spectral,harrow2002efficient} - the challenge is ``merely'' to find them!

In the case the gate set is not closed under inverses, much less is known.
The first inverse-free Solovay-Kitaev (IFSK) algorithm was shown by Bouland and Giurgic\v{a}-Tiron in 2021 \cite{bouland2021efficient}, building on prior partial progress of Sardharwalla, Cubitt, Harrow and Linden \cite{sardharwalla2016universal} and Bouland and Ozols \cite{bouland2017trading}.
Bouland and Giurgic\v{a}-Tiron's IFSK algorithm is far from optimal however.
In particular, Oszmaniec, Sawicki and Horodecki showed that for any (non inverse closed) gate set, that $O(\log^3(1/\epsilon))$ sequences exist to approximate any unitary \cite{oszmaniec2021epsilon}, building on spectral gap work of Varju \cite{varju2013random}.
However Bouland and Giurgic\v{a}-Tiron's algorithm's exponent is much higher, being higher than 8 for a qubit, and growing with the dimension. 
The only lower bound on the length the sequence is $\Omega(\log(1/\epsilon))$ by  counting arguments.
This leads to a natural question: can one improve Bouland and Giurgic\v{a}-Tiron's algorithm? In particular, are there any cases where one can bring the exponent below three, thus surpassing current information-theoretic arguments?

\subsection{Our results}

We answer this problem in the affirmative. In particular we describe two different IFSK algorithms with exponents below three.
First, we show that if your gate set contains the Pauli gates, that there exists an efficient IFSK algorithm with exponent $\approx2.138$.

\begin{theorem}[informal]\label{thm:free_pauli_compilation_informal} Let $G\subset\mathrm{SU}(2)$ be a finite universal gate set with $X,Y,Z\in G$. There is an efficient algorithm to $\epsilon$-approximate any unitary $U$ using $O(\log^{2.138}(1/\epsilon))$ many gates from $G$.
\end{theorem}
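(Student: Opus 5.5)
The plan is to run a Solovay--Kitaev style recursion in which the Pauli gates stand in for inverses. The key identity is conjugation by Paulis. For $V=\exp(i\,\vec v\cdot\vec\sigma)$ we have $XVX = \exp(i(v_x\sigma_x - v_y\sigma_y - v_z\sigma_z))$, and similarly for $Y$ and $Z$. Hence
\[
V\,(XVX)\,(YVY)\,(ZVZ) = \exp\bigl(O(|\vec v|^2)\bigr),
\]
because the linear terms cancel: $\vec v$ plus its three Pauli reflections sum to zero. This ``Pauli golf'' product turns an $\epsilon$-approximation of $W$ into an inverse-free approximation of something close to $W^{-1}$. Concretely, $W^{-1} = (XWX)(YWY)(ZWZ)$ up to a second-order error, and that error can be fed back into the recursion. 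So inverses cost a constant factor of about $3$ in length rather than a polynomial blowup.

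First, a net at constant accuracy. Since $G$ is universal, brute-force search gives words of constant length that $\epsilon_0$-approximate every element of $SU(2)$. Inverses of these net words are approximated in the same way.

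Second, the recursive step, following Kuperberg and Sardharwalla--Cubitt--Harrow--Linden. Given a target $U$ and a current approximation $U_k$ with $\|U_k^{-1}U-I\|\le\epsilon_k$, write the remainder $\Delta=U_k^{-1}U$ near the identity. Express $\Delta$ as a group commutator $[A,B]=ABA^{-1}B^{-1}$ with $\|A-I\|,\|B-I\|=O(\sqrt{\epsilon_k})$. Each factor, including $A^{-1}$ and $B^{-1}$, is obtained from level-$(k-1)$ approximations. Inverses are produced by the Pauli conjugation trick: for $A$ near $I$, $(XAX)(YAY)(ZAZ) \approx A^{-1}$ with error $O(\|A-I\|^2)$.

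Because $A$ and $B$ are close to the identity, the errors in their approximations enter the commutator only at second order. This is the mechanism that lets the accuracy exponent improve, $\epsilon_{k+1}\approx c\,\epsilon_k^{3/2}$, while the length grows by a constant factor $\ell_{k+1}\approx m\,\ell_k$. The exponent is then $\log m/\log(3/2)$, so a naive count gives $m=1+2+6+\dots$ and too large an exponent.

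To reach $2.138$ I would adopt Kuperberg's refinement. Instead of a single commutator, use nested products, or exploit the Pauli structure directly. The expansion $\prod_{P\in\{I,X,Y,Z\}} PVP$ kills the first-order term automatically, so a balanced product of about four conjugated copies of the level-$(k-1)$ word acts as a quadratic corrector. The target is a recursion with $\epsilon_{k+1}\sim\epsilon_k^{2}$ or similar, and with multiplicity $m$ small enough that $\log m/\log(\text{accuracy exponent})\le 2.138$. I would set up the recursion abstractly, with word length $L(\epsilon)$ satisfying $L(\epsilon^{\alpha}) \le m L(\epsilon)+O(1)$, and then solve $L(\epsilon)=O(\log^{\log m/\log\alpha}(1/\epsilon))$.

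The main obstacle is the error analysis of the Pauli-conjugation inverse together with choosing the optimal gadget. The residual second-order term in $(XAX)(YAY)(ZAZ)A$ must itself be corrected recursively without destroying the gain. I would try to show the residual is a fixed quadratic (Lie bracket) expression in $\vec a$ that can be absorbed into the next target. Then I would optimize the gadget's multiplicity against its order of cancellation to obtain the stated constant $2.138$. Efficiency of the algorithm follows because each level does constant work per node of a recursion tree of polylogarithmic size.
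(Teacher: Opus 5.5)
There is a genuine gap: nothing in your proposal produces the exponent $2.138$, and the recursion you set up is unlikely to. In $L(\epsilon^\alpha)\le mL(\epsilon)+O(1)$, every subproblem sits at a scale polynomially related to $\epsilon$, so the exponent is $\log m/\log\alpha$. Your commutator gadget has, by your own count, $m\ge 9$ and $\alpha=3/2$, which gives an exponent above $5$. For the quadratic corrector, even granting $\alpha=2$ you would need $m\le 2^{2.138}\approx 4.4$. Yet one product $\prod_{P\in\{I,X,Y,Z\}}PVP$ already uses four copies, and up to third order it is $\exp(-4i\,v_xv_z\,Y)$, a rotation about a single fixed axis. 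Aiming at an arbitrary $\Delta$ with this gadget therefore takes three such products plus the current word, so $m\ge 13$. ``Adopting Kuperberg's refinement'' does not help directly, since both his self-cancelling words and his precision golf rely on exact inverses. The commutator step itself is also unresolved. For $\|\vec a\|\sim\sqrt{\epsilon_k}$, the Pauli ``inverse'' $(XAX)(YAY)(ZAZ)=A^{-1}\exp(O(\|\vec a\|^2))$ carries a systematic second-order error of the same size $\epsilon_k$ as the commutator you are building. You flag this but do not resolve it.

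The paper's exponent has the different form $1+q$, with $q=\log2/\log\rho\approx1.138$ and $\rho$ the Tribonacci constant; both ingredients are missing from your plan.

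\begin{itemize}
\item The $+1$ comes from singly exponential Pauli golf. With $h=\log E$, the product of three sticks approximating $e^{-ih_PP}$ to relative accuracy $\eta_0$ halves the error. Signs are set by conjugating with a Pauli that anticommutes with $P$. So $O(\log(1/\epsilon))$ rounds are needed.
\item The $q$ comes from the stick words $w_{n+1}^P=(Pw_n^{P'}P)\,w_n^{P'}$, where $P'$ alternates so that each step suppresses two of the three Pauli components. The component orders obey the Tribonacci recursion, giving cancellation order $T_n$ at length $3\cdot2^{n-1}-2$, with the output dominated by its $P$ component.
\item The decisive idea is that the seed $V_r$ sits at \emph{constant} distance $r=\Omega(1)$ from $I$. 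The target size is reached by choosing the depth $n$ and tuning $r$ by bisection. The seed's error is amplified, relative to the stick size, only by $2^{n+2}/r=O(\log^q(1/\epsilon))$. So the seed need only be compiled to precision $c/\log^q(1/\epsilon)$.
\end{itemize}

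This yields $L(\epsilon)\le C\log^{q+1}(1/\epsilon)\bigl[L\bigl(c/\log^q(1/\epsilon^2)\bigr)+1\bigr]$, whose recursive term is only polynomial in $\log\log(1/\epsilon)$. This recursion has an entirely different shape from $L(\epsilon^\alpha)\le mL(\epsilon)$, and it is why the exponent comes out as $1+q+o(1)$, below $2.138$.
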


This is precisely the setting of Sardharwalla, Cubitt, Harrow and Linden's algorithm which made the first partial progress towards IFSK \cite{sardharwalla2016universal}.
This first algorithm interleaves ideas from several prior IFSK and SK algorithms to yield the improved exponent \cite{sardharwalla2016universal,kuperberg2023breaking,yehuda2026improved}.

Second, we show the general IFSK exponent for a qubit is at most $\approx2.988$, thus surpassing information-theoretic arguments for a generic inverse-free gate set:
\begin{theorem}[informal]\label{thm:general_inverse_free_informal} Let $G\subset\mathrm{SU}(2)$ be a finite universal gate set. There is an efficient algorithm to $\epsilon$-approximate any unitary $U$ using $O(\log^{2.988}(1/\epsilon))$ many gates from $G$.
\end{theorem}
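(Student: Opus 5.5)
We prove the general inverse-free qubit bound by building an inverse-free Solovay--Kitaev recursion in which every step that would normally use an inverse is replaced by a cheap approximate substitute. The target is a length recursion $L(\epsilon^{\alpha}) \lesssim c\, L(\epsilon)$ whose exponent $\log c/\log\alpha$ is at most $2.988$.

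\textbf{Base layer.} We first fix a net and an inverse-free replacement for inverses. Take $\epsilon_0$ small enough, and use a brute-force enumeration of words in $G$ to get an $\epsilon_0$-net of $\SU(2)$ with words of constant length. In particular, every $g\in G$ has an inverse-free word $\tilde g$ with $\|\tilde g - g^{-1}\|\le\epsilon_0$, and the same holds for the Paulis $X,Y,Z$.

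\textbf{Recursion.} The main recursion combines two ingredients. The first is Kuperberg-style balanced group commutators $[V,W]=VWV^{-1}W^{-1}$, which convert $\epsilon$-accurate pieces into $\epsilon^2$-accurate corrections near the identity. The second is the Sardharwalla--Cubitt--Harrow--Linden trick, which realizes the inverses of near-identity elements by Pauli conjugation: $P e^{i\theta\cdot\sigma}P^{\dagger}$ flips two of the three rotation-axis components. Concretely, if $V\approx e^{iv\cdot\sigma}$ is near the identity, then the product $V\cdot XVX\cdot YVY\cdot ZVZ$ cancels to first order. Suitably weighted products of such conjugates therefore give $V^{-1}$ up to second order, with no true inverses needed. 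The Paulis themselves are only available approximately from the base layer, and this is exactly where the previous \emph{Pauli golf} routine for the case $X,Y,Z\in G$ (the $2.138$ algorithm of Theorem~\ref{thm:free_pauli_compilation_informal}) is invoked as a subroutine. We compile approximate Paulis at the current precision level, at a cost folded into the recursion.

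\textbf{Aiming.} The new step, and the one I expect to be the main obstacle, is controlling how the Pauli errors propagate. An approximate Pauli $\tilde P = P e^{i\delta\cdot\sigma}$ with $|\delta|\sim\epsilon$ introduces a first-order error that would spoil the quadratic gain. My approach is to choose the residual target $U' = U_{\mathrm{approx}}^{-1}U$, a rotation of size $\epsilon$, and then pick the commutator factors so that the first-order Pauli errors point along a direction we can absorb. Their contribution is then a known linear function of the $\delta$'s, and we pre-correct the aim of the commutator by solving a linear system in $\mathfrak{su}(2)$. This works because the map from the $\delta$'s to the output error is an invertible linear map at leading order, by the nondegeneracy of the adjoint action. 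The leftover error is $O(\epsilon^{2})$, or $O(\epsilon^{\alpha})$ with some $\alpha$ slightly less than $2$ once the aiming tolerance is included.

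\textbf{Counting.} Finally we count word lengths. Each level uses a fixed number $c$ of recursive calls: the commutator pieces, their Pauli-conjugated pseudo-inverses, and the aiming corrections. This gives $L(\epsilon^{\alpha})\le c\,L(\epsilon)+O(1)$, hence $L(\epsilon)=O(\log^{\log c/\log\alpha}(1/\epsilon))$. The constants have to be optimized, for example by reusing Pauli conjugates across the two commutator factors and by choosing a balanced splitting, so that $\log c/\log\alpha\le 2.988$. The efficiency claim then follows because each level needs only constant-size linear algebra and the base-layer lookup.
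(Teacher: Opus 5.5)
There is a genuine gap. The exponent $2.988$ is the whole content of the theorem, and your counting step asserts it rather than derives it. The single-scale commutator architecture you describe also cannot reach it.

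First, a balanced group commutator whose factors have size $\sqrt{\epsilon}$ and are approximated to accuracy $\epsilon$ gives a correction accurate only to $O(\epsilon^{3/2})$, not $\epsilon^2$. So $\alpha\le 3/2$ in $L(\epsilon^\alpha)\le cL(\epsilon)$, which would force $c\le 3$.

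Second, even granting $\alpha=2$, you would need $c\le 2^{2.988}<8$. Yet $VWV^{-1}W^{-1}$, with each inverse replaced by three Pauli conjugates, already contains each of $V,W$ four times. Together with the coarse approximation of $U$, that is at least nine recursive pieces, giving exponent at least $\log_2 9>3.16$ before you pay for the twelve approximate Pauli words. Reusing a compiled Pauli word saves classical work but not gates, since output length counts every occurrence.

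Third, the Pauli golf routine of Theorem~\ref{thm:free_pauli_compilation_informal} cannot supply the Paulis. Both its aiming step ($Q_PPQ_P=-P$) and its stick generation via the words $w_n^P$ require exact $X,Y,Z\in G$, so invoking it here is circular. Re-aiming the commutator to absorb the first-order Pauli errors only works if those errors are no larger than the correction being built. That means recompiling the Pauli words at essentially the fine precision at every level, which only inflates $c$.

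What is missing is a multi-scale recursion in which most gates come from very coarse approximations. The paper's recursion is
\[ L(\epsilon)\le L(c\epsilon^{2/3})+16L(c\epsilon^{1/3})+256L(\epsilon^{2/45}/100)+256L(\epsilon^{1/15}/100), \]
which closes because $(2/3)^p+16(1/3)^p+256(2/45)^p+256(1/15)^p<1$ at $p=2.988$. It rests on two constructions absent from your sketch.

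The first is the correction sequence $\mathcal P_{24}(S,C_1,C_2)$. It uses anticommuting anchors $Q_1=iX$, $Q_2=iY$, which play the role you intended for the Paulis. Every first- and second-order term involving the anchor errors cancels, so anchors accurate to $c\epsilon^{1/3}$ suffice for an $\epsilon$-accurate result when $\|\log S\|\sim\epsilon^{1/3}$. Aiming is done not by compensating errors but by conjugating the anchors by $V(h)$, where $h$ is the fixed point of a contraction. This shows the sequence is onto a $c\epsilon^{2/3}$-neighborhood of $\mathcal P_{24}(S,Q_1,Q_2)$.

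The second is that $S$ need only be $(e_0,\epsilon^{1/3})$-directed: its size is known up to a factor $100$ and its direction within $1/20$. The product $\mathcal P_{32}(\widetilde B\widetilde A,\widetilde A\widetilde B)$ produces such an $S$ from inputs accurate only to $\epsilon^{2/45}/100$ and $\epsilon^{1/15}/100$. This uses the cancellation $\mathcal P_{32}(q_1,q_2)\approx(1,0)+48\tau\delta^3(q_1+q_2)$, where $\tau=\operatorname{Re}q_1=-\operatorname{Re}q_2$ and $\delta=\operatorname{Re}(q_1q_2)+\tau^2$.

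Without ingredients of this kind, where the bulk of the word length is spent on approximations of precision $\epsilon^{2/45}$ and $\epsilon^{1/15}$, your counting step cannot close below $3$.
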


These results imply that the information theoretic IFSK result of Oszmaniec, Sawicki, and Horodecki \cite{oszmaniec2021epsilon} is not tight for a qubit. 
We conjecture that further improvements to their arguments might be possible (beyond the exponents achieved by our algorithms) for the case of a qubit, and leave this as an open problem\footnote{We note our results do not directly imply any improvements to the spectral gap results, as a spectral gap is sufficient for but not necessary for an information-theoretic IFSK result.}.

\subsection{Proof sketch}

\paragraph{A highly efficient IFSK algorithm with Pauli gates:}
Our first result is to study the special case of the IFSK problem on a qubit, where one assumes the Pauli gates are included in the gate.
The prior IFSK algorithm of Sardharwalla, Cubitt, Harrow and Linden in this setting has an exponent of at least 3 \cite{sardharwalla2016universal}, as it was building on prior algorithms with higher exponents \cite{kitaev2002classical,dawson2006solovay}.

A natural idea for bringing the exponent below 3 is to use ideas from Kuperberg's algorithm for the inverse-ful case \cite{kuperberg2023breaking}. 
Kuperberg's algorithm works by a form of ``golf'' which proceeds in two steps.
The first step is ``stick generation'' where one generates unitary rotations (=``sticks" or ``golf strokes'') of specified lengths.
Kuperberg shows this can be done highly efficiently using self-cancelling words with inverses, with exponent around 1.44.
The second step is ``precision golf'' where one uses conjugation by unitaries to ``aim'' the golf strokes in particular directions to steer towards a target unitary.
Kuperberg shows this golf subroutine is doubly exponentially convergent, making its contribution to the exponent negligible.
Unfortunately both of these steps use gate inversion in a key way.
For example, aiming a golf stroke $g$ by conjugation (e.g. $g\rightarrow UgU^\dagger$) only preserves the length of $g$ if the inverse is performed \emph{exactly} - and hence there is not an obvious way to generalize this to the inverse-free case.
Likewise Kuperberg's stick generation algorithm makes inherent use of inverses to create highly efficient self-cancelling words.
Therefore new ideas are needed to use these subroutines in the inverse-free case.

The starting point for our first algorithm is that some form of ``aiming'' is still possible if your gate set contains the Paulis. Namely, by conjugating by Paulis, one can aim a given golf stroke roughly in the cardinal directions of the unitary group $SU(2)$ (see Figure \ref{fig:Golf}). We show that this weak form of aiming is enough to converge to a given point in the space - effectively because each stroke takes us a constant fraction of the remaining distance to the desired point. This reduces the problem to ``stick generation", i.e. generating smaller and smaller nontrivial unitary rotations.
However, we note this ``Pauli golf'' subroutine comes at a cost of 1 in the exponent of the algorithm. This is because it merely converges singly exponentially quickly to a desired point, whereas the precision golf subroutine of Kuperberg is doubly exponentially convergent. 

The next ingredient of our algorithm is to note that \cite{sardharwalla2016universal}'s work contains an ingredient which can be used for stick generation.
Their algorithm uses the fact that if $U$ is $\epsilon$-close close to the identity, then $f(U)=XUXYUYZUZU$ is $O(\epsilon^2)$-close to the identity.
This allows one to generate smaller and smaller ``sticks'' we can use in our Pauli golf subroutine.
Unfortunately this routine is not efficient enough to break the cubic barrier.
The cost of produce a stick of length $\epsilon$ via iterating $f(U)$ is $\log^2(1/\epsilon)$, where the exponent of two comes from the ratio of logarithms of the number of $U$'s invoked (4) to the cancellation order (2).
Adding one from the singly-exponentially convergent Pauli golf subroutine would produce an exponent of 3, so would not improve on info-theoretic arguments.

Our key insight is that this stick generation subroutine can be made more efficient. In particular, we show that the errors in the resulting ``stick'' produced by $f(U)$ tend to be larger in certain directions than others. In particular if $U$ is a rotation which is $\epsilon$-close to the identity, then 
\[f(U)= (1-O(\epsilon^4)) I + O(\epsilon^3) X + O(\epsilon^2) Y + O(\epsilon^4) Z\]
In other words, the resulting ``stick'' tends to point in the $Y$ direction, with suppressed errors in other directions.
This suggested an improved way of iterating the $f(U)$ construction to decrease the length of the resulting stick. In particular, by conjugating $f(U)$ by a Pauli, we can flip the signs of certain error directions of $f(U)$. This allows us to cancel out the larger directions in the resulting stick. 
For example one can easily show that
\[
Zf(U)Zf(U) = (1-O(\epsilon^8)) I + O(\epsilon^6) X + O(\epsilon^7) Y + O(\epsilon^4) Z,
\]
i.e. two iterations can yield a cancellation order of 4 using 8 copies of $U$.
The question then becomes how to optimally iterate the construction to get the best possible exponent.
Fascinatingly we show there is way to recurse $f(U)$ interspersed with Paulis to obtain exponent $\log(2)/\log(\rho) \approx 1.138$, 
where $\rho$ is the rate of exponential growth of the Tribonacci sequence, i.e. the sequence where the next item generated is the sum of the last three items. 
In our setting this occurs because we can pick two out of the three error directions to improve at this step, which we show naturally gives rise to Tribonacci numbers in the recursion.
We note this Tribonacci exponent previously appeared in work of Yehuda \cite{yehuda2026improved} for the inverse-ful case on a qubit, and our Tribonacci exponent appears for related reasons in the inverse-free Pauli case.
Putting this all together, our net exponent of our compilation algorithm is $\log(2)/\log(\rho) +1 \approx 2.138$ which proves the claimed bound.

\paragraph{A subcubic IFSK algorithm:}

Our second result brings the IFSK exponent for a qubit below 3 in the general case.
For this algorithm we cannot use a form of ``golf'' as we do not have access to any gates with inverses to ``aim'' using conjugation.
Moreover, using the self-correcting sequences of Bouland and Giurgic\v{a}-Tiron in a more efficient framework like \cite{kitaev2002classical} yields exponents above 3.
Therefore we need a completely different framework to break the cubic barrier.

Instead, our starting point is to consider a weaker form of ``aiming'' in our algorithm to improve the exponent.
Suppose that the difference between our current unitary (composed of gates of $G$) and our target is in some direction $e$ away of length $\epsilon$. Our first step is to produce a unitary $S$ whose distance from the identity is proportional to $\epsilon$ (up to constants) and whose direction \emph{approximately} aligns with a given unit vector $e$. Formally, writing $S=\exp[-i(s_1X+s_2Y+s_3Z)]$ with $s\in\mathbb R^3$, we require
\[
\epsilon\le\|s\|\le100\epsilon,
\qquad
\left\|\frac{s}{\|s\|}-e\right\|\le\frac1{20}.
\]
We call such an $S$ an $(e,\epsilon)$-directed unitary. 
We show that such a directed unitary can be constructed from a fixed product containing 32 copies each of two unitaries, which need only be approximated to much coarser precision, namely  $\epsilon^{2/15}/100$ and $\epsilon^{1/5}/100$, respectively. Although both errors are much larger than $\epsilon$, cancellations in the product give the stated bounds on $S$. 
The sequence is essentially a higher order variant of one of the self-correcting sequences employed in Bouland and Giurgic\v{a}-Tiron's algorithm, with modifications to better allow the resulting unitary to be ``aimed.''
This directed unitary, however, is not itself enough to make significant progress towards our target -- we cannot simply append $S$ to our gate sequence, as we do not have precise enough control on its length to ensure we make progress.

Instead, we show that we can use a directed unitary $S$ to construct the required correction to our gate sequence, substantially reducing our error. In particular, we define a sequence $\mathcal P_{24}(S,C_1,C_2)$ consisting of copies of $S$\footnote{Technically we use directed unitaries $S$ pointing in different directions, see main proof for details.}, $C_1$, and $C_2$, where $C_1,C_2$ are chosen in the neighborhood of two unitaries $Q_1,Q_2$ that are chosen in advance and are referred to as \textit{anchors}. ($Q_1,Q_2$ can be thought of as $X$ and $Z$, as it turns out anticommuting operators are best for this construction). We show that as $C_1,C_2$ vary near the anchors, we show the sequence attains every unitary within distance $c\epsilon^2$ of $\mathcal P_{24}(S,Q_1,Q_2)$, for an absolute constant $c>0$. In other words, this sequence is \emph{onto} the space of unitary corrections, with very small remaining error. In particular we show the remaining error in our correction, accounting for all errors in the subcomponents, is $O(
\epsilon^3
)$. 
In other words, if we can create directed unitaries, we can then use them to create \emph{arbitrary} unitary corrections to high accuracy within some region.

We then show that we can piece these two subroutines together to produce an overall algorithm with exponent below 3.
The key reason this is possible is that our recursive calls to our length-32 sequence for directed unitaries allow very large error, so can be drawn from much coarser approximations.
More formally, to bound the total sequence length, let $L(\epsilon)$ denote the worst-case output length at error $\epsilon$. The sequence $\mathcal P_{24}$ contains eight copies each of $S,C_1,C_2$, while constructing $S$ uses 32 copies of each of the two much coarser approximations. We show that using the error guarantees in Section~\ref{sec:inverse_free}, we obtain the following recursion
\[
\begin{aligned}
L(\epsilon^3)
\le L(c\epsilon^2)+16L(c\epsilon)
+256L(\epsilon^{2/15}/100)
+256L(\epsilon^{1/5}/100).
\end{aligned}
\]
The recursion stops at a fixed positive error threshold, where a precomputed finite list supplies a net of constant precision. With this initial bound, we show the recurrence yields $L(\epsilon)=O(\log^{2.988}(1/\epsilon))$. 
We note this algorithm is also computationally efficient.
There are $O(\log\log(1/\epsilon))$ recursion levels, with at most five recursive calls per step. Together with the $\poly\log(1/\epsilon)$ work per call, this gives a total classical running time of $\poly\log(1/\epsilon)$ as desired.

\section*{AI Statement}
The compilation algorithm with Pauli gates was human-generated with minor calculational assistance from GPT 5.6 Sol and Claude Opus.
The general inverse-free compilation algorithm for a qubit was developed using GPT 5.6 Sol and GPT 6 Astra, with many rounds of interaction and simplification required to bring the exponent below 3.

\section*{Acknowledgments}
This material is based on work supported in part by the National Science Foundation under award No. 2016245 and 2440805, by the Air Force Office of Scientific Research under grant agreement FA9550-21-1-0392 and FA9550-24-1-0089, by the DOE Office of Science under grant agreements DE-SC0020377 and DE-SC0025934, and by the Shoucheng Zhang graduate fellowship.

\section{Preliminaries}\label{sec:preliminaries}

\paragraph{General notation.} Denote $\mathcal S\coloneqq \{v\in\R^3\mid \|v\|=1\}$ to be the unit sphere in $\R^3$. For any $x\in\R^3$ and $r>0$, write $B_x(r)\coloneqq \{y\in\R^3\mid \|y-x\|\leq r\}$ for the closed ball of radius $r$ centered at $x$.

We define a \emph{word} $w=(g_1,\ldots,g_\ell)$ to be a finite classical string of gate symbols. For a gate set $G$, we call $w$ a \emph{$G$-word} if each symbol represents a gate in $G$. We write $\len(w)\coloneqq\ell$ for its length and $\val(w)$ for the product of the corresponding unitary matrices in the listed order. We denote words by lowercase letters. When there is no ambiguity, we use the corresponding uppercase letters for the unitaries they represent; for example, $W=\operatorname{val}(w)$.

\paragraph{Constant precision nets.} The following standard result provides a finite list of gate sequences for approximating any unitary in $\SU(2)$ to constant precision; see \cite[Section~3]{dawson2006solovay} and \cite[Section~2.2]{bouland2021efficient}.

\begin{lemma}\label{lem:compilation_base_case}
Let $G\subset\mathrm{SU}(2)$ be a finite universal gate set. For any fixed $\delta=\Omega(1)$, one can compute a finite list of $G$-words whose values form a $\delta$-net of $\SU(2)$ in $O(1)$ time. Equivalently, there is a procedure $\delta\text{-}\textsc{ConstantNet}$ that, given $U\in\SU(2)$, returns a word $w$ from this list in $O(1)$ time with $\len(w)=O(1)$ and $\|W-U\|\le\delta$.
\end{lemma}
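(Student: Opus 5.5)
The plan is to use universality to show that the semigroup generated by $G$ is dense in $\SU(2)$, then use compactness of $\SU(2)$ to extract a finite net, and finally observe that the net depends only on $G$ and $\delta$, so all the work is an $O(1)$ precomputation.

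First, I would recall that a universal gate set means the semigroup $\langle G\rangle_+$ generated by $G$ (products of gates, no inverses) is dense in $\SU(2)$. If universality is instead phrased via the generated group, I would reduce to the semigroup using the standard compact-group fact. Every $g\in G$ lies in the compact group $\SU(2)$, so the powers $g^n$ accumulate at the identity. Hence for any $\eta>0$ there is some $n\ge1$ with $\|g^n-I\|\le\eta$, and then $g^{n-1}$ approximates $g^{-1}=g^\dagger$ to within $\eta$ in operator norm. Therefore the closure of the semigroup is a closed subsemigroup of a compact group, hence a closed subgroup, and it equals the closure of the group generated by $G$, which is all of $\SU(2)$.

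Second, I would use compactness. The open balls of radius $\delta/2$ around the points of $\SU(2)$ cover it, so finitely many balls, centered at $V_1,\dots,V_N$, already cover $\SU(2)$, with $N$ depending only on $\delta$. By density, for each $V_i$ there is a $G$-word $w_i$ with $\|W_i-V_i\|\le\delta/2$. By the triangle inequality, $\{W_i\}$ is then a $\delta$-net. Alternatively, without choosing centers, I would enumerate all $G$-words in order of increasing length. Since $\SU(2)$ has finite $\delta/2$-covering number, some finite length bound $\ell_0(\delta,G)$ suffices for the words up to that length to form a $\delta$-net. Because $G$, $\delta$, and the dimension are all fixed, the list has $O(1)$ size, each word has length $O(1)$, and the list can be computed in $O(1)$ time. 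For instance, one can check the net property against a fixed fine grid of $\SU(2)$ given by Euler angles.

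Third, the procedure $\delta$-\textsc{ConstantNet} is a brute-force scan. Given $U$, it computes $\|W_i-U\|$ for every word in the list and returns a word achieving the minimum, which is at most $\delta$ by the net property. This takes $O(1)$ arithmetic operations.

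The main obstacle is the inverse-free density step, that is, justifying that semigroup closure equals group closure. The recurrence argument above handles it, but it is non-constructive: it gives no explicit bound on $\ell_0$. Since $\delta=\Omega(1)$ is fixed, this is harmless, because $\ell_0$ is simply some constant depending on $G$ and $\delta$. This matches the treatment in \cite{dawson2006solovay,bouland2021efficient}, which I would cite for the details.
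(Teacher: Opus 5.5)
Your argument is correct and is the standard one the paper relies on; the paper gives no proof of its own and only cites Dawson--Nielsen and Bouland--Giurgic\v{a}-Tiron. The ingredients match: recurrence of powers in the compact group $\SU(2)$ makes the inverse-free semigroup dense, compactness gives finitely many centers, and brute-force enumeration of bounded-length words yields the net, with constants depending only on $G$ and $\delta$. One small detail to make explicit is the margin in your grid check: if the grid is a $\gamma$-net, require each grid point to lie within $\delta-\gamma$ of some word, which density guarantees will eventually hold.
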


\paragraph{Quaternion representation of $\SU(2)$. }
$\mathrm{SU}(2)$ is isomorphic to the unit-quaternion group
\[
\mathrm{Sp}(1)
:=
\left\{
(\tau,v) :
\tau\in\mathbb R,\ 
v\in\mathbb R^3,\ 
\tau^2+\lVert v\rVert^2=1
\right\},
\]
where the isomorphism we use in this paper is
\[
\Phi:\mathrm{Sp}(1)\longrightarrow\mathrm{SU}(2),
\qquad
\Phi(\tau,v)
=
\tau I-i(v_1X+v_2Y+v_3Z),
\]
with quaternion multiplication given by
\[
(\tau,v)(\tau',v')
=
\bigl(\tau\tau'-v\cdot v',\,
\tau v'+\tau'v+v\times v'\bigr).
\]
For a quaternion $(\tau,v)$, write $\operatorname{Re}(\tau,v)=\tau$ and $\operatorname{Im}(\tau,v)=v$. Moreover, for
$(\tau,v),(\tau',v')\in\mathrm{Sp}(1)$, we have $(\tau,v)^{-1}=(\tau,-v)$, and
\[
\left\lVert\Phi(\tau,v)-\Phi(\tau',v')\right\rVert_{\mathrm{op}}
=
\lVert(\tau,v)-(\tau',v')\rVert.
\]
For $u\in\mathbb R^3$, we write
\[
\exp(0,u)\coloneqq 
\begin{cases}
\displaystyle\left(\cos\|u\|,\frac{\sin\|u\|}{\|u\|}u\right),&u\neq0,\\[6pt]
(1,0),&u=0,
\end{cases}
\]
whose inverse operation is the principal logarithm defined as
\begin{equation}\label{eq:quaternion_principal_log}
\log(\tau,v)\coloneqq
\begin{cases}
\displaystyle\frac{\arccos\tau}{\|v\|}\,v,&v\neq0,\\[6pt]
0,&v=0.
\end{cases}
\end{equation}
for any $(\tau,v)\in\mathrm{Sp}(1)\setminus\{(-1,0)\}$.
For $U\in\mathrm{SU}(2)\setminus\{-I\}$, we write $\log U\coloneqq\log\bigl(\Phi^{-1}(U)\bigr)\in\mathbb R^3$.

\begin{lemma}
\label{fact:quaternion-rotation}
For every $Q=(\tau,v)\in\mathrm{Sp}(1)$, there is a unique rotation $R_Q$ in
$\mathbb R^3$ such that
\[
Q(0,u)Q^{-1}=(0,R_Qu),\qquad u\in\mathbb R^3.
\]
Its rotation angle $\theta\in[0,\pi]$ satisfies $\cos\theta=2\tau^2-1$. If $v\ne0$, its rotation axis is $v$. Moreover, if $Q=\exp(0,h)$ for some $h\in\mathbb R^3$ and $\|h\|\le1$, we also have
\[
R_Qx=x+2h\times x+O(\|h\|^2\|x\|),\qquad x\in\mathbb R^3.
\]
\end{lemma}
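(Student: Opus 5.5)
The plan is to compute $Q(0,u)Q^{-1}$ directly with the quaternion multiplication formula and read off a linear map. Write $Q=(\tau,v)$ with $\tau^2+\|v\|^2=1$, so $Q^{-1}=(\tau,-v)$. First,
\[
Q(0,u)=\bigl(-v\cdot u,\ \tau u+v\times u\bigr).
\]
Multiplying on the right by $(\tau,-v)$, the real part is
\[
-\tau\,v\cdot u+(\tau u+v\times u)\cdot v=0,
\]
since $(v\times u)\cdot v=0$. The imaginary part is
\[
\tau(\tau u+v\times u)+\tau(v\cdot u)(-1)\cdot(-1)\cdots
\]
and I will carefully expand it to the standard Rodrigues form
\[
R_Qu=(\tau^2-\|v\|^2)\,u+2(v\cdot u)\,v+2\tau\, v\times u,
\]
using the identity $(v\times u)\times v=\|v\|^2u-(v\cdot u)v$.

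This $R_Q$ is linear in $u$. Since conjugation preserves the quaternion norm, $R_Q$ is an isometry. It lies in the connected group $\mathrm{Sp}(1)$, and $Q=(1,0)$ gives the identity, so by continuity $\det R_Q=1$ and $R_Q\in\mathrm{SO}(3)$. Uniqueness holds because $R_Q$ is determined by its action on all $u$.

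For the axis, the formula gives $R_Qv=(\tau^2-\|v\|^2+2\|v\|^2)v=v$, so $v$ is fixed when $v\neq0$. For the angle, I will compute the trace:
\[
\operatorname{tr}R_Q=3(\tau^2-\|v\|^2)+2\|v\|^2=3\tau^2-\|v\|^2=4\tau^2-1.
\]
Comparing with $\operatorname{tr}=1+2\cos\theta$ gives $\cos\theta=2\tau^2-1$, with $\theta\in[0,\pi]$.

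For the expansion, take $Q=\exp(0,h)$, so $\tau=\cos\|h\|$ and $v=\tfrac{\sin\|h\|}{\|h\|}h$. Then:
\begin{itemize}
\item $\tau^2-\|v\|^2=\cos(2\|h\|)=1+O(\|h\|^2)$;
\item $2(v\cdot x)v=O(\|h\|^2\|x\|)$;
\item $2\tau\, v\times x=2h\times x+O(\|h\|^3\|x\|)$, since $\tau\sin\|h\|/\|h\|=1+O(\|h\|^2)$.
\end{itemize}
Because $\|h\|\le1$, all constants in these $O$-terms are uniform, which gives $R_Qx=x+2h\times x+O(\|h\|^2\|x\|)$.

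The only delicate step is the sign bookkeeping in the imaginary part of the product, specifically the cross-product terms. I will check the result against the test case $Q=\exp(0,\tfrac{\pi}{2}e_3)$, which should act on $e_1$ as a rotation by $\pi$.
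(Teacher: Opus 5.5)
Your proof is correct, and its core matches the paper's: both expand $Q(0,u)Q^{-1}$ with the multiplication rule to get $R_Qu=(\tau^2-\|v\|^2)u+2(v\cdot u)v+2\tau\,v\times u$. The routes differ in three places.

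First, to show $R_Q$ is a rotation, the paper uses associativity, $(0,R_Qu)(0,R_Qu')=Q(0,u)(0,u')Q^{-1}$, and compares components to get $R_Qu\cdot R_Qu'=u\cdot u'$ and $R_Qu\times R_Qu'=R_Q(u\times u')$. That argument is purely algebraic and gives orientation preservation directly. You instead get the isometry from multiplicativity of the quaternion norm and $\det R_Q=1$ from continuity on the connected space $\mathrm{Sp}(1)\cong S^3$; this is shorter but imports a topological fact.

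Second, the paper's proof never actually verifies the angle and axis claims. Your trace identity $\operatorname{tr}R_Q=4\tau^2-1=1+2\cos\theta$ and the check $R_Qv=(\tau^2+\|v\|^2)v=v$ supply them, so on this point your argument is the more complete one.

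Third, for the expansion the paper substitutes $\exp(0,\pm h)=(1,0)\pm(0,h)+O(\|h\|^2)$ and reads off $2h\times x$ from the commutator $(0,h)(0,x)-(0,x)(0,h)$. You substitute $\tau=\cos\|h\|$, $v=\frac{\sin\|h\|}{\|h\|}h$ into the closed form. Yours yields explicit remainder constants; the paper's does not need the closed form.

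Two small fixes. Your intermediate line for the imaginary part is garbled. Write $Q(0,u)=(a,w)$ with $a=-v\cdot u$ and $w=\tau u+v\times u$. Then the imaginary part of $(a,w)(\tau,-v)$ is $(v\cdot u)v+\tau w-w\times v$, with no factor $\tau$ on the $(v\cdot u)v$ term. Your identity for $(v\times u)\times v$ then gives exactly your final formula.

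Also, your proposed sanity check $Q=\exp(0,\tfrac{\pi}{2}e_3)$ has $\tau=0$, so the term $2\tau\,v\times u$ vanishes. The check therefore cannot detect the cross-product sign you flag as delicate---and that sign is precisely what the claimed $+2h\times x$ depends on. Take instead a case with $\tau v\neq0$, e.g.\ $Q=\exp(0,\tfrac{\pi}{4}e_3)$. There direct multiplication gives $Q(0,e_1)Q^{-1}=(0,e_2)$, matching $2\tau\,v\times e_1=e_3\times e_1=e_2$.
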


\begin{proof}
Note that
\[
\begin{aligned}
Q(0,u)Q^{-1}
&=(-v\cdot u,\;\tau u+v\times u)(\tau,-v)\\
&=\Bigl(0,\;(\tau^2-\|v\|^2)u
       +2(v\cdot u)v+2\tau(v\times u)\Bigr),
\end{aligned}
\]
where the second component is linear in $u$ and can be uniquely described by a linear operator $R_Q$. In the rest of the proof, we show that $R_Q$ is a rotation in $\R^3$.

For any $u,u'\in\mathbb R^3$, direct multiplication gives
\[
(0,R_Qu)(0,R_Qu')
=\bigl(-(R_Qu)\cdot(R_Qu'),\;(R_Qu)\times(R_Qu')\bigr).
\]
On the other hand, associativity gives
\[
\begin{aligned}
(0,R_Qu)(0,R_Qu')
&=Q(0,u)(Q^{-1}Q)(0,u')Q^{-1}\\
&=Q(-u\cdot u',\,u\times u')Q^{-1}\\
&=\bigl(-u\cdot u',\,R_Q(u\times u')\bigr).
\end{aligned}
\]
Comparing the scalar and vector components respectively, we have
\[
(R_Qu)\cdot(R_Qu')=u\cdot u',
\qquad
(R_Qu)\times(R_Qu')=R_Q(u\times u').
\]
Therefore, $R_Q$ is a rotation since it preserves inner products and orientation.

For $Q=\exp(0,h)$, the expansion $\exp(0,\pm h)=(1,0)\pm(0,h)+O(\|h\|^2)$ gives
\[
\begin{aligned}
(0,R_Qx)
&=\exp(0,h)(0,x)\exp(0,-h)\\
&=(0,x)+(0,h)(0,x)-(0,x)(0,h)+O(\|h\|^2\|x\|)\\
&=(0,x+2h\times x)+O(\|h\|^2\|x\|).
\end{aligned}
\]
Taking imaginary parts gives the claimed expansion.
\end{proof}

\paragraph{Fixed points.} Let $D\subseteq\R^3$ be a nonempty set, and let $F:D\to D$ be a map. A point $h\in D$ is a \emph{fixed point} of $F$ if $F(h)=h$. We write $\|\cdot\|$ for the Euclidean norm on $\R^3$.

\begin{lemma}[Banach's fixed-point theorem, see e.g., {\cite[Theorem~9.23]{rudin1976principles}}]\label{lem:banach_fixed_point}
Let $D\subseteq\R^3$ be nonempty and closed, and let $F:D\to D$ be a map. Suppose that there exists $q\in[0,1)$ such that
\[
\|F(x)-F(y)\|\le q\|x-y\|\qquad\text{for all }x,y\in D.
\]
Then there is a unique $h\in D$ satisfying $F(h)=h$.
\end{lemma}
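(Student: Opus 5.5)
The plan is to run the standard contraction-mapping argument, using completeness of $D$: it is a closed subset of the complete space $\R^3$, so every Cauchy sequence in $D$ converges to a point of $D$. Uniqueness comes first since it is immediate. If $F(h)=h$ and $F(h')=h'$, then $\|h-h'\|=\|F(h)-F(h')\|\le q\|h-h'\|$. Because $q<1$, this forces $\|h-h'\|=0$.

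For existence, pick any $x_0\in D$, which is possible because $D\neq\emptyset$, and define $x_{n+1}=F(x_n)$. These iterates stay in $D$ because $F$ maps $D$ into $D$. By induction on the contraction bound, $\|x_{n+1}-x_n\|\le q^n\|x_1-x_0\|$. For $m>n$ the triangle inequality and a geometric series then give
\[
\|x_m-x_n\|\le\sum_{k=n}^{m-1}q^k\|x_1-x_0\|\le\frac{q^n}{1-q}\|x_1-x_0\|.
\]
This tends to $0$ as $n\to\infty$, so $(x_n)$ is Cauchy in $\R^3$. It therefore converges to some $h\in\R^3$, and $h\in D$ because $D$ is closed.

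Finally, I will show $h$ is fixed. The contraction bound makes $F$ Lipschitz, hence continuous on $D$. Passing to the limit in $x_{n+1}=F(x_n)$ gives $h=\lim x_{n+1}=\lim F(x_n)=F(h)$. Alternatively, one can bound directly:
\[
\|F(h)-h\|\le\|F(h)-F(x_n)\|+\|x_{n+1}-h\|\le q\|h-x_n\|+\|x_{n+1}-h\|\to0.
\]

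There is no real obstacle here. The one point needing care is that closedness of $D$ is what places the limit inside $D$, so that $F(h)$ is defined at all; this is why the hypothesis is stated for closed $D$. As a byproduct, the argument gives the a priori estimate $\|x_n-h\|\le \frac{q^n}{1-q}\|x_1-x_0\|$, which can be recorded for later use.
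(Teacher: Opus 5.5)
Your proof is correct and is essentially the standard Picard-iteration argument of the cited reference (Rudin, Theorem~9.23). The paper gives no proof of its own and only cites this result; as you note, closedness of $D$ in the complete space $\R^3$ is what keeps the limit in $D$.
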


\section{A highly efficient IFSK algorithm with Pauli gates}\label{sec:free_paulis}
In this section, we construct an inverse-free compiler for gate sets containing the Pauli gates. 

\begin{theorem}[Formal version of Theorem~\ref{thm:free_pauli_compilation_informal}]
Let $G\subset\mathrm{SU}(2)$ be a finite universal gate set with $X,Y,Z\in G$. Given $U\in\mathrm{SU}(2)$ and $0<\epsilon<1$, Algorithm~\ref{alg:compile} outputs a $ G$-word $w$ with $\len(w)=O(\log^{2.138}(1/\epsilon))$ and $\|W-U\|\le\epsilon$.
\label{thm:free_pauli_compilation_3}
\end{theorem}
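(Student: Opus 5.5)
The proof has three parts: a stick generator, a Pauli-golf reduction of compilation to stick generation, and a length count that multiplies the two costs.

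\textbf{Stick generation.} I would work in the quaternion picture of Section~\ref{sec:preliminaries}. Write a near-identity unitary as $\exp(0,u)$. Conjugating by a Pauli flips the signs of two components of $u$: for instance $X\exp(0,u)X=\exp(0,(u_1,-u_2,-u_3))$.

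First I would expand $f(U)=XUXYUYZUZU$ in $u$ using the quaternion product. I expect the imaginary part to have the anisotropic form claimed in the sketch: the first-order terms cancel, the $Y$-component is quadratic in $u$, the $X$-component cubic, and the $Z$-component quartic. The precise statement I would aim for is a lemma tracking an order vector. If $\log U=(a_1,a_2,a_3)$ with $|a_j|=O(\epsilon^{k_j})$, then a suitable Pauli-twisted product of two or four copies has components bounded by $\epsilon$ raised to new exponents.

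For the pairing step, consider $P f(U) P f(U)$ with $P$ a Pauli. It cancels the two components that $P$ flips, to leading order. The surviving orders are sums of the current orders, since they come from commutator and cross terms. Choosing, at each recursion level, which two of the three directions to cancel should give an order vector obeying a Tribonacci-type recursion $k^{(n+1)}_j = k^{(n)}_{j-1}+k^{(n)}_{j-2}+k^{(n)}_{j-3}$, while the number of copies of $U$ doubles at each level. This gives a stick of size $\epsilon_n \approx \epsilon_0^{\rho^n}$ at cost $2^n$. In other words, a stick of length $\delta$ costs $O(\log^{\log 2/\log\rho}(1/\delta))=O(\log^{1.138}(1/\delta))$, starting from a fixed base element.

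For the base element I would take a constant-precision near-identity word from Lemma~\ref{lem:compilation_base_case}. Its direction must be nondegenerate, so I would choose it with all components of $u$ comparable, possibly after a Pauli conjugation.

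\textbf{This expansion is the main obstacle.} I need a rigorous, uniform control of the cross terms so that the order vector truly follows the Tribonacci recursion. The leading coefficients must stay nonzero, or at least stay bounded above by the claimed orders, with constants that do not blow up over $\log\log$ many levels. I would first try to prove an exact identity for the leading terms via BCH up to the needed order, and then absorb everything else into $O(\cdot)$ by an induction with explicit constants.

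\textbf{Pauli golf and length count.} Let the current error be $V=\exp(0,r)$ with $\|r\|=\eta$. Pick a stick $S$ whose length is about $\eta/4$ and whose direction is essentially a fixed axis; twisting by Paulis gives all six signed cardinal directions. Multiplying $V$ by the stick aimed at the cardinal direction most anti-aligned with $r$ reduces the error by a constant factor $c<1$ up to higher-order terms. This uses Lemma~\ref{fact:quaternion-rotation} and the fact that some cardinal axis makes an angle at most $\arccos(1/\sqrt3)$ with $-r$.

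Stick lengths cannot be set exactly, so I would generate a ladder of sticks at geometric scales and allow a constant number of strokes per scale. About $O(\log(1/\epsilon))$ strokes are needed in total, each of length at most $O(\log^{1.138}(1/\epsilon))$, which gives total length $O(\log^{2.138}(1/\epsilon))$. The algorithm starts from a $\delta$-net approximation of $U$, and correctness follows by induction on the error $\eta$.
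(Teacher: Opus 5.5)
\textbf{Assessment: there is a genuine gap, in how the sticks of the required sizes are produced.}

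Two parts of your plan are sound. Your golf step works: a single stroke whose length lies in a constant-factor window around $\eta/4$, aimed by Pauli conjugation along the signed cardinal axis most anti-aligned with $\log V$, contracts the error by a constant factor. That is a legitimate variant of Lemma~\ref{lem:pauli_golf}. Your stick recursion is the paper's $w_n^P$.

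\textbf{Why a fixed seed fails.} With a fixed seed $U_0$ from the constant-precision net, the word $w_n^P(U_0)$ has size roughly $r_0^{T_n}$. Consecutive levels differ by a factor of roughly $r_0^{T_{n+1}-T_n}=r_0^{T_{n-1}+T_{n-2}}$, which is itself polynomially small in the stick size. A finite net of seeds supplies only finitely many such doubly-exponential sequences of sizes. So for small $\eta$ there is in general no stick whose length is within any fixed constant factor of $\eta$. The ``ladder at geometric scales'' you invoke does not exist, and bridging the gaps with repeated smaller strokes would need polynomially many strokes. The size must instead be tuned through a continuously varying seed, as in the paper's family $V_r=\sqrt{1-3r^2}\,I-ir(X+Y+Z)$ with $r\in[r_0^2/8,r_0]$ located by bisection. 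Such a seed is not a $G$-word.

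\textbf{The missing recursive compilation.} The seed therefore has to be compiled, which raises a precision issue your plan never addresses. The stick word contains $2^{n-1}=\Theta(\log^q(1/\epsilon))$ copies of the seed. By Lemma~\ref{lem:pauli_stick_input_error}, a seed error $\delta$ becomes a relative error of order $2^n\delta/r$ in the stick. Hence the seed must be compiled to precision $c/\log^q(1/\epsilon)$, not constant precision. This is exactly why Algorithm~\ref{alg:axis} calls \textsc{Compile} recursively. The length recursion becomes $L_{\mathrm{full}}(\epsilon)\le C\log^{q+1}(1/\epsilon)\,[L_{\mathrm{full}}(c/\log^q(1/\epsilon^2))+1]$. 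The extra factor $L_{\mathrm{full}}(c/\log^q)=O((\log\log(1/\epsilon))^{2.138})$ is what the slack between $1+q\approx2.1375$ and $2.138$ absorbs in the induction. Your count of $O(\log^{1.138}(1/\epsilon))$ per stroke omits it.

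\textbf{Smaller points.} The expansion you flag as the main obstacle is actually straightforward. Fact~\ref{fact:self_correcting_step} gives exact formulas in which each new Pauli coefficient is $\pm2$ times a product of two old coefficients, so no BCH estimates are needed. Your displayed rule $k_j^{(n+1)}=k_{j-1}^{(n)}+k_{j-2}^{(n)}+k_{j-3}^{(n)}$ sums all three components (indices mod $3$), which would triple the orders. The actual update replaces the orders by sums of pairs of old orders (Eq.~\eqref{eq:Fz_cancellation}), and the Tribonacci recurrence emerges across levels.
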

Our algorithm improves a coarse approximation by repeatedly appending a correction that reduces the error by a fixed factor. Section~\ref{sec:pauli-golf} develops a subroutine that we call \textit{Pauli golf}, which combines three small unitary rotations (``sticks''), choosing their sizes and orienting them by Pauli conjugation so that their product gives the required correction. Section~\ref{sec:stick-gen} demonstrates how to construct these sticks of decreasing size needed for this process. Then, Section~\ref{sec:compilation-alg} combines these ingredients into Algorithms~\ref{alg:compile} and~\ref{alg:axis}, and bounds their output lengths and runtime.

\subsection{Pauli Golf}\label{sec:pauli-golf}
Here we show how to use Pauli gates to ``aim'' small corrections towards a target unitary. These corrections use \emph{Pauli sticks}, which approximate $e^{-ihP}$ for $P\in\{X,Y,Z\}$ and small $|h|$ with error at most a small fraction of $|h|$. Lemma~\ref{lem:pauli_golf} shows how to choose the sizes and Pauli conjugations of three sticks so that appending the product of their conjugates to a sufficiently close approximation reduces the error by at least a factor of two. We call this process \emph{Pauli golf} (Figure~\ref{fig:Golf}).

\begin{definition}\label{def:pauli_stick}
Let $\sigma\in\{-1,1\}$, $0<\epsilon<1$, $P\in\{X,Y,Z\}$, and $0<\eta<1/4$. We call a unitary $S\in\SU(2)$ a \emph{$(\sigma,\epsilon,P,\eta)$-stick} if $\|S-e^{-i\sigma\epsilon P}\|\le\eta\epsilon$.
\end{definition}

\begin{lemma}\label{lem:pauli_golf}
There exist constants $\eta_0,\epsilon_0\in(0,1/4)$ with the following property. For every $E\in\SU(2)$ satisfying $0<\|E-I\|\le\epsilon_0$, there exists $\|E-I\|^2\le\epsilon_P\le2\|E-I\|$ and $Q_P\in\{I,X,Y,Z\}$ for $P\in\{X,Y,Z\}$, such that any $(1,\epsilon_P,P,\eta_0)$-sticks $S_P$ (Definition~\ref{def:pauli_stick}) give
\[
\left\|
(Q_XS_XQ_X)(Q_YS_YQ_Y)(Q_ZS_ZQ_Z)-E
\right\|
\le\frac12\|E-I\|.
\]
Moreover, the values of $\epsilon_P$ and $Q_P$ for $P\in\{X,Y,Z\}$ can be computed in $O(1)$ time.
\end{lemma}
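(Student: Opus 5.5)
Write $E=\Phi(\tau,v)$ and $\delta\coloneqq\|E-I\|$. We may take $\tau\ge 0$: if $\tau<0$ then $\|E-I\|\ge 1>\epsilon_0$. The plan is to pick each Pauli stick so that its rotation matches the corresponding coordinate of $v$, with the sign of the rotation controlled by Pauli conjugation. Two facts drive this. First, conjugating $e^{-i\epsilon P}$ by a Pauli $Q$ that anticommutes with $P$ gives $e^{i\epsilon P}$. Second, conjugation by $Q=I$ or $Q=P$ leaves it unchanged. Also, conjugation by a Pauli is an isometry in operator norm, so $Q_PS_PQ_P$ is $\eta_0\epsilon_P$-close to $e^{-i\sigma_P\epsilon_P P}$. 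Here $\sigma_P\in\{\pm1\}$ is the sign we want.

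The choice of parameters is as follows. The small unitary $e^{-i\sigma_P\epsilon_P P}$ has $P$-coordinate $\sigma_P\sin\epsilon_P$. So I set $\sigma_P=\operatorname{sign}(v_P)$, with $+1$ if $v_P=0$. I set $Q_P=I$ if $\sigma_P=1$, and otherwise $Q_P$ equal to a Pauli anticommuting with $P$ (for example $Q_X=Z$, $Q_Y=X$, $Q_Z=X$). For the magnitude I take $\epsilon_P=\max(|v_P|,\delta^2)$. The lower bound $\epsilon_P\ge\delta^2$ is there only to meet the stated constraint. The upper bound $\epsilon_P\le 2\delta$ holds because $|v_P|\le\|v\|\le\delta$ and $\delta^2\le\delta$. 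Computing these quantities from $E$ takes $O(1)$ arithmetic.

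The error estimate has two parts. First, compare the ideal product $T=e^{-i\sigma_X\epsilon_XX}e^{-i\sigma_Y\epsilon_YY}e^{-i\sigma_Z\epsilon_ZZ}$ with $E$. Expanding in quaternions, each factor is $(1,\sigma_P\epsilon_P e_P)+O(\epsilon_P^2)$. The product is therefore $(1,w)+O(\delta^2)$ with $w_P=\sigma_P\epsilon_P$. Meanwhile $E=(1,v)+O(\delta^2)$, because $\tau=\sqrt{1-\|v\|^2}$ gives $1-\tau=O(\delta^2)$. We have $|w_P-v_P|\le\delta^2$, since replacing $|v_P|$ by $\max(|v_P|,\delta^2)$ moves it by at most $\delta^2$. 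Hence $\|T-E\|\le C\delta^2$ for an absolute constant $C$.

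Second, account for the sticks. Swapping each factor of $T$ for the actual conjugated stick costs at most $\eta_0\epsilon_P\le 2\eta_0\delta$ per factor, by the triangle inequality and unitary invariance. The total is at most $6\eta_0\delta$. Choosing $\eta_0=1/24$ and $\epsilon_0$ with $C\epsilon_0\le 1/4$ gives a total error of at most $\tfrac14\delta+\tfrac14\delta=\tfrac12\delta$.

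The only step needing care is making the constant $C$ explicit enough to choose $\epsilon_0$. It should come from bounding the quadratic remainders: $|1-\cos\epsilon_P|$, $|\sin\epsilon_P-\epsilon_P|$, and the cross terms of the product of three near-identity quaternions. Each is at most a small multiple of $\epsilon_P^2\le 4\delta^2$. A crude bound like $C\le 50$ should suffice, giving for instance $\epsilon_0=1/200$.
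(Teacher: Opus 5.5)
Your proposal is correct and follows essentially the same argument as the paper: Pauli conjugation fixes the signs, the magnitudes are $\max\{|\cdot|,\|E-I\|^2\}$, the ideal product is compared with $E$ to first order with an $O(\|E-I\|^2)$ error, and the sticks contribute a further $6\eta_0\|E-I\|$, controlled by taking $\eta_0=1/24$ and $\epsilon_0$ small. The only difference is that you use the quaternion coordinates $v=\operatorname{Im}\Phi^{-1}(E)$ where the paper uses $h=\log E$; this change is inessential and makes $|v_P|\le\|E-I\|$ immediate.
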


\begin{proof}
Write $r=\|E-I\|$ and $h=(h_X,h_Y,h_Z)=\log E$. For sufficiently small $\epsilon_0$,
\[
E=e^{-i(h_XX+h_YY+h_ZZ)},
\qquad
\|h\|=2\arcsin(r/2)\le2r.
\]
To keep all three sizes positive, set $\epsilon_P=\max\{|h_P|,r^2\}$ for $P\in\{X,Y,Z\}$. Then $r^2\le\epsilon_P\le2r$ and $|\epsilon_P-|h_P||\le r^2$. Choose $Q_P=I$ when $h_P\ge0$; otherwise, choose $Q_P\in\{X,Y,Z\}\setminus\{P\}$, so that $Q_PPQ_P=-P$. By Definition~\ref{def:pauli_stick}, every admissible stick satisfies
\[
\|Q_PS_PQ_P-e^{-ih_PP}\|
\le\eta_0\epsilon_P+|\epsilon_P-|h_P||
\le\eta_0\epsilon_P+r^2.
\]
The three ideal exponentials satisfy
\[
e^{-ih_XX}e^{-ih_YY}e^{-ih_ZZ}
=I-i(h_XX+h_YY+h_ZZ)+O(r^2)
=E+O(r^2).
\]
Consequently,
\[
\begin{aligned}
\|(Q_XS_XQ_X)(Q_YS_YQ_Y)(Q_ZS_ZQ_Z)-E\|
&\le\eta_0(\epsilon_X+\epsilon_Y+\epsilon_Z)+O(r^2)\\
&\le6\eta_0r+O(r^2).
\end{aligned}
\]
Choosing $\eta_0\le1/24$ and then $\epsilon_0$ sufficiently small makes this at most $r/2$. Computing $\log E$ and the three pairs $(\epsilon_P,Q_P)$ takes $O(1)$ time.
\end{proof}

\begin{figure}[H]
    \centering
    \includegraphics[width=1\linewidth]{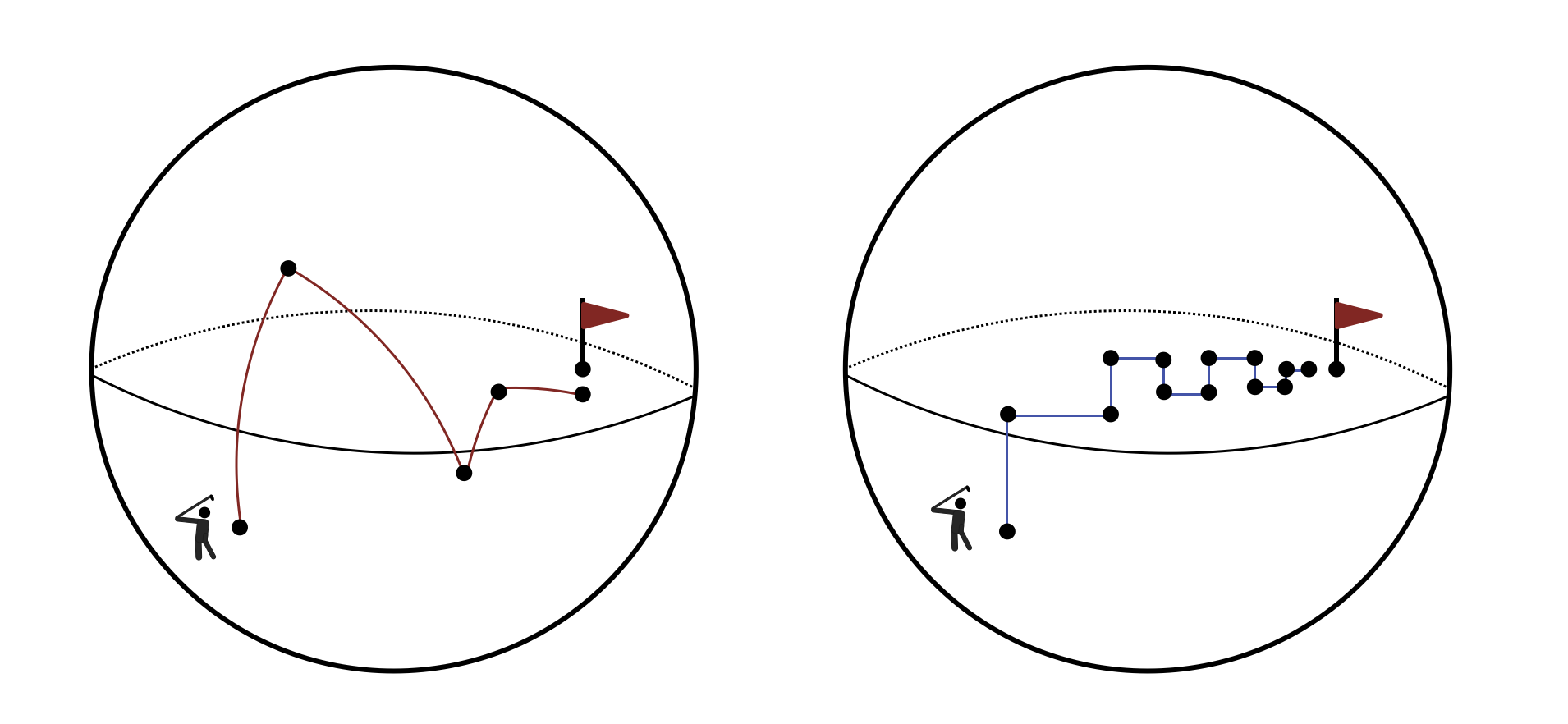}
    \caption{Illustration of (Left) doubly-exponential convergence of Kuperberg's algorithm \cite{kuperberg2023breaking} vs. (Right) singly-exponential convergence of our Pauli Golf method.}
    \label{fig:Golf}
\end{figure}

\subsection{Pauli sticks generation}\label{sec:stick-gen}
Here we show how to construct the Pauli sticks required by Lemma~\ref{lem:pauli_golf} using sequences of Pauli gates and copies of a unitary close to $I$. Proposition~\ref{prop:pauli_stick_generation} bounds the length of these sequences and specifies how accurately this unitary must be approximated.

\begin{proposition}[Pauli stick generation]\label{prop:pauli_stick_generation}
Fix $0<\eta<1/4$. There exist constants $c>0$, $0<\epsilon_0<1$, and $q<1.138$ such that, for every $0<\epsilon\le\epsilon_0$, $\sigma\in\{-1,1\}$, and $P\in\{X,Y,Z\}$, there exists $V\in\SU(2)$ and a word $w_{\mathrm{stick}}(V,X,Y,Z)$ with $\len(w_{\mathrm{stick}})=O(\log^q(1/\epsilon))$ and the following property: for every $\widetilde V\in\SU(2)$ satisfying $\|\widetilde V-V\|\le c/\log^q(1/\epsilon)$, $W_{\mathrm{stick}}(\widetilde V,X,Y,Z))$ is a $(\sigma,\epsilon,P,\eta)$-stick. Moreover, $w_{\mathrm{stick}}$ and the classical description of $V$ can be computed in $\poly\log(1/\epsilon)$ time.
\end{proposition}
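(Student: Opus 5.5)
We build the stick by recursively iterating the Pauli-conjugation map $f(U)=XUXYUYZUZU$ together with single-Pauli "sign flips" $Q\,A\,Q\,A$. We track, for every intermediate word, the order of smallness of each of the three components of its logarithm separately.

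\textbf{Step 1 (graded expansion).} For $A=\exp(0,a)$ with $a=(a_1,a_2,a_3)$ small, expand $\log f(A)$ in the quaternion representation using Lemma~\ref{fact:quaternion-rotation}. Conjugation by a Pauli flips two components. The linear terms cancel in $f$, so $\log f(A)$ is quadratic and given by cross-product terms; the leading one is $a_3a_1$-type in the $Y$ slot, as in the displayed formula $(\epsilon^3,\epsilon^2,\epsilon^4)$ for the $X,Y,Z$ directions. Similarly, $\log(QAQA)$ doubles the component of $a$ along $Q$. It kills the other two components to first order, leaving only their cross products as second-order remainders.

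\textbf{Step 2 (the recursion).} Assign to a word a triple of exponents $(n_X,n_Y,n_Z)$ meaning $|(\log A)_P|=O(\epsilon^{n_P})$. Each step uses two copies of the previous word and Pauli conjugations. Each step keeps one direction (the one we are aiming the stick at) and improves the two others. The new exponents are sums of previous exponents, coming from cross products of components of the previous word. We arrange the cyclic choice so that the exponents follow a Tribonacci recursion $t_{k+1}=t_k+t_{k-1}+t_{k-2}$ while the length doubles per step. After $k$ steps the length is $2^k$ and the size is $\epsilon^{\Theta(\rho^k)}$. Hence producing a stick of size $\epsilon$ from a seed $V$ at constant distance $\delta$ from $I$ needs $k\approx\log\log(1/\epsilon)/\log\rho$ steps. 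The length is therefore $2^k=\log^{\log 2/\log\rho}(1/\epsilon)$, and $q=\log2/\log\rho\approx1.1375<1.138$.

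\textbf{Step 3 (aiming and exact size).} The final word's logarithm is dominated by a single product of seed components, e.g.\ $c\,v_iv_j\cdots$, along $P$. Conjugating by a fixed Pauli relabels $P$ and the sign $\sigma$. The other components are smaller by a factor that tends to $0$. To get magnitude exactly $\epsilon$, choose the seed $V=\exp(0,v)$ with a tunable parameter (e.g.\ scale $v=s v_0$). The leading coefficient is a monomial in $s$, so we solve a one-dimensional equation for $s$; this is computable in $\poly\log$ time. Alternatively, we vary the seed along a one-parameter family and use continuity/intermediate value together with a Banach fixed-point argument (Lemma~\ref{lem:banach_fixed_point}) to absorb higher-order terms. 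We then make the remaining relative error at most $\eta/2$ by taking $\epsilon_0$ small.

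\textbf{Step 4 (robustness).} If $\widetilde V$ is $\gamma$-close to $V$, then each of the $2^k$ copies perturbs the product by at most $\gamma$ in the naive bound. This is too crude, so instead we propagate the perturbation through the recursion. The map $a\mapsto\log f(A)$ is Lipschitz with constant $O(\|a\|)$, so the error shrinks along with the word. The allowed tolerance $c/\log^q(1/\epsilon)$ should then make the perturbation of the final log at most $\eta\epsilon/2$. We expect this to require bounding errors relative to the current scale at each level.

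\textbf{Main obstacle.} The main obstacle is Step 2: verifying that the sign-flip/cross-product bookkeeping really yields the Tribonacci exponents, and that no lower-order term (e.g.\ the $Z$ component at $\epsilon^4$ staying stagnant) breaks it. I would first verify the recursion symbolically for a few levels. Then I would prove an invariant on the exponent triple by induction, controlling the $O(\|a\|^3)$ BCH remainders uniformly.
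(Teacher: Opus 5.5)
Your architecture matches the paper's: Tribonacci-order words built from steps $A\mapsto QAQA$, a seed at constant distance from $I$ tuned by continuity, then an input-robustness estimate. However, the two steps carrying the real content are either left unspecified or argued by a mechanism that fails.

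\textbf{Step~2: the Pauli schedule.} The schedule is the whole point here, and your description points the wrong way. By Fact~\ref{fact:self_correcting_step}, a $Z$-step acts on the order triple by $(c_X,c_Y,c_Z)\mapsto(c_Y+c_Z,\,c_X+c_Z,\,c_Z)$, and cyclically for $X$ and $Y$. So the kept component gains nothing.
\begin{itemize}
\item If the kept direction were the target axis at every step, the order would stall at $c_P$.
\item Cycling $X,Y,Z,X,\dots$ gives minima $1,1,2,3,5,8,13,\dots$, i.e.\ Fibonacci, hence $q=\log2/\log\varphi\approx1.44>1.138$.
\end{itemize}
What works is alternating two Paulis, $Z,Y,Z,Y,\dots$. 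Iterating $f$ does exactly this, since $f(U)=Y(ZUZU)Y(ZUZU)$. Each step then keeps the component just formed from the two larger previous exponents: $(1,1,1)\to(2,2,1)\to(3,2,4)\to(6,7,4)\to(11,7,13)\to\cdots$, with Tribonacci minima. The aimed axis is the Pauli of the last step, so for a given $P$ you must pick the alternating pair ending in $P$; this is what the families $w_n^P$ do. Contrary to Step~3, conjugating by a Pauli only flips two signs and cannot move the axis.

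\textbf{Step~4: robustness.} This step does not close as proposed. An $O(\|a\|)$ Lipschitz constant for $a\mapsto\log f(A)$ is true, but composing such bounds loses too much. The input to the $m$-th application of $f$ has size about $r^{T_{2m+1}}$. So the product of the Lipschitz constants is about $r^{T_1+T_3+\cdots+T_{2M-1}}\approx\epsilon^{1/(\rho^2-1)}\approx\epsilon^{0.42}$, up to polylogarithmic factors. That would force $\|\widetilde V-V\|\lesssim\epsilon^{0.58}$ rather than $c/\log^q(1/\epsilon)$. Measuring error relative to the overall current scale $\|a_k\|$ fails for the same reason: the components live on very different scales (orders $(3,2,4)$, say), and an error of size $\delta\|a_k\|$ swamps the small component kept at the next step.

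The missing idea is componentwise relative error. Every new Pauli coefficient is $\pm2$ times a product of two old coefficients, or of one coefficient and the scalar part, whose relative error is controlled by normalization. So relative errors evolve by $\delta\mapsto(1+\delta)^2-1$ and stay below $2^n\|\widetilde V-V\|/r$ after $n$ steps, as in Lemma~\ref{lem:pauli_stick_input_error}. Since $2^n=O(\log^q(1/\epsilon))$, this gives exactly the tolerance $c/\log^q(1/\epsilon)$. It does so only if all three seed components are nonzero and of size $\Theta(r)$ with $r=\Omega(1)$.

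\textbf{Consequence for Step~3.} That requirement is also why your tuning must keep the seed parameter in a fixed interval bounded away from $0$. Choose a small constant $r_0$ and the depth $n$ with $r_0^{T_{n+1}}<\epsilon\le r_0^{T_n}$. Then use $T_{n+1}\le2T_n$ to bracket the target for parameters in $[r_0^2/8,r_0]$. Finally, apply the intermediate value theorem to the exact output size as a function of the parameter, not to its leading monomial, which is inaccurate when the seed is at constant distance from $I$.
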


\subsubsection{A self-correcting sequence}

Our construction of Pauli sticks uses \emph{self-correction}: we combine repeated copies of a unitary $U$ close to $I$ with Pauli gates to obtain a product whose distance from $I$ is smaller than that of $U$. To compare the efficiency of different self-correcting words, we define their cancellation order and exponent as follows.

\begin{definition}[Cancellation order and exponent]\label{def:cancellation}
Let $w$ be a word consisting of $U,X,Y,Z$ that satisfies $W=I$ when $U=I$. Its \emph{cancellation order}, $\operatorname{can}(w)$, is defined as the largest integer $b$ for which there exists a constant $C>0$ such that we have, for every $0<\epsilon<1$,
\[
    \sup_{\substack{U\in\mathrm{SU}(2)\\
                    \|U-I\|\le\epsilon}}
    \|W(U,X,Y,Z)-I\|
    \le C\epsilon^b.
\]
For any word $w$ with $\operatorname{can}(w)>1$, its \emph{cancellation exponent} is defined as
\[
    \alpha(w)\coloneqq \frac{\log\operatorname{len}(w)}{\log\operatorname{can}(w)}.
\]
\end{definition}
Intuitively, a smaller cancellation exponent means fewer gates for a given cancellation order. We construct our self-correcting sequence by repeatedly applying $w\mapsto(PwP)w$, choosing a Pauli gate $P$ at each step. For example, conjugation by $Z$ reverses the signs of the $X$ and $Y$ coefficients of $W$. In the product $(ZW Z)W$, the first-order $X$ and $Y$ terms from $ZW Z$ cancel those from $W$. Formally, we define the word $w_n^P=w_n^P(U,X,Y,Z)$ for $P\in\{X,Y,Z\}$ by $w_1^X=w_1^Y=w_1^Z=U$ and, for $n\ge1$,
\begin{equation}\label{eq:filters_f(U)}
\begin{aligned}
w_{n+1}^X(U,X,Y,Z)&=
\begin{cases}
Xw_n^Z(U,X,Y,Z) Xw_n^Z(U,X,Y,Z),&n\text{ odd},\\
Xw_n^Y(U,X,Y,Z) Xw_n^Y(U,X,Y,Z),&n\text{ even},
\end{cases}\\
w_{n+1}^Y(U,X,Y,Z)&=
\begin{cases}
Yw_n^X(U,X,Y,Z) Yw_n^X(U,X,Y,Z),&n\text{ odd},\\
Yw_n^Z(U,X,Y,Z) Yw_n^Z(U,X,Y,Z),&n\text{ even},
\end{cases}\\
w_{n+1}^Z(U,X,Y,Z)&=
\begin{cases}
Zw_n^Y(U,X,Y,Z) Zw_n^Y(U,X,Y,Z),&n\text{ odd},\\
Zw_n^X(U,X,Y,Z) Zw_n^X(U,X,Y,Z),&n\text{ even}.
\end{cases}
\end{aligned}
\end{equation}
Under our convention in Section~\ref{sec:preliminaries}, the unitaries $W_n^P=\operatorname{val}(w_n^P)$ satisfy the same recursion as the words $w_n^P$. Note that $W^Y_3 = XUXYUYZUZU = f(U)$ as used in Sardharwalla et al.’s algorithm \cite{sardharwalla2016universal}.

As we will prove later in Lemma~\ref{lem:self_correcting_orders}, the cancellation orders of $w_n^P$ form the \textit{Tribonacci sequence} $1,1,2,4,7,13,\ldots$. For comparison, Kuperberg's construction uses words with Fibonacci cancellation orders and lengths $\Theta(2^n)$~\cite[Section~8]{kuperberg2023breaking}. Our words also have lengths $\Theta(2^n)$, so the faster growth of the Tribonacci numbers gives a smaller cancellation exponent.

\begin{definition}[Tribonacci sequence]
We define the \emph{Tribonacci sequence} $\{T_n\}_{n\ge1}$ by
\[
T_1=T_2=1,\qquad T_3=2,\qquad
T_{n+3}=T_{n+2}+T_{n+1}+T_n
\quad(n\ge1).
\]
Moreover, we write $\rho\in(1,2)$ for the real root of its characteristic function $\rho^3-\rho^2-\rho-1=0$.
\end{definition}

\begin{lemma}
\label{lem:self_correcting_orders}
Let
$q=\log2/\log\rho$. The words $w_n^P$ in \eqref{eq:filters_f(U)} satisfy, for every $n\ge1$ and $P\in\{X,Y,Z\}$,
\begin{equation}\label{eq:self_correcting_orders}
\operatorname{can}(w_n^P)=T_n,
\qquad
\len(w_n^P)
=3\cdot2^{n-1}-2
\le6T_n^q.
\end{equation}
Moreover, its cancellation exponent satisfies
\[
\lim_{n\to\infty}\alpha(w_n^P)
=\frac{\log2}{\log\rho}
\approx1.13747.
\]
\end{lemma}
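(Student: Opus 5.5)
We verify the three claims in order: the length formula, the cancellation orders (the substantive part), and the limit.

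\emph{Length.} From \eqref{eq:filters_f(U)}, $\len(w_1^P)=1$ and $\len(w_{n+1}^P)=2\len(w_n^{P'})+2$, since each step adds two Pauli letters to two copies of the previous word. Induction gives $\len(w_n^P)=3\cdot 2^{n-1}-2$. For the bound $\len\le 6T_n^q$, we first show $T_n\ge\rho^{n-2}$ by induction. The base cases are $T_1=1\ge\rho^{-1}$, $T_2=1\ge1$ and $T_3=2\ge\rho$. The inductive step follows from $\rho^{3}=\rho^2+\rho+1$. Since $\rho^q=2$, this gives $T_n^q\ge 2^{n-2}$, so $6T_n^q\ge 3\cdot 2^{n-1}\ge\len(w_n^P)$. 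For the limit, we use $T_n=\Theta(\rho^n)$, which is standard because $\rho$ is the unique dominant root. Then $\log\len(w_n^P)/\log T_n=(n\log2+O(1))/(n\log\rho+O(1))\to\log2/\log\rho$.

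\emph{Cancellation orders: componentwise invariant.} We write $W_n^P=\exp(0,a)$ with $a=(a_X,a_Y,a_Z)=\log W_n^P$, and track a triple of orders $(k_X,k_Y,k_Z)$ with $a_Q=O(\epsilon^{k_Q})$. The key computation uses Lemma~\ref{fact:quaternion-rotation}. Conjugating by $P$ flips the two non-$P$ components of $a$. A BCH-type expansion of $\exp(0,u')\exp(0,u)$ in quaternions then gives $\log=u'+u+u'\times u+R$, with $R$ cubic in $(u,u')$. For $P=Z$, $u=(a_X,a_Y,a_Z)$ and $u'=(-a_X,-a_Y,a_Z)$, this yields
\[
\log\bigl(ZW_n Z W_n\bigr)=(-2a_Ya_Z,\;2a_Xa_Z,\;2a_Z)+R .
\]
So the new orders are $k_Z$ in the $P$ direction and $k_Z+k_Y$, $k_Z+k_X$ in the other two. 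The invariant we aim to prove by induction is the following: $W_n^P$ has order $T_{n+1}$ in direction $P$, while the other two components have orders $T_n$ and $T_n+T_{n-1}$, in a specific assignment dictated by the parity of $n$.

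\emph{Checking the invariant.} The odd/even choice in \eqref{eq:filters_f(U)} is exactly designed so that the word $w_n^{P'}$ fed into $w_{n+1}^P$ has its \emph{largest-order} component, namely $T_{n+1}$, along $P$. Then the recursion above sends $(T_n,\,T_n+T_{n-1},\,T_{n+1})$ to $(T_{n+1},\,T_{n+1}+T_n,\,T_{n+2})$, using $T_{n+1}+T_n+T_{n-1}=T_{n+2}$. The base cases are $n=1,2,3$. They can be checked directly; for instance $f(U)=W_3^Y$ has orders $(3,2,4)$. The minimum of the triple is $T_n$, which gives $\operatorname{can}(w_n^P)\ge T_n$.

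\emph{Sharpness.} To show the order is exactly $T_n$, we additionally track the leading homogeneous polynomial, in $h=\log U$, of each component. The recursion multiplies these leading polynomials ($2a_P$, $\mp2a_Qa_P$), so they stay nonzero polynomials. Hence some $U$ with $\|U-I\|=\epsilon$ attains $\|W_n^P-I\|\ge c\,\epsilon^{T_n}$.

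\emph{Main obstacle.} The main obstacle is controlling the remainder $R$. We must show that every cubic and higher term has order at least the claimed order in each component, not merely at least $3\min k$. The plan is to use the exact quaternion product rather than BCH. Concretely, $\operatorname{Im}\bigl((\tau',v')(\tau,v)\bigr)=\tau v'+\tau'v+v'\times v$, where $v=\sin\|a\|\,a/\|a\|$ and $\tau=\cos\|a\|$, and the same formula for the conjugated word has $\tau'=\tau$ and $v'$ given by sign flips of $v$. In this form the non-$P$ components of $\tau v'+\tau' v$ cancel \emph{exactly}, leaving only the cross product. The real part is $\tau^2-v'\cdot v=1-O(\epsilon^{2T_{n+1}})$, and the principal-log correction factor $\arccos\tau/\|v\|=1+O(\|v\|^2)$ only perturbs each component multiplicatively. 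This avoids BCH tails entirely, and we expect it to close the induction.
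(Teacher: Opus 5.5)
Your route is essentially the paper's. The exact product you fall back on in your last paragraph is precisely Fact~\ref{fact:self_correcting_step}, the orders are then tracked by a degree-triple recursion that closes via $T_{n+1}+T_n+T_{n-1}=T_{n+2}$, and your length bound and limit argument match the paper's. The detour through $\log$ and BCH buys nothing: in the coordinates $(\tau_n,v_n)$ the recursion is exactly polynomial, with no remainder and no logarithm correction, and the paper passes to $\|W_n^P-I\|$ via \eqref{eq:self_correcting_norm}. Your sharpness argument (leading homogeneous parts multiply, so they stay nonzero) is fine. It is slightly more general than the paper's explicit seed $V_r$, along which all coefficients stay nonzero.

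The real problem is that the invariant you state is false, already at $n=2$. You claim $W_n^P$ has order $T_{n+1}$ along $P$. But $W_2^Z=ZUZU$ has $Z$-coefficient $2az$ of order $1=T_2$, and your own example $f(U)=W_3^Y$ with orders $(3,2,4)$ has order $2=T_3$ along $Y$. The $P$-component is the \emph{dominant} one; this is what later makes $w_n^P$ a $P$-stick (Lemma~\ref{lem:pauli_stick_coefficients}).

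The correct invariant is as follows. $W_n^P$ has order $T_n$ along $P$, order $T_{n+1}$ along the axis of the Pauli that conjugates it at the next level, and order $T_n+T_{n-1}$ on the remaining axis. For example, $W_n^Z$ has triple $(T_{n+1},T_n+T_{n-1},T_n)$ for odd $n$ and $(T_n+T_{n-1},T_{n+1},T_n)$ for even $n$, as in \eqref{eqn:cn_claimed}. Your ``checking'' paragraph and your real-part estimate $1-O(\epsilon^{2T_{n+1}})$ silently use this corrected version. With it, your one-step computation does produce order $T_{n+1}$ along the new conjugating Pauli, which is the minimum at level $n+1$.

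You should verify the placement rather than assert that the parity rule is ``exactly designed'' for it. Use the cyclic relabelling to obtain the triples of $W_n^X$ and $W_n^Y$ from that of $W_n^Z$. Then check that the $T_{n+1}$ entry of $W_n^Z,W_n^Y,W_n^X$ lies on $X,Z,Y$ for odd $n$ and on $Y,X,Z$ for even $n$. These are exactly the conjugating Paulis in \eqref{eq:filters_f(U)}. The paper carries out this check. It is routine, but it is the only place where the particular odd/even pattern of the construction enters, so it cannot be left implicit.
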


We utilize the following two facts to prove Lemma~\ref{lem:self_correcting_orders}.

\begin{fact}\label{fact:tribonacci_growth}
For every $n\ge1$, the Tribonacci number $T_n$ satisfies
\[
\rho^{n-2}\le T_n\le2\rho^{n-1}.
\]
\end{fact}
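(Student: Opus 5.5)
We prove both inequalities $\rho^{n-2}\le T_n\le 2\rho^{n-1}$ simultaneously by strong induction on $n$. The inductive step uses only the recurrence $T_{n+3}=T_{n+2}+T_{n+1}+T_n$ together with the characteristic identity $\rho^3=\rho^2+\rho+1$; the work lies in checking the base cases, which rests on a numerical bracket for $\rho$.

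\emph{Locating $\rho$.} Let $p(x)=x^3-x^2-x-1$. Then $p(1)=-2<0$ and $p(2)=1>0$, so the real root lies in $(1,2)$. We will need the sharper bound $\rho\le 2$, and in fact $\rho\approx1.839$. To confirm $\rho>1.8$, evaluate $p(1.8)=5.832-3.24-1.8-1=-0.208<0$.

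\emph{Base cases.} For $n=1,2,3$ we have $T_n=1,1,2$.
\begin{itemize}
\item $n=1$: the claim is $\rho^{-1}\le 1\le 2$, which holds since $\rho>1$.
\item $n=2$: the claim is $1\le 1\le 2\rho$, which is clear.
\item $n=3$: the claim is $\rho\le 2\le 2\rho^2$. The first inequality holds since $\rho<2$, and the second since $\rho>1$.
\end{itemize}

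\emph{Inductive step.} Assume both bounds hold for $n, n+1, n+2$. For the lower bound,
\[
T_{n+3}=T_{n+2}+T_{n+1}+T_n\ge \rho^{n}+\rho^{n-1}+\rho^{n-2}=\rho^{n-2}(\rho^2+\rho+1)=\rho^{n-2}\rho^3=\rho^{n+1},
\]
which is the required bound $\rho^{(n+3)-2}$. For the upper bound,
\[
T_{n+3}\le 2(\rho^{n+1}+\rho^n+\rho^{n-1})=2\rho^{n-1}\rho^3=2\rho^{n+2},
\]
which is the required bound $2\rho^{(n+3)-1}$. This completes the induction.

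The only real difficulty is confirming that the three base cases hold with these particular exponents and constants. Since $T_1=T_2=1$, three starting values must be checked rather than two, and the case $n=3$ is where $\rho<2$ is genuinely used. Once the base cases are verified, the characteristic identity carries both bounds through the recurrence exactly, with no loss.
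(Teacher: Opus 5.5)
Your proof is correct and is the same argument as the paper's: you check the base cases $n=1,2,3$ using $1<\rho<2$, then use $\rho^3=\rho^2+\rho+1$ to show that $\rho^{n-2}$ and $2\rho^{n-1}$ satisfy the Tribonacci recurrence, so both bounds carry through the induction. The bracket $\rho>1.8$ is never used, and $\rho\le 2$ is not a sharper bound than $\rho\in(1,2)$, but both remarks are harmless.
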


\begin{proof}
Since $1<\rho<2$, both bounds hold for $n=1,2,3$.
The identity $\rho^3=\rho^2+\rho+1$ shows that
the sequences $\rho^{n-2}$ and $2\rho^{n-1}$
satisfy the same recurrence as $T_n$.
The bounds therefore hold for every $n\ge1$
by induction.
\end{proof} 

\begin{fact}\label{fact:self_correcting_step}
For $V=aI-i(xX+yY+zZ)\in\SU(2)$ with $a,x,y,z\in\mathbb R$,
\begin{equation}\label{eq:self_correcting_coefficients}
\begin{aligned}
XVXV&=(1-2x^2)I-i(2axX-2xzY+2xyZ),\\
YVYV&=(1-2y^2)I-i(2yzX+2ayY-2xyZ),\\
ZVZV&=(1-2z^2)I-i(-2yzX+2xzY+2azZ).
\end{aligned}
\end{equation}
In particular, $\|XVXV-I\|=2|x|$, $\|YVYV-I\|=2|y|$, and $\|ZVZV-I\|=2|z|$.
\end{fact}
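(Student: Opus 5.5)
This is a direct computation in the quaternion representation of Section~\ref{sec:preliminaries}; no earlier lemma is needed beyond the multiplication rule. We identify $V=aI-i(xX+yY+zZ)$ with the quaternion $(a,v)$, where $v=(x,y,z)$. The computation rests on two facts.

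First, Pauli conjugation flips signs of coordinates. Since $X$ commutes with itself and anticommutes with $Y$ and $Z$, we have
\[
XVX=aI-i(xX-yY-zZ),
\]
which corresponds to the quaternion $(a,(x,-y,-z))$. Similarly, $YVY$ corresponds to $(a,(-x,y,-z))$ and $ZVZ$ to $(a,(-x,-y,z))$.

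Second, these products are then multiplied by $(a,v)$ using the rule
\[
(\tau,u)(\tau',u')=\bigl(\tau\tau'-u\cdot u',\;\tau u'+\tau' u+u\times u'\bigr).
\]

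\textbf{The $X$ case.} Set $u=(x,-y,-z)$. The scalar part is
\[
a^2-(x^2-y^2-z^2)=a^2+y^2+z^2-x^2=1-2x^2,
\]
using the unit-norm condition $a^2+x^2+y^2+z^2=1$. The vector part is
\[
a(x,y,z)+a(x,-y,-z)+(x,-y,-z)\times(x,y,z).
\]
The first two terms sum to $(2ax,0,0)$. For the cross product I compute each component:
\[
(-y)z-(-z)y=0,\qquad (-z)x-xz=-2xz,\qquad xy-(-y)x=2xy.
\]
So the vector part is $(2ax,-2xz,2xy)$, which matches the first line of \eqref{eq:self_correcting_coefficients}.

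\textbf{The $Y$ and $Z$ cases.} These follow the same way, or by cyclically permuting $(X,Y,Z)$ and $(x,y,z)$. Cyclic permutation preserves the cross product, so the signs carry over. I will still double-check the signs of the off-diagonal entries explicitly, since that is the only place an error could plausibly creep in. This sign bookkeeping is the main (minor) obstacle.

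\textbf{The norm identities.} By the isometry $\|\Phi(q)-\Phi(q')\|_{\mathrm{op}}=\|q-q'\|$, we get
\[
\|XVXV-I\|^2=(2x^2)^2+4x^2(a^2+z^2+y^2).
\]
Applying the unit-norm condition, $a^2+y^2+z^2=1-x^2$, so this equals
\[
4x^4+4x^2(1-x^2)=4x^2.
\]
Hence $\|XVXV-I\|=2|x|$, and the other two identities follow identically.
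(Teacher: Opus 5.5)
Your proof is correct. The Pauli sign flips, the three quaternion products (with cross products $(0,-2xz,2xy)$, $(2yz,0,-2xy)$ and $(-2yz,2xz,0)$) and the norm identity $\|q-(1,0)\|^2=4x^4+4x^2(1-x^2)=4x^2$ all check out; the paper states this Fact without a written proof, so your direct computation with the quaternion multiplication rule is essentially the intended argument.
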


\begin{proof}[Proof of Lemma~\ref{lem:self_correcting_orders}]
We prove the claims for $w_n^Z$. The proofs for $w_n^X$ and $w_n^Y$ proceeds similarly. The three words have the same length, with $\operatorname{len}(w_1^Z)=1$ and $\operatorname{len}(w_{n+1}^Z)=2\operatorname{len}(w_n^Z)+2$, so $\operatorname{len}(w_n^Z)=3\cdot2^{n-1}-2$. Since $\rho^q=2$, Fact~\ref{fact:tribonacci_growth}
gives $2^{n-1}=2(\rho^{n-2})^q\le2T_n^q.$ Hence
\begin{equation}\label{eq:length_bound}
    \len(w_n^Z)\le6T_n^q
\end{equation}
To determine the cancellation order, we track how each step changes the Pauli coefficients using Fact~\ref{fact:self_correcting_step}. For each $n\ge1$, write
\[
W_n^Z=\tau_nI-i\bigl(v_n^{X}X+v_n^{Y}Y+v_n^{Z}Z\bigr),
\]
where $\tau_n,v_n^{X},v_n^{Y},v_n^{Z}\in\mathbb R$.
For $P\in\{X,Y,Z\}$, let $c_n^{P}$ be the smallest total degree of a term with nonzero coefficient when $v_n^{P}$ is expanded in the initial coefficients $v_1^{X},v_1^{Y},v_1^{Z}$. For example, $v_2^{X}=-2v_1^{Y}v_1^{Z}$ gives $c_2^{X}=2$. Since $W_n^Z=I$ at $U=I$, the expansion of each $\tau_n$ starts with $1$. Cyclically permuting $X,Y,Z$ also permutes the initial coefficients, preserving total degree. Thus the corresponding degree triples for $W_n^X$ and $W_n^Y$ are $(c_n^{Z},c_n^{X},c_n^{Y})$ and $(c_n^{Y},c_n^{Z},c_n^{X})$, respectively. Applying Fact~\ref{fact:self_correcting_step} to $W_{n+1}^Z=ZW_n^Y ZW_n^Y$ for odd $n$ and $W_{n+1}^Z=ZW_n^X ZW_n^X$ for even $n$ gives
\begin{equation}
\label{eq:Fz_cancellation}
\bigl(c_{n+1}^{X},c_{n+1}^{Y},c_{n+1}^{Z}\bigr)=
\begin{cases}
\bigl(c_n^{X}+c_n^{Z},c_n^{X}+c_n^{Y},c_n^{X}\bigr),
&n\text{ odd},\\
\bigl(c_n^{X}+c_n^{Y},c_n^{Y}+c_n^{Z},c_n^{Y}\bigr),
&n\text{ even}.
\end{cases}
\end{equation}
We claim that, for $n\ge2$,
\begin{align}\label{eqn:cn_claimed}
\bigl(c_n^{X},c_n^{Y},c_n^{Z}\bigr)=
\begin{cases}
(T_{n+1},\,T_n+T_{n-1},\,T_n),&n\text{ odd},\\
(T_n+T_{n-1},\,T_{n+1},\,T_n),&n\text{ even}.
\end{cases}
\end{align}
Indeed, for $n=2$, Eqs.\eqref{eq:Fz_cancellation} and~\eqref{eqn:cn_claimed} both give $(2,2,1)$. Substituting either case of \eqref{eqn:cn_claimed} into \eqref{eq:Fz_cancellation} and using $T_{n+2}=T_{n+1}+T_n+T_{n-1}$ gives \eqref{eqn:cn_claimed} at $n+1$. Fix $n\ge2$. Since $W_n^Z=I$ at $U=I$, continuity gives $\tau_n\ge0$ for all $U$ sufficiently close to $I$, and
\begin{equation}\label{eq:self_correcting_norm}
\|W_n^Z-I\|^2
=\frac{2\bigl((v_n^{X})^2+(v_n^{Y})^2+(v_n^{Z})^2\bigr)}
{1+\tau_n}.
\end{equation}
Fact~\ref{fact:self_correcting_step} gives $|v_n^{P}|=O(\|U-I\|^{c_n^{P}})$ by induction from $|v_1^{P}|\le\|U-I\|$ and $|\tau_k|\le1$ for every $k\ge1$, uniformly in $U$ for fixed $n$. Since $\min_P c_n^{P}=c_n^{Z}=T_n$, Eq.~\eqref{eq:self_correcting_norm} gives $\|W_n^Z-I\|=O(\|U-I\|^{T_n})$. To show that the order cannot be larger, take
\begin{equation}\label{eq:pauli_seed_family}
V_r\coloneqq\sqrt{1-3r^2}\,I-ir(X+Y+Z),\qquad 0<r\le1/32.
\end{equation}
Then, induction gives $v_n^{Z}=ar^{T_n}+O(r^{T_n+1})$ for some $a\ne0$. Hence, for sufficiently small $r>0$,
\begin{equation}\label{eq:self_correcting_sharpness}
\|W_n^Z(V_r,X,Y,Z)-I\|\ge\frac{|a|}{2}r^{T_n},
\qquad
\|V_r-I\|\le3r.
\end{equation}
Finally, by Fact~\ref{fact:tribonacci_growth} we have $\log T_n=n\log\rho+O(1)$.
Then by Eq.~\eqref{eq:self_correcting_orders}, we can conclude that
\[
\lim_{n\to\infty}\alpha(w_n^Z)
=\lim_{n\to\infty}
\frac{n\log2+O(1)}{n\log\rho+O(1)}
=\frac{\log2}{\log\rho}
\approx1.13747.
\]
\end{proof}

The next lemma bounds
the $Y$ and $Z$ coefficients of $W_n^X$
relative to its $X$ coefficient.

\begin{lemma}\label{lem:pauli_stick_coefficients}
Let $U\in\SU(2)$ satisfy $\|U-I\|\le1/4$, and let $w_n^X$ be the word in \eqref{eq:filters_f(U)}. For every $n\ge2$, write $W_n^X=aI-i(v_n^{X}X+v_n^{Y}Y+v_n^{Z}Z)$. Then
\[
|v_n^{Y}|+|v_n^{Z}|\le4\|U-I\|\,|v_n^{X}|.
\]
The coefficients of $W_n^Y$ and $W_n^Z$ satisfy the same bound with $(v_n^{X},v_n^{Y},v_n^{Z})$ replaced by $(v_n^{Y},v_n^{Z},v_n^{X})$ and $(v_n^{Z},v_n^{X},v_n^{Y})$, respectively.
\end{lemma}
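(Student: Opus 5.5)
The plan is a joint induction on $n$ driven entirely by the explicit formulas of Fact~\ref{fact:self_correcting_step}, which describe exactly how each step of \eqref{eq:filters_f(U)} transforms the Pauli coefficients. Write $\epsilon=\|U-I\|\le1/4$. By cyclic symmetry it suffices to treat $W_n^X$; the cases $W_n^Y$, $W_n^Z$ follow by permuting $X\to Y\to Z$, exactly as in the proof of Lemma~\ref{lem:self_correcting_orders}.

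\textbf{Key computation.} Write $W_{n+1}^X=XVXV$ with $V=W_n^Z$ ($n$ odd) or $V=W_n^Y$ ($n$ even), and $V=a_VI-i(x_VX+y_VY+z_VZ)$. Fact~\ref{fact:self_correcting_step} gives
\[
v_{n+1}^X=2a_Vx_V,\qquad v_{n+1}^Y=-2x_Vz_V,\qquad v_{n+1}^Z=2x_Vy_V .
\]
Hence, whenever $x_V\neq0$,
\[
\frac{|v_{n+1}^Y|+|v_{n+1}^Z|}{|v_{n+1}^X|}=\frac{|y_V|+|z_V|}{|a_V|},
\]
and the common factor $x_V$ cancels. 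If $x_V=0$, all three coefficients vanish and the bound is trivial. So the lemma at level $n+1$ reduces to showing
\[
|y_V|+|z_V|\le 4\epsilon\,|a_V| .
\]
For $n+1=2$ we have $V=U$. Then $|y_V|+|z_V|\le\sqrt2\,\|v_U\|\le\sqrt2\,\epsilon$, while $a_U=1-\|U-I\|^2/2\ge 31/32$, so the bound holds with room to spare.

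\textbf{Auxiliary invariant.} For $n\ge2$, it suffices to control the whole imaginary part of $V$, since
\[
|y_V|+|z_V|\le\sqrt2\,\|{\operatorname{Im}}V\|\le\sqrt2\,\|V-I\| .
\]
I will therefore carry along the invariant
\[
\|W_n^P-I\|\le\epsilon\qquad\text{for all }n\ge1,\ P\in\{X,Y,Z\}.
\]
Given this invariant, $a_V\ge1-\epsilon^2/2\ge31/32$, and $\sqrt2\,\epsilon\le4\epsilon\cdot 31/32$ closes the main induction. To propagate the invariant itself, Fact~\ref{fact:self_correcting_step} gives $\|W_{n+1}^X-I\|=2|x_V|$. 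The key point is that $x_V$ is a \emph{non-leading} coefficient of $V$: $V=W_n^Z$ or $W_n^Y$ is dominated by its $Z$ (resp.\ $Y$) component. The lemma at level $n$ applied to $V$ then gives
\[
|x_V|\le4\epsilon\,|\text{leading coefficient}|\le4\epsilon\,\|V-I\| .
\]
Consequently $\|W_{n+1}^X-I\|\le8\epsilon\,\|W_n-I\|$.

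\textbf{Main obstacle: constants.} The delicate step is making these constants close. With only $\epsilon\le1/4$, the factor $8\epsilon$ can be as large as $2$, so the crude invariant $\|W_n-I\|\le\epsilon$ does not propagate directly. For example, the first step gives $\|W_2-I\|=2|x_U|$, which may be nearly $2\epsilon$ in the worst case. I would first try to weaken the invariant to something like $\|W_n-I\|\le 2\epsilon$ and recheck the ratio bound. That requires
\[
|y_V|+|z_V|\le 2\sqrt2\,\epsilon\le4\epsilon\,a_V ,
\]
which holds since $a_V\ge1-2\epsilon^2\ge7/8$.

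For the propagation I would then use the sharper structure rather than the lemma's summed bound. Writing the coefficients of $V$ explicitly via Fact~\ref{fact:self_correcting_step} one level further down, $x_V$ is itself a product of two coefficients of $W_{n-1}$, one of which is already a non-leading one. This yields
\[
|x_V|\le \|W_{n-1}-I\|\cdot(\text{non-leading coefficient})=O(\epsilon^2)\cdot(\dots),
\]
and hence a genuine contraction $\|W_{n+1}-I\|\le\|W_n-I\|$ for $n\ge2$. If the constants still fail at $\epsilon=1/4$, the fallback is to check the small cases $n=2,3$ by direct expansion, and to run the two-step contraction for $n\ge3$, where the $\epsilon^{T_n}$ decay from Lemma~\ref{lem:self_correcting_orders} gives plenty of slack.
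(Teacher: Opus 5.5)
Your argument is correct and is essentially the paper's proof. It uses the same reduction of the ratio at level $n+1$ to $(|y_V|+|z_V|)/|a_V|$ via Fact~\ref{fact:self_correcting_step}, the same invariant $\|V-I\|\le 2\|U-I\|$ for every intermediate product, and the same constants $a_V\ge 7/8$ and $|y_V|+|z_V|\le 2\sqrt2\,\|U-I\|$.

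Your propagation step closes on its own, so you can drop the fallback. Looking two levels down, $\|W_{n+1}-I\|=2|c|\,\|W_n-I\|$ with $c$ a coefficient of $W_{n-1}$, so $|c|\le 2\|U-I\|\le 1/2$; the paper bounds the same two-step quantity by $2\|V-I\|^2$ instead. The fallback via Lemma~\ref{lem:self_correcting_orders} would not have been usable anyway, since its implied constants depend on $n$.
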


\begin{proof}
It suffices to consider products obtained by alternating $Z$ and $Y$, starting with $Z$; cyclic permutation of the Pauli axes gives the result for every $P$. For $V=aI-i(xX+yY+zZ)\in\SU(2)$, Fact~\ref{fact:self_correcting_step} gives
\[
\|ZVZV-I\|=2|z|\le2\|V-I\|,
\qquad
\|Y(ZVZV)Y(ZVZV)-I\|=4|xz|\le2\|V-I\|^2.
\]
Induction therefore gives $\|V-I\|\le2\|U-I\|\le1/2$ for every intermediate product $V$. Consequently, $a=1-\|V-I\|^2/2\ge7/8$ and $\sqrt{x^2+y^2+z^2}\le2\|U-I\|$. Thus,
\begin{equation}\label{eq:pauli_stick_transverse_bounds}
\begin{aligned}
2|z|(|x|+|y|)&\le4\|U-I\|\,|2az|,\\
2|y|(|x|+|z|)&\le4\|U-I\|\,|2ay|.
\end{aligned}
\end{equation}
Eq.~\eqref{eq:pauli_stick_transverse_bounds} gives the stated bounds for $ZVZV$ and $YVYV$, respectively.
\end{proof}

\subsubsection{Sticks of sufficiently small size}
\label{sec:stick-size}

We now construct Pauli sticks of any size using the sequences $w_n^P$ in \eqref{eq:filters_f(U)}. We choose their inputs from the family
\[
V_r\coloneqq\sqrt{1-3r^2}\,I-ir(X+Y+Z),\qquad 0<r\le1/32,
\]
which has also been used in the proof of Lemma~\ref{lem:self_correcting_orders}.
The first lemma chooses $n$ and $r$ so that $w_n^P(V_r,X,Y,Z)$ is a $(1,\epsilon,P,\eta)$-stick and bounds the length of the sequence. We write $q=\log2/\log\rho<1.138$.

\begin{lemma}
\label{lem:seed_approximation_accuracy}
Fix $0<\eta<1/4$. For every sufficiently small $\epsilon>0$ and every $P\in\{X,Y,Z\}$, there exist $n\ge2$ and $r\in(0,1/32]$ such that the word $w_n^P$ defined in \eqref{eq:filters_f(U)} is a $(1,\epsilon,P,\eta)$-stick, with
\[
r=\Omega(1),
\qquad
\operatorname{len}(w_n^P(V_r,X,Y,Z))=O\!\left(\log^q(1/\epsilon)\right).
\]
Moreover, $n$ and $r$ can be computed in $\poly\log(1/\epsilon)$ time.
\end{lemma}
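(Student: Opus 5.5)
We choose $n$ from the target size and then tune $r$ by continuity, using the leading-order behaviour of the dominant coefficient. By cyclic symmetry it suffices to take $P=X$. Write $W_n^X(V_r)=a_n(r)I-i\bigl(v_n^X(r)X+v_n^Y(r)Y+v_n^Z(r)Z\bigr)$. Each coefficient is a polynomial in $r$ and $\sqrt{1-3r^2}$, hence continuous on $(0,1/32]$. As in the proof of Lemma~\ref{lem:self_correcting_orders}, $v_n^X(r)=a r^{T_n}+O(r^{T_n+1})$ with $a\neq0$. In fact the recursion of Fact~\ref{fact:self_correcting_step} gives the explicit leading constant $\pm 2^{c}$, where $c$ counts the doubling steps, since all seed coefficients equal $r$.

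\textbf{Step 1: control the transverse error.} Lemma~\ref{lem:pauli_stick_coefficients} gives $|v_n^Y|+|v_n^Z|\le 4\|V_r-I\|\,|v_n^X|\le 12r|v_n^X|$. Now suppose $v_n^X=\epsilon'>0$ with $\epsilon'$ close to $\epsilon$. Then the distance from $W_n^X$ to $e^{-i\epsilon X}=\cos\epsilon\, I-i\sin\epsilon\, X$ is at most
\[
|v_n^X-\sin\epsilon|+|v_n^Y|+|v_n^Z|+|a_n-\cos\epsilon|.
\]
The transverse terms contribute at most $12r\epsilon'$. This is at most $\eta\epsilon/2$ only if $r\lesssim\eta/24$. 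So we fix a constant $r_{\max}=\min\{1/32,\eta/50\}$ and restrict to $r\in[r_{\max}/C_0,r_{\max}]$ for a suitable constant $C_0$, which keeps $r=\Omega(1)$. The real part is $a_n=\sqrt{1-\|v\|^2}$ with $a_n\ge0$ for small $\epsilon$, so it is fixed once $v$ is fixed. Taking $v_n^X=\sin\epsilon$ exactly, we get $|a_n-\cos\epsilon|=O(\epsilon\cdot r\cdot\|v\|)$, which is also a small fraction of $\epsilon$.

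\textbf{Step 2: hit $\sin\epsilon$ exactly by the intermediate value theorem.} We choose $n$ as the smallest integer such that $|v_n^X(r_{\max})|\le\sin\epsilon$. We need a sign: the sign of the leading coefficient may be negative, but flipping is unnecessary if we track it. If the sign is negative, we use $-r$, or equivalently conjugate the seed, so that $v_n^X>0$ on $(0,r_{\max}]$ for small $r$. Since $v_n^X(r)\to0$ as $r\to0$, the intermediate value theorem gives $r\le r_{\max}$ with $v_n^X(r)=\sin\epsilon$. To ensure $r=\Omega(1)$, we use minimality of $n$. Then $|v_{n-1}^X(r_{\max})|>\sin\epsilon$, and the Tribonacci growth (Fact~\ref{fact:tribonacci_growth}) together with the recursion relates $v_n^X$ to a product of at most three earlier coefficients. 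This should give $|v_n^X(r_{\max})|\ge \epsilon^{K}$ for a constant $K$. The leading-order form $ar^{T_n}$, valid uniformly once $r\le r_{\max}$ via the crude bounds in Lemma~\ref{lem:pauli_stick_coefficients}, then forces the solution $r$ to satisfy $(r/r_{\max})^{T_n}\gtrsim \epsilon^{K-1}$. I expect this step to be the main obstacle: because $T_n$ grows, this alone gives only $r\ge r_{\max}\epsilon^{O(1/T_n)}$. What I would try first is to bound $v_n^X(r)$ monotonically, as $\approx(\beta r)^{T_n}$ with $\beta$ constant, tracked through the recursion. I would then use minimality of $n$, which gives $(\beta r_{\max})^{T_{n-1}}\gtrsim\epsilon$, so $T_n\log(1/(\beta r_{\max}))\le \rho\log(1/\epsilon)+O(1)$. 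This yields $r\ge r_{\max}\cdot\exp(-O(1))$.

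\textbf{Step 3: length and runtime.} Since $r=\Omega(1)$ and $|v_n^X|\le(\beta r)^{T_n}$ roughly, we get $T_n=O(\log(1/\epsilon))$. Lemma~\ref{lem:self_correcting_orders} then gives $\len=3\cdot2^{n-1}-2\le 6T_n^q=O(\log^q(1/\epsilon))$. For computation, $n=O(\log\log(1/\epsilon))$ is found by evaluating the recursion of Fact~\ref{fact:self_correcting_step} at $r_{\max}$ for increasing $n$. The value of $r$ is found by bisection on $[r_{\max}/C_0,r_{\max}]$ to precision $\poly(\epsilon)$, with each evaluation costing $O(n)$ arithmetic operations on $O(\log(1/\epsilon))$-bit numbers. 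A solution known only to precision $\poly(\epsilon)$ costs at most an extra $\eta\epsilon/4$ in the stick error, because $v_n^X$ is Lipschitz in $r$ with constant $\poly\log(1/\epsilon)$ times $\epsilon$. The total is $\poly\log(1/\epsilon)$ time.
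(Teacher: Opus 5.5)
Your overall plan (fix a depth, tune $r$ on a fixed interval by continuity, control the transverse coefficients with Lemma~\ref{lem:pauli_stick_coefficients} by taking $r\lesssim\eta$, then bisect) matches the paper's. However, Step~2 fails as written.

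\textbf{The depth is off by one.} You take $n$ to be the \emph{smallest} index with $|v_n^X(r_{\max})|\le\sin\epsilon$. Then the value at $r_{\max}$ is already at or below the target, and $v_n^X(r)\to0$ as $r\to0$. So the intermediate value theorem gives nothing on $(0,r_{\max}]$. To reach $\sin\epsilon$ at this depth you would have to push $r$ beyond $r_{\max}$, which loses your transverse estimate. You must work one level earlier (the paper picks $n$ with $r_0^{T_{n+1}}<\epsilon\le r_0^{T_n}$), so that the top endpoint is above the target. You then need a \emph{fixed} lower endpoint where the value is below the target.

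\textbf{The uniform size bounds are missing.} That lower endpoint requires two-sided bounds uniform in $n$, which you only say you would try to obtain. Neither of your cited results supplies them:
\begin{itemize}
\item The expansion $ar^{T_n}+O(r^{T_n+1})$ from Lemma~\ref{lem:self_correcting_orders} holds for fixed $n$ as $r\to0$. Its error constants depend on $n$, and the leading constant $a$ itself grows like $2^{\Theta(T_n)}$.
\item Lemma~\ref{lem:pauli_stick_coefficients} only bounds the ratio of transverse to dominant coefficients.
\end{itemize}
The missing idea is the exact identity obtained from four applications of Fact~\ref{fact:self_correcting_step}:
\[
d_{j+4}(r)=\bigl(2-d_j(r)^2\bigr)\,d_{j+1}(r)\,d_{j+2}(r)\,d_{j+3}(r).
\]
With $d_2=2r$, $d_3=4r^2$, $d_4=16r^4$, this gives $2r^{T_j}\le d_j(r)\le(4r)^{T_j}/2$ for all $j\ge2$. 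Your $v_n^X$ is within a factor $1+O(r)$ of $d_n$, so either quantity works. Combined with $T_{n+1}\le 2T_n$, this yields $d_n(r_0^2/8)<\epsilon/4$ and $d_n(r_0)\ge2\epsilon$. Hence there is a root in $[r_0^2/8,r_0]$, and $r=\Omega(1)$.

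\textbf{Your route to $r=\Omega(1)$ points the wrong way.} Combining $T_n\log(1/(\beta r_{\max}))\le\rho\log(1/\epsilon)+O(1)$ with $(\beta r)^{T_n}\approx\epsilon$ gives an \emph{upper} bound on the root. A lower bound on $r$ needs a \emph{lower} bound on $T_n$. That comes from the next depth falling below the target at the top endpoint.

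\textbf{The sign is left open.} The statement asks for $w_n^P$ itself, evaluated at $V_r$ with $r\in(0,1/32]$, to be a $(1,\epsilon,P,\eta)$-stick. So you must prove the $P$-coefficient is positive. Your proposed fixes do not work:
\begin{itemize}
\item Replacing $r$ by $-r$, or conjugating the seed, leaves the family $V_r$, so neither is allowed.
\item Replacing $r$ by $-r$ need not even flip the sign, since the leading term $ar^{T_n}$ only picks up a factor $(-1)^{T_n}$.
\end{itemize}
The paper settles this with the sign pattern $(+,+,+)\xrightarrow{Z}(-,+,+)\xrightarrow{Y}(+,+,+)$, together with positivity of the scalar parts (which follows from $d_j\le4r$). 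This gives positivity for every $r\in(0,1/32]$, not just small $r$, which is what the intermediate value theorem on a whole interval needs.

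Your remaining ingredients are fine once these points are repaired: targeting $v_n^X=\sin\epsilon$ instead of $d_n=2\sin(\epsilon/2)$, the real-part estimate, the length bound, and bisection.
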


\begin{proof}
Fix $0<r\le1/32$. Throughout this proof, for any $j$ we write $w_j^P(V_r)= w_j^P(V_r,X,Y,Z)$ and $W_j^P(V_r)= W_j^P(V_r,X,Y,Z)$ for simplicity. By cyclic symmetry, the distance $d_j(r)=\|W_j^P(V_r)-I\|$ is independent of $P$. For each $j\ge1$, define
$\tau_j,v_j^X,v_j^Y,v_j^Z\in\mathbb R$ by
\[
\begin{aligned}
W_j^Y(V_r)
&=\tau_j I-i(v_j^X X+v_j^Y Y+v_j^Z Z),
\qquad j\text{ odd},\\
W_j^Z(V_r)
&=\tau_j I-i(v_j^X X+v_j^Y Y+v_j^Z Z),
\qquad j\text{ even}.
\end{aligned}
\]
Applying Fact~\ref{fact:self_correcting_step} twice gives $d_{j+2}(r)\le2d_j(r)^2$ for every $j\ge1$. Since $d_1(r)=\|V_r-I\|\le2r$ and $d_2(r)=2r$, induction using $2(4r)^2\le4r$ gives $d_j(r)\le4r$ for every $j\ge1$. Consequently,
\[
\tau_j=1-\frac{d_j(r)^2}{2}
\ge1-8r^2\ge\frac78.
\]
We first check that the dominant coefficient has the required sign. Starting from $v_1^X=v_1^Y=v_1^Z=r>0$, Fact~\ref{fact:self_correcting_step} gives 
\[
(+,+,+)\xrightarrow{Z}(-,+,+)\xrightarrow{Y}(+,+,+),
\]
where the triples record the signs of
$v_j^X,v_j^Y,v_j^Z$; for example, $(-,+,+)$ means
that only the $X$-coordinate is negative, and an arrow labelled $P$ denotes the update
$w\mapsto PwPw$. Since the scalar coefficients remain positive, this pattern repeats. The $Z$-coordinate is positive after each $Z$-step, and the $Y$-coordinate is positive after each $Y$-step. Cyclically relabeling $X\mapsto Y\mapsto Z\mapsto X$ leaves $V_r$ unchanged and permutes the three word families, carrying their coefficients unchanged to the corresponding axes.
Thus the coefficient of $-iP$ in $W_j^P(V_r)$ is positive for every $j\ge2$ and every $P$, as required for the $(1,\epsilon,P,\eta)$-stick.

We next obtain more precise bounds on the size.
For odd $j$, applying Fact~\ref{fact:self_correcting_step} to the four next steps $Z,Y,Z,Y$ in \eqref{eq:filters_f(U)} gives
\begin{equation}\label{eq:stick_distance_steps}
\begin{aligned}
d_{j+1}(r)&=2|v_j^Z|, & d_{j+2}(r)&=4|v_j^Xv_j^Z|,\\
d_{j+3}(r)&=16|v_j^Xv_j^Y(v_j^Z)^2|, & d_{j+4}(r)&=256|\tau_j(v_j^X)^2v_j^Y(v_j^Z)^4|.
\end{aligned}
\end{equation}
For even $j$, the steps are $Y,Z,Y,Z$, and the same formulas hold with $v_j^Y$ and $v_j^Z$ interchanged. The fourth expression in \eqref{eq:stick_distance_steps} is $2\tau_j$
times the product of the first three. Substituting $2\tau_j=2-d_j(r)^2$ gives
\begin{equation}\label{eq:stick_distance_recursion}
d_{j+4}(r)=\bigl(2-d_j(r)^2\bigr)d_{j+1}(r)d_{j+2}(r)d_{j+3}(r)
\end{equation}
for any $j\geq 1$. At $j=1$, the coefficients $v_1^X=v_1^Y=v_1^Z=r$ give $d_2(r)=2r$, $d_3(r)=4r^2$, and $d_4(r)=16r^4$. Since $1\le2-d_j(r)^2\le2$, induction using the Tribonacci recurrence gives
\begin{equation}\label{eq:stick_norm_bounds}
2r^{T_j}\le d_j(r)\le\frac{(4r)^{T_j}}2,\qquad\forall j\geq 2.
\end{equation}
We now choose $n$ and $r$ so that
$d_n(r)=2\sin(\epsilon/2)$, the distance from
$e^{-i\epsilon P}$ to $I$.
Fix $0<r_0\le\eta/64$, depending only on $\eta$,
and assume $0<\epsilon\le r_0$.
We first choose a depth for which the output at $r_0$
is larger than the target. Since $T_2=1$ and
$T_j\to\infty$, we can choose $n\ge2$ such that
\[
r_0^{T_{n+1}}<\epsilon\le r_0^{T_n}.
\]
We keep this depth fixed and decrease $r$.
The bound $T_{n+1}\le2T_n$ explains why squaring $r_0$
is sufficient to cross the gap between these scales:
$r_0^{2T_n}\le r_0^{T_{n+1}}$.
Taking the smaller endpoint to be $r_0^2/8$ also
accounts for the constants in \eqref{eq:stick_norm_bounds}, which gives
\[
d_n(r_0^2/8)
\le\frac14 r_0^{2T_n}
\le\frac14 r_0^{T_{n+1}}
<\frac{\epsilon}{4},\qquad
d_n(r_0)
\ge2r_0^{T_n}
\ge2\epsilon.
\]
Thus the two endpoint outputs lie on opposite sides
of the target distance, since
$\epsilon/2\le2\sin(\epsilon/2)\le\epsilon$.
By continuity, there is an
$r_*\in[r_0^2/8,r_0]$ such that
\begin{equation}\label{eq:stick_target_distance}
d_n(r_*)=2\sin(\epsilon/2).
\end{equation}
In particular, the same fixed interval of input
parameters works for every sufficiently small
$\epsilon$, and $r_*\ge r_0^2/8=\Omega(1)$.

Having fixed the size, we now control the direction of $W_n^P$.
At $r=r_*$, we have $\tau_n=\cos\epsilon$ and
$\sum_Q(v_n^Q)^2=\sin^2\epsilon$.
After cyclically relabeling the coefficients for the
chosen $P$, positivity gives
$0\le\sin\epsilon-v_n^P\le\sum_{Q\ne P}|v_n^Q|$.
Since the scalar coefficients already agree,
Lemma~\ref{lem:pauli_stick_coefficients}, applied with
$\|U_{r_*}-I\|\le2r_*$, yields
\[
\|W_n^P(U_{r_*})-e^{-i\epsilon P}\|
\le2\sum_{Q\ne P}|v_n^Q|
\le16r_*\sin\epsilon
\le\frac{\eta\epsilon}{4}.
\]
Thus $W_n^P(U_{r_*})$ has the required size
and direction.

For the length bound, our choice of $n$ gives
$T_n\le\log(1/\epsilon)/\log(1/r_0)
=O(\log(1/\epsilon))$.
Using Fact~\ref{fact:tribonacci_growth}, we obtain
\begin{equation}\label{eq:uniform_stick_length}
\operatorname{len}(w_n^P(V_r))
=3\cdot2^{n-1}-2
\le6T_n^q
=O\left(\log^q(1/\epsilon)\right).
\end{equation}
Finally, we show how to compute $n$ and $r$ efficiently. We compute $n$ by generating the Tribonacci numbers until $r_0^{T_{n+1}}<\epsilon\le r_0^{T_n}$, which can be done in time $O(n)$. We then find $r$ by bisection, starting with
$a=r_0^2/8$ and $b=r_0$.
Throughout the search, we keep
\[
d_n(a)<2\sin(\epsilon/2)<d_n(b).
\]
The word contains $2^{n-1}$ copies of its input,
and $\|V_r-V_t\|\le2|r-t|$, so
\begin{equation}\label{eq:stick_parameter_lipschitz}
|d_n(r)-d_n(t)|
\le\|W_n^P(V_r)-W_n^P(V_t)\|
\le2^n|r-t|.
\end{equation}
At each step, set $r=(a+b)/2$ and evaluate
$d_n(r)-2\sin(\epsilon/2)$ with error at most
$\eta\epsilon/64$.
If the absolute value of the computed result is
at most $3\eta\epsilon/64$, stop.
The evaluation error then guarantees
\begin{equation}\label{eq:stick_calibration_accuracy}
|d_n(r)-2\sin(\epsilon/2)|
\le\frac{\eta\epsilon}{16}.
\end{equation}
Otherwise, the computed result has the correct sign:
if it is negative, replace $a$ by $r$;
if it is positive, replace $b$ by $r$.
The endpoint inequalities are preserved, so the
interval always contains a parameter satisfying
Eq.~\eqref{eq:stick_target_distance}.
Once $2^n(b-a)\le\eta\epsilon/32$,
Eq.~\eqref{eq:stick_parameter_lipschitz} ensures that
the next evaluation passes the stopping test.
Since the interval width halves at each step,
the search takes $O(n+\log(1/\epsilon))$ steps.

We now check that the returned parameter gives
the required stick.
The angle $\theta=2\arcsin(d_n(r)/2)$ satisfies
$|\theta-\epsilon|\le\eta\epsilon/8$, since
$x\mapsto2\arcsin(x/2)$ is $2$-Lipschitz on $[0,1]$.
Applying the earlier direction estimate at this
angle gives
\[
\|W_n^P(V_r)-e^{-i\epsilon P}\|
\le16r\sin\theta+|\theta-\epsilon|
<\eta\epsilon,
\]
where we used
$\sin\theta\le d_n(r)\le2\epsilon$
and $r\le r_0\le\eta/64$.
The returned parameter remains in
$[r_0^2/8,r_0]$, so $r=\Omega(1)$.

We conclude the proof by discussing the time complexity of this procedure. The length bound in \eqref{eq:uniform_stick_length} gives $n=O(\log\log(1/\epsilon))$.
Thus the search takes $O(\log(1/\epsilon))$ steps,
each using $O(n)$ coefficient updates.
The total computation therefore takes
$\poly\log(1/\epsilon)$ time.
\end{proof}
The preceding construction uses the ideal input $V_r$, which we will approximate by a word over our gate set. The next lemma bounds the resulting error relative to the size of the stick.
\begin{lemma}\label{lem:pauli_stick_input_error}
For every integer $n\ge2$, $P\in\{X,Y,Z\}$, $0<r\le1/32$, and $\widetilde V\in\SU(2)$ satisfying $\|\widetilde V-V_r\|\le r/2^{n+2}$, the word $w_n^P$ defined in \eqref{eq:filters_f(U)} satisfies
\begin{equation}\label{eq:pauli_stick_input_error}
\|W_n^P(\widetilde V,X,Y,Z)-W_n^P(V_r,X,Y,Z)\|
\le\frac{2^{n+2}}{r}\,\|\widetilde V-V_r\|\,\|W_n^P(V_r,X,Y,Z)-I\|.
\end{equation}
\end{lemma}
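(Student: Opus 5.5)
The plan is to prove a \emph{componentwise relative} error bound by induction on $j$, and convert it to the operator-norm bound \eqref{eq:pauli_stick_input_error} only at the end. A direct perturbation argument ($\widetilde V=V_r+\Delta$, expand $P\widetilde VP\widetilde V$) only gives an additive error of order $\|\Delta\|$. That is far too weak, because $\|W_n^P(V_r)-I\|$ is tiny. The cure is the multiplicative structure of Fact~\ref{fact:self_correcting_step}: each new Pauli coefficient is $2$ times a product of two old quantities (two old Pauli coefficients, or the scalar part times one coefficient). Relative errors therefore add rather than stay additive.

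\textbf{Step 1: setup and base case.} Write $\widetilde W_j=\widetilde\tau_jI-i(\widetilde v_j^XX+\widetilde v_j^YY+\widetilde v_j^ZZ)$ for the intermediate products built from $\widetilde V$, and define $\tau_j,v_j^Q$ likewise from $V_r$, as in the proof of Lemma~\ref{lem:seed_approximation_accuracy}. The sign-tracking argument there shows that every $v_j^Q$ is nonzero, since each has a definite sign. So I can define
\[
\delta_j\coloneqq\max\Bigl\{\max_Q\frac{|\widetilde v_j^Q-v_j^Q|}{|v_j^Q|},\;|\widetilde\tau_j-\tau_j|\Bigr\}.
\]
For $j=1$ all three coefficients of $V_r$ equal $r$, so $\delta_1\le\|\widetilde V-V_r\|/r$. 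This is the only place $r$ enters, and it explains the factor $1/r$ in the statement.

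\textbf{Step 2: the inductive step.} Using Fact~\ref{fact:self_correcting_step}, I will show $\delta_{j+1}\le2\delta_j+\delta_j^2\le 3\delta_j$ whenever $\delta_j\le 1$. For a product $ab$ of two tracked quantities with relative errors at most $\delta_j$, one has $|\tilde a\tilde b-ab|\le(2\delta_j+\delta_j^2)|ab|$. When one factor is the scalar $a\ge 7/8$, its error is absolute, but $|\widetilde\tau-\tau|/\tau\le 8\delta_j/7$, which is harmless. For the scalar part, $\widetilde\tau_{j+1}-\tau_{j+1}=2(x^2-\tilde x^2)$, which is at most about $6\delta_j x^2$ and is tiny.

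The hypothesis $\|\widetilde V-V_r\|\le r/2^{n+2}$ is exactly what keeps $\delta_j\le1$ throughout. I will actually aim for $\delta_{j+1}\le 2\delta_j(1+\delta_j)$ and use $\prod_j(1+\delta_j)=O(1)$, so that the final result is $\delta_n\le 2^{n+1}\|\widetilde V-V_r\|/r$, matching the constant $2^{n+2}$ after the final conversion.

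\textbf{Step 3: converting to the operator norm.} Using the isometry $\Phi$,
\[
\|\widetilde W_n-W_n\|\le|\widetilde\tau_n-\tau_n|+\sum_Q|\widetilde v_n^Q-v_n^Q|\le\delta_n\Bigl(2|x|^2+\sum_Q|v_n^Q|\Bigr)\lesssim 2\delta_n\|W_n-I\|.
\]
Here I use $\sum_Q|v_n^Q|\le\sqrt3\|v_n\|$ and $\|v_n\|\le\|W_n-I\|$. In fact Lemma~\ref{lem:pauli_stick_coefficients} shows that the sum is essentially the dominant coefficient, which gives $\sum_Q|v_n^Q|\le(1+4\|V_r-I\|)\|W_n-I\|$ and a cleaner constant.

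\textbf{Expected main obstacle.} The hard part is the bookkeeping of constants, so that the final factor is exactly $2^{n+2}/r$ rather than $C\cdot 3^n/r$. The crude step bound $\delta_{j+1}\le3\delta_j$ would give $3^n$, so the quadratic terms $\delta_j^2$ must be absorbed through the geometric smallness of $\delta_j$ that the hypothesis guarantees. If that proves delicate, I will instead track $\log$-ratios $\log(\widetilde v_j^Q/v_j^Q)$, which add exactly under multiplication, and bound their growth by $2\times$ per step directly.
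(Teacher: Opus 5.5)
Your plan is essentially the paper's proof. Both track componentwise relative errors of the Pauli coefficients along the chain, use the product structure in Fact~\ref{fact:self_correcting_step} to get $1+\delta_{j+1}\le(1+\delta_j)^2$, iterate to $(1+\|\widetilde V-V_r\|/r)^{2^{k-1}}-1\le 2^k\|\widetilde V-V_r\|/r$ under the hypothesis, and convert at the end via $\|v_n\|\le\|W_n^P(V_r)-I\|$ (the paper in fact gets $2^{n+1}$).

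The one step to tighten is the scalar factor in the coefficient $2ax$. The paper uses normalization, $|\widetilde\tau_k-\tau_k|\le 3\delta\sum_Q(v_k^Q)^2\le\delta\tau_k$, and your identity $\widetilde\tau_{j+1}-\tau_{j+1}=2(x^2-\widetilde x^2)$ gives the same thing. Either way, the scalar's relative error never exceeds the current Pauli bound. If you instead fed in your crude $8\delta_j/7$ literally, the per-step factor would become $15/7$ and you would lose the $2^n$ rate.
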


\begin{proof}
Throughout this proof, for any $j\geq 1$ and $V\in\SU(2)$ we write $w_j^P(V)= w_j^P(V,X,Y,Z)$ and $W_j^P(V_r)= W_j^P(V,X,Y,Z)$ for simplicity. It suffices to prove \eqref{eq:pauli_stick_input_error} with $P=Y$ for odd $n$ and $P=Z$ for even $n$. Cyclically relabeling $X\mapsto Y\mapsto Z\mapsto X$ gives the remaining cases, as it leaves $V_r$ unchanged and preserves both the recursion and the operator norm. For odd $k$, write
\[
\begin{aligned}
W_k^Y(V_r)
&=\tau_k I-i(v_k^X X+v_k^Y Y+v_k^Z Z),\\
W_k^Y(\widetilde V)
&=\widetilde\tau_k I
-i(\widetilde v_k^X X+\widetilde v_k^Y Y
+\widetilde v_k^Z Z).
\end{aligned}
\]
For even $k$, use the same notation for
$W_k^Z(V_r)$ and $W_k^Z(\widetilde V)$.
Throughout this proof, sums over $Q$ are taken over $\{X,Y,Z\}$.
We first show that both sequences are close to $I$.
Since $\|V_r-I\|\le2r$ and
$\|\widetilde V-V_r\|\le r/2$, both inputs lie within
$5r/2$ of $I$.
The uniform bound in the proof of
Lemma~\ref{lem:pauli_stick_coefficients} therefore gives,
at every depth and for every Pauli axis $P$,
\begin{equation}\label{eq:pauli_input_iterate_norms}
\max\bigl\{
\|W_k^P(V_r)-I\|,
\|W_k^P(\widetilde V)-I\|
\bigr\}
\le5r\le\frac14.
\end{equation}
Consequently, $\tau_k,\widetilde\tau_k\ge1/2$ and
\[
\sum_Q(v_k^Q)^2\le\frac1{16},
\qquad
\sum_Q(\widetilde v_k^Q)^2\le\frac1{16}.
\]
Starting from $v_1^Q=r>0$,
Eq.~\eqref{eq:self_correcting_coefficients} also shows
that every ideal Pauli coefficient remains nonzero,
so the relative error
$|\widetilde v_k^Q-v_k^Q|/|v_k^Q|$
is well-defined.

Suppose that
$|\widetilde v_k^Q-v_k^Q|\le\delta|v_k^Q|$
for every $Q$, where $0\le\delta\le1$.
Normalization gives the same relative bound for the
scalar coefficient:
\begin{equation}\label{eq:pauli_input_scalar_error}
\begin{aligned}
|\widetilde\tau_k-\tau_k|
=
\frac{
\left|\sum_Q\bigl((\widetilde v_k^Q)^2-(v_k^Q)^2\bigr)\right|
}{
\widetilde\tau_k+\tau_k
}
\le3\delta\sum_Q(v_k^Q)^2
\le\delta\tau_k.
\end{aligned}
\end{equation}
Each new Pauli coefficient in \eqref{eq:self_correcting_coefficients} is a product
of two previous coefficients, multiplied by $\pm2$.
Relative error at most $\delta$ in each factor
therefore gives relative error at most
$(1+\delta)^2-1$:
\[
|\widetilde v_{k+1}^Q-v_{k+1}^Q|
\le(2\delta+\delta^2)|v_{k+1}^Q|.
\]
Initially, $v_1^Q=r$ and
$|\widetilde v_1^Q-r|\le\|\widetilde V-V_r\|$.
Iterating this estimate gives, for every $Q$ and
$1\le k\le n$,
\begin{equation}\label{eq:pauli_input_relative_bound}
\frac{|\widetilde v_k^Q-v_k^Q|}{|v_k^Q|}
\le
\left(1+\frac{\|\widetilde V-V_r\|}{r}\right)^{2^{k-1}}-1
\le\frac{2^k}{r}\,\|\widetilde V-V_r\|
\le\frac14,
\end{equation}
where the hypothesis gives
$2^{n-1}\|\widetilde V-V_r\|/r\le1/8$.
In particular, the relative errors remain below $1$,
as required to apply the one-step estimate throughout
the induction.

We now convert these relative coefficient bounds into an operator-norm bound. The Euclidean norm of the ideal Pauli coefficients is at most $1/4$ and no larger than $\|W_n^P(V_r)-I\|$.
Thus Eqs.~\eqref{eq:pauli_input_scalar_error} and \eqref{eq:pauli_input_relative_bound} give
\[
\begin{aligned}
\|W_n^P(\widetilde V)-W_n^P(V_r)\|
&\le
\frac{2^n}{r}\,\|\widetilde V-V_r\|
\left(
\sqrt{\sum_Q(v_n^Q)^2}
+3\sum_Q(v_n^Q)^2
\right)\\
&\le
\frac{2^{n+1}}{r}\,\|\widetilde V-V_r\|\,
\|W_n^P(V_r)-I\|.
\end{aligned}
\]
\end{proof}

Now we prove Proposition~\ref{prop:pauli_stick_generation}.
\begin{proof}[Proof of Proposition~\ref{prop:pauli_stick_generation}]
Take $q=\log2/\log\rho<1.138$ and choose $\epsilon_0>0$ to be sufficiently small. First consider $\sigma=1$. For every $0<\epsilon\le\epsilon_0$, apply Lemma~\ref{lem:seed_approximation_accuracy} with tolerance $\eta/2$, reserving the remaining $\eta\epsilon/2$ for approximating the input unitary $V_r$ of the word $w_n^P$. This gives $n\ge2$ and $r=\Omega(1)$ such that
\[
\|W_n^P(V_r,X,Y,Z)-e^{-i\epsilon P}\|
\le\frac{\eta\epsilon}{2},
\qquad
\len(w_n^P)
=O\!\left(\log^q(1/\epsilon)\right).
\]
Take $V=V_r$ and $w_{\mathrm{stick}}=w_n^P$. The triangle inequality also gives $\|W_{\mathrm{stick}}(V,X,Y,Z)-I\|\le2\epsilon$. Since $r=\Omega(1)$ and $2^{n-1}\le\len(w_{\mathrm{stick}})$, there is a constant $C_\eta>0$ such that
\[
\frac{2^{n+2}}{r}
\le C_\eta\log^q(1/\epsilon).
\]
Fix $c=\eta/(4C_\eta)$.
For every $\widetilde V\in\SU(2)$ satisfying
$\|\widetilde V-V\|\le c/\log^q(1/\epsilon)$, we have
\[
\frac{2^{n+2}}{r}\,\|\widetilde V-V\|
\le C_\eta c=\frac{\eta}{4}<1.
\]
Thus Lemma~\ref{lem:pauli_stick_input_error} applies
and gives
\[
\|W_{\mathrm{stick}}(\widetilde V,X,Y,Z)-W_{\mathrm{stick}}(V,X,Y,Z)\|
\le\frac{\eta}{4}\|W_{\mathrm{stick}}(V,X,Y,Z)-I\|
\le\frac{\eta\epsilon}{2}.
\]
Combining the two error bounds shows that $W_{\mathrm{stick}}(\widetilde V,X,Y,Z)$ is a $(1,\epsilon,P,\eta)$-stick.

For $\sigma=-1$, replace $w_{\mathrm{stick}}$
by $Qw_{\mathrm{stick}}Q$ for any Pauli $Q\ne P$.
Since $QPQ=-P$, conjugation reverses the target angle
and preserves the approximation error.
It adds only two gates, so the length remains
$O(\log^q(1/\epsilon))$.

Finally, using Lemma~\ref{lem:seed_approximation_accuracy}, we can compute $n$ and $r$, and hence a classical description of $V=V_r$, in $\poly\log(1/\epsilon)$ time. We then construct $w_{\mathrm{stick}}$ from the recursion for $w_n^P$ in time proportional to its length, which is $O(\log^q(1/\epsilon))$.
\end{proof}

\subsection{The compilation algorithm}\label{sec:compilation-alg}

We now show how to use gates from $G$ to approximate a target unitary to the desired accuracy. We start with a coarse approximation and improve it by appending a correction formed from three Pauli sticks, reducing the error by at least a factor of two. Algorithms~\ref{alg:compile} and~\ref{alg:axis} implement these steps recursively using Lemma~\ref{lem:pauli_golf} and Proposition~\ref{prop:pauli_stick_generation}, respectively. Throughout, we fix $\eta_0$ as in Lemma~\ref{lem:pauli_golf}, and take $0<c\le1$ and $0<q<1.138$ from Proposition~\ref{prop:pauli_stick_generation} with tolerance $\eta_0$. Fix $\epsilon_0$ to be small enough that is at most $1/8$, the constant $\epsilon_0$ in Lemma~\ref{lem:pauli_golf}, half the constant $\epsilon_0$ in Proposition~\ref{prop:pauli_stick_generation}, and such that any $0<\epsilon\leq \epsilon_0$ satisfies $c/\log^q(1/\epsilon^2)\ge2\epsilon$.

\begin{algorithm}[H]
\caption{\textsc{Compile}$(U,\epsilon)$}\label{alg:compile}
\begin{algorithmic}[1]
\Require $U\in\mathrm{SU}(2)$; $0<\epsilon<1$
\Ensure a $G$-word $w$ with $\|W-U\|\le\epsilon$
\lIf{$\epsilon\ge\epsilon_0$}{\Return $\epsilon_0\text{-}\textsc{ConstantNet}(U)$ (Lemma~\ref{lem:compilation_base_case}).}
\State $w_0\gets\Call{Compile}{U,2\epsilon}$, $E\gets W_0^{-1}U$ 
\lIf{$\|E-I\|\le\epsilon$}{\Return $w_0$}
\State Compute $\epsilon_P,Q_P$ for $P\in\{X,Y,Z\}$ from $E$ using Lemma~\ref{lem:pauli_golf}.
\For{$P\in\{X,Y,Z\}$}
  \State $w_{\mathrm{stick}}^P\gets\Call{PauliStick}{P,\epsilon_P}$
\EndFor
\State \Return $w_0(Q_Xw_{\mathrm{stick}}^XQ_X)(Q_Yw_{\mathrm{stick}}^YQ_Y)(Q_Zw_{\mathrm{stick}}^ZQ_Z)$
\end{algorithmic}
\end{algorithm}

\begin{algorithm}[H]
\caption{\textsc{PauliStick}$(P,\epsilon)$}\label{alg:axis}
\begin{algorithmic}[1]
\Require $P\in\{X,Y,Z\}$; $0<\epsilon\le2\epsilon_0$
\Ensure a $G$-word $w$ with $\|W-e^{-i\epsilon P}\|\le\eta_0\epsilon$
\State Compute $V,w_{\mathrm{stick}}$ from $P,\epsilon$ using Proposition~\ref{prop:pauli_stick_generation}.
\State $w_V\gets\Call{Compile}{V,c/\log^q(1/\epsilon)}$
\State \Return $w_{\mathrm{stick}}(w_V,X,Y,Z)$
\end{algorithmic}
\end{algorithm}

Write $L_{\mathrm{full}}(\epsilon),T_{\mathrm{full}}(\epsilon)$ for the worst-case output length and runtime of Algorithm~\ref{alg:compile} over all target unitaries and $0<\epsilon<1$ and $L_{\mathrm{stick}}(\epsilon),T_{\mathrm{stick}}(\epsilon)$ for those of Algorithm~\ref{alg:axis} over all $P\in\{X,Y,Z\}$ and stick sizes in $[\epsilon,2\epsilon_0]$. For the base case $\epsilon_0\le\epsilon<1$, Lemma~\ref{lem:compilation_base_case} gives $L_{\mathrm{full}}(\epsilon)=O(1)$ and $T_{\mathrm{full}}(\epsilon)=O(1)$. The next two lemmas give recursive bounds on the output lengths and runtime of both algorithms.

\begin{lemma}\label{lem:pauli_compiler_recursion}
Let $U\in\SU(2)$ and $0<\epsilon<1$. Algorithm~\ref{alg:compile} terminates and returns a $G$-word $w$ satisfying $\|W-U\|\le\min\{\epsilon,\epsilon_0\}$. For $0<\epsilon<\epsilon_0$, its length and runtime satisfy
\begin{equation}\label{eq:pauli_full_recursion}
\begin{aligned}
L_{\mathrm{full}}(\epsilon)
&\le L_{\mathrm{full}}(2\epsilon)+3L_{\mathrm{stick}}(\epsilon^2)+6,\\
T_{\mathrm{full}}(\epsilon)
&\le T_{\mathrm{full}}(2\epsilon)+3T_{\mathrm{stick}}(\epsilon^2)
+L_{\mathrm{full}}(\epsilon)\poly\log(1/\epsilon).
\end{aligned}
\end{equation}
\end{lemma}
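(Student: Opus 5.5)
We argue by strong induction on the recursion depth. Since each recursive call of \textsc{Compile} doubles its precision parameter, after $O(\log(1/\epsilon))$ levels the parameter exceeds $\epsilon_0$ and the base case applies. Calls made from \textsc{PauliStick} use precision $c/\log^q(1/\epsilon')$ with $\epsilon'\le2\epsilon_0$. By the choice of $\epsilon_0$ (namely $c/\log^q(1/\epsilon^2)\ge2\epsilon$), this precision is much coarser than the $\epsilon$ of the calling \textsc{Compile}. So we induct on the potential $\lceil\log_2(\epsilon_0/\epsilon)\rceil$. Every recursive call has strictly larger precision parameter, by at least a factor of two, which gives termination.

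\textbf{Correctness.} In the base case $\epsilon\ge\epsilon_0$, Lemma~\ref{lem:compilation_base_case} gives error at most $\epsilon_0=\min\{\epsilon,\epsilon_0\}$. Now let $\epsilon<\epsilon_0$. By induction, $w_0$ satisfies $\|W_0-U\|\le\min\{2\epsilon,\epsilon_0\}$. Hence $E=W_0^{-1}U$ satisfies $\|E-I\|=\|U-W_0\|\le\min\{2\epsilon,\epsilon_0\}$, by unitary invariance of the operator norm. If $\|E-I\|\le\epsilon$, we return $w_0$ and are done.

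Otherwise $\epsilon<\|E-I\|=:r\le\epsilon_0$, so Lemma~\ref{lem:pauli_golf} applies. It gives sizes $\epsilon_P\in[r^2,2r]$, so $\epsilon^2\le\epsilon_P\le2\epsilon_0$, which lies in the admissible input range of \textsc{PauliStick}. For \textsc{PauliStick}$(P,\epsilon_P)$, Proposition~\ref{prop:pauli_stick_generation} (with tolerance $\eta_0$, and $\epsilon_P\le2\epsilon_0$ within its range) supplies $V,w_{\mathrm{stick}}$. The inner call \textsc{Compile}$(V,c/\log^q(1/\epsilon_P))$ returns, by induction, $w_V$ with $\|W_V-V\|\le c/\log^q(1/\epsilon_P)$. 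Hence $w_{\mathrm{stick}}(w_V,X,Y,Z)$ is a $(1,\epsilon_P,P,\eta_0)$-stick, and it is a $G$-word because $X,Y,Z\in G$. Lemma~\ref{lem:pauli_golf} then yields
\[
\|W_0\,(Q_XS_XQ_X)(Q_YS_YQ_Y)(Q_ZS_ZQ_Z)-U\|=\|(Q_XS_XQ_X)(Q_YS_YQ_Y)(Q_ZS_ZQ_Z)-E\|\le r/2\le\epsilon,
\]
and also $r/2\le\epsilon_0$. The inductive hypothesis is legitimate for the inner call: we need $c/\log^q(1/\epsilon_P)\ge c/\log^q(1/\epsilon^2)\ge2\epsilon$, which holds since $\epsilon_P\ge\epsilon^2$ and by the choice of $\epsilon_0$. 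So its potential is strictly smaller.

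\textbf{Length.} The output consists of $w_0$, three stick words, and at most six conjugating Paulis $Q_P$ (with $Q_P=I$ omitted or counted as at most one gate). This gives $L_{\mathrm{full}}(2\epsilon)+\sum_P L_{\mathrm{stick}}(\epsilon_P)+6$. Each $\epsilon_P\ge\epsilon^2$, and $L_{\mathrm{stick}}(\cdot)$ is defined as a worst case over sizes in $[\epsilon,2\epsilon_0]$, so it is monotone. Hence each stick term is at most $L_{\mathrm{stick}}(\epsilon^2)$.

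\textbf{Runtime.} The recursive calls contribute $T_{\mathrm{full}}(2\epsilon)+3T_{\mathrm{stick}}(\epsilon^2)$, by the same monotonicity. The remaining work is as follows. Computing $W_0$ and $E$ costs $O(\len(w_0))$ matrix multiplications at $O(\log(1/\epsilon))$ bits of precision. Computing $\|E-I\|$ and the golf parameters takes $O(1)$ time by Lemma~\ref{lem:pauli_golf}. Concatenating the output takes time linear in its length. All of this is bounded by $L_{\mathrm{full}}(\epsilon)\poly\log(1/\epsilon)$.

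The main obstacle is the bookkeeping of the induction measure. We must check that the inner \textsc{Compile} call inside \textsc{PauliStick} has a genuinely coarser precision than the outer $\epsilon$, so the recursion is well founded. This is exactly where the hypothesis $c/\log^q(1/\epsilon^2)\ge2\epsilon$ on $\epsilon_0$ enters. A second point to handle explicitly is the need for finite-precision arithmetic when forming $E$.
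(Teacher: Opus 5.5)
Your proposal is correct and follows the paper's proof: the constant net handles the base case, Proposition~\ref{prop:pauli_stick_generation} together with Lemma~\ref{lem:pauli_golf} gives error at most $\|E-I\|/2\le\epsilon$ for the correction step, and direct counting with $\epsilon_P\ge\epsilon^2$ gives the length and runtime bounds. Your explicit induction on $\lceil\log_2(\epsilon_0/\epsilon)\rceil$, using the condition $c/\log^q(1/\epsilon^2)\ge2\epsilon$ on $\epsilon_0$, makes rigorous the termination and the mutual recursion with \textsc{PauliStick}, which the paper leaves implicit.
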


\begin{proof}
The correctness claim follows from Lemma~\ref{lem:compilation_base_case} for the base case. For each correction, Proposition~\ref{prop:pauli_stick_generation} supplies the required sticks, and Lemma~\ref{lem:pauli_golf} bounds the final error by $\|E-I\|/2\le\epsilon$.

For the sequence length, note that Algorithm~\ref{alg:compile} appends the three words $w_{\mathrm{stick}}^X,w_{\mathrm{stick}}^Y,w_{\mathrm{stick}}^Z$ and at most six Pauli gates to $w_0$. The word $w_0$ has length at most $L_{\mathrm{full}}(2\epsilon)$, and each $w_{\mathrm{stick}}^P$ has length at most $L_{\mathrm{stick}}(\epsilon^2)$ because $\epsilon_P>\epsilon^2$. This gives the claimed length bound.

For the runtime, the call producing $w_0$ and the three calls producing $w_{\mathrm{stick}}^X,w_{\mathrm{stick}}^Y,w_{\mathrm{stick}}^Z$ cost at most $T_{\mathrm{full}}(2\epsilon)+3T_{\mathrm{stick}}(\epsilon^2)$ in total. Computing $\epsilon_P,Q_P$ takes $O(1)$ time by Lemma~\ref{lem:pauli_golf}, while evaluating $W_0$ and writing $w$ take $L_{\mathrm{full}}(\epsilon)\poly\log(1/\epsilon)$ time. Adding these costs gives the claimed runtime bound.
\end{proof}

\begin{lemma}\label{lem:pauli_stick_recursion}
Let $P\in\{X,Y,Z\}$ and $0<\epsilon\le2\epsilon_0$. Algorithm~\ref{alg:axis} terminates and returns a $G$-word $w$ whose value is a $(1,\epsilon,P,\eta_0)$-stick. Its length and runtime satisfy, for some constant $C>0$,
\begin{equation}\label{eq:pauli_stick_recursion}
\begin{aligned}
L_{\mathrm{stick}}(\epsilon)
&\le C\log^q(1/\epsilon)
\left[L_{\mathrm{full}}\!\left(\frac{c}{\log^q(1/\epsilon)}\right)+1\right],\\
T_{\mathrm{stick}}(\epsilon)
&\le T_{\mathrm{full}}\!\left(\frac{c}{\log^q(1/\epsilon)}\right)
+L_{\mathrm{stick}}(\epsilon)\poly\log(1/\epsilon).
\end{aligned}
\end{equation}
\end{lemma}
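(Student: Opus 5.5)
The plan is to follow Algorithm~\ref{alg:axis} line by line, invoking Proposition~\ref{prop:pauli_stick_generation} for correctness and the structure of $w_{\mathrm{stick}}$ for the length bound.

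\emph{Correctness and termination.} Since $0<\epsilon\le2\epsilon_0$ and $\epsilon_0$ was chosen at most half the threshold of Proposition~\ref{prop:pauli_stick_generation}, that proposition applies with tolerance $\eta_0$ and $\sigma=1$. It returns $V$ and a word $w_{\mathrm{stick}}(V,X,Y,Z)$ such that any $\widetilde V$ with $\|\widetilde V-V\|\le c/\log^q(1/\epsilon)$ yields a $(1,\epsilon,P,\eta_0)$-stick.

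The recursive call \textsc{Compile}$(V,c/\log^q(1/\epsilon))$ returns $w_V$ with $\|W_V-V\|\le c/\log^q(1/\epsilon)$ by Lemma~\ref{lem:pauli_compiler_recursion}. Once that lemma's correctness part is available, substituting $w_V$ for $V$ gives the stick. Termination is not circular, because the mutual recursion strictly makes progress:
\begin{itemize}
\item Compile at precision $\epsilon$ calls Compile at $2\epsilon$, and PauliStick at sizes at least $\epsilon^2$.
\item PauliStick at size $\epsilon'\ge\epsilon^2$ calls Compile at precision $c/\log^q(1/\epsilon')\ge c/\log^q(1/\epsilon^2)\ge2\epsilon$, using the choice of $\epsilon_0$.
\end{itemize}
So every Compile call inside the recursion has precision at least double its caller's, reaching the base case after $O(\log(1/\epsilon))$ nested levels. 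I would therefore prove both lemmas jointly by induction on $\lceil\log_2(\epsilon_0/\epsilon)\rceil$.

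\emph{Length.} The word $w_{\mathrm{stick}}$ has $O(\log^q(1/\epsilon))$ symbols, and each symbol is either a Pauli (length 1) or a copy of $V$, replaced by $w_V$ of length at most $L_{\mathrm{full}}(c/\log^q(1/\epsilon))$. So the output length is at most $\len(w_{\mathrm{stick}})\cdot[L_{\mathrm{full}}(c/\log^q(1/\epsilon))+1]$. Taking $C$ from the $O(\cdot)$ in the proposition gives the claim.

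One subtlety is that $L_{\mathrm{stick}}$ is a supremum over sizes $\epsilon'\in[\epsilon,2\epsilon_0]$. For $\epsilon'\ge\epsilon$ we have $\log(1/\epsilon')\le\log(1/\epsilon)$, so the precision $c/\log^q(1/\epsilon')$ is at least $c/\log^q(1/\epsilon)$. The bound then follows as long as $L_{\mathrm{full}}$ is monotone non-increasing in its precision argument. That is not automatic for a worst-case length, so I would either define $L_{\mathrm{full}}$ as a supremum over precisions at least the argument, or note that the recursion bounds are monotone.

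\emph{Runtime.} There is one call to \textsc{Compile}, costing $T_{\mathrm{full}}(c/\log^q(1/\epsilon))$ (again up to the same monotonicity caveat). Computing $V$ and $w_{\mathrm{stick}}$ costs $\poly\log(1/\epsilon)$ by Proposition~\ref{prop:pauli_stick_generation}. Writing out the substituted word costs time linear in its length. All of this is absorbed into $L_{\mathrm{stick}}(\epsilon)\poly\log(1/\epsilon)$, since $L_{\mathrm{stick}}\ge1$.

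The main obstacle I expect is the bookkeeping: the well-foundedness of the mutual recursion and the monotonicity of $L_{\mathrm{full}}$ and $T_{\mathrm{full}}$ when passing to the supremum over stick sizes. The error analysis itself is supplied entirely by the proposition.
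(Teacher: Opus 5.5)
Your proposal is correct and follows the paper's proof: correctness comes from Lemma~\ref{lem:pauli_compiler_recursion} combined with Proposition~\ref{prop:pauli_stick_generation}, the length bound from counting the substituted copies of $w_V$ together with the Pauli gates, and the runtime bound from summing the single recursive call, the $\poly\log(1/\epsilon)$ setup, and the cost of writing the output. Two points you add are left implicit in the paper and are handled correctly: the joint induction that makes the mutual recursion between the two lemmas well-founded, and the monotonicity of $L_{\mathrm{full}}$ and $T_{\mathrm{full}}$ needed when taking the supremum over stick sizes.
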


\begin{proof}
The call producing $w_V$ terminates and gives $\|W_V-V\|\le c/\log^q(1/\epsilon)$ by Lemma~\ref{lem:pauli_compiler_recursion}. The correctness claim of Algorithm~\ref{alg:axis} then follows from Proposition~\ref{prop:pauli_stick_generation}.

For the sequence length, note that Proposition~\ref{prop:pauli_stick_generation} gives $\len(w_{\mathrm{stick}})=O(\log^q(1/\epsilon))$. The output $w_{\mathrm{stick}}(w_V,X,Y,Z)$ replaces each occurrence of $V$ in $w_{\mathrm{stick}}$ by $w_V$, whose length is at most $L_{\mathrm{full}}(c/\log^q(1/\epsilon))$. Counting these copies and the Pauli gates gives the claimed length bound.

For the runtime, the call producing $w_V$ costs at most $T_{\mathrm{full}}(c/\log^q(1/\epsilon))$. Computing $w_{\mathrm{stick}}$ and the description of $V$ takes $\poly\log(1/\epsilon)$ time by Proposition~\ref{prop:pauli_stick_generation}, while writing $w$ takes $L_{\mathrm{stick}}(\epsilon)\poly\log(1/\epsilon)$ time. Adding these costs gives the claimed runtime bound.
\end{proof}

\begin{proof}[Proof of Theorem~\ref{thm:free_pauli_compilation_3}]
The guarantee that $\|W-U\|\le\epsilon$ follows
directly from Lemma~\ref{lem:pauli_compiler_recursion}.
We next prove
\begin{equation}\label{eq:pauli_compiler_length_induction}
L_{\mathrm{full}}(\epsilon)
\le C_{\mathrm{len}}
\bigl(1+\log^{2.138}(1/\epsilon)\bigr),\qquad\forall 0<\epsilon<1
\end{equation}
for some $C_{\mathrm{len}}\ge1$,
independent of $\epsilon$, for which this bound
holds for $0<\epsilon<1$.
We prove the bound on $[2^{-n}\epsilon_0,1)$
by induction on the integer $n\ge0$, initially
allowing a constant $C_n$ that may depend on $n$.
We will show that these constants eventually
stop increasing. Lemma~\ref{lem:compilation_base_case} gives
$L_{\mathrm{full}}(\epsilon)=O(1)$ on
$[\epsilon_0,1)$, so a suitable $C_0\ge1$ exists.

For an integer $n\ge0$, assume
Eq.~\eqref{eq:pauli_compiler_length_induction}
holds with $C_n$ in place of $C_{\mathrm{len}}$
for all $\epsilon\in[2^{-n}\epsilon_0,1)$.
Let $\epsilon\in[2^{-n-1}\epsilon_0,2^{-n}\epsilon_0)$.
Combining the length bounds in Lemmas~\ref{lem:pauli_compiler_recursion} and~\ref{lem:pauli_stick_recursion} over the $O(\log(1/\epsilon))$ calls with error parameters $\epsilon,2\epsilon,4\epsilon,\ldots$ gives
\begin{equation}\label{eq:pauli_compiler_length_recurrence}
L_{\mathrm{full}}(\epsilon)
\le C\log^{q+1}(1/\epsilon)
\left[
L_{\mathrm{full}}\!\left(
\frac{c}{\log^q(1/\epsilon^2)}
\right)+1
\right],
\end{equation}
for some constant $C>0$. By our choice of $\epsilon_0$,
\begin{equation}\label{eq:pauli_compiler_length_threshold}
2^{-n}\epsilon_0
\le2\epsilon
\le\frac{c}{\log^q(1/\epsilon^2)}
<1,
\end{equation}
so applying the induction hypothesis to the recursive
term in Eq.~\eqref{eq:pauli_compiler_length_recurrence}
gives
\begin{equation}\label{eq:pauli_compiler_length_induction_close}
\frac{L_{\mathrm{full}}(\epsilon)}
{1+\log^{2.138}(1/\epsilon)}
\le 
\frac{
C_nC\bigl[
2+\log^{2.138}(\log^q(1/\epsilon^2)/c)
\bigr]
}{
\log^{1.138-q}(1/\epsilon)
}.
\end{equation}
The expression multiplying $C_n$ on the right-hand side
of Eq.~\eqref{eq:pauli_compiler_length_induction_close}
has a finite upper bound on the new interval.
Use this bound to choose $C_{n+1}\ge C_n$ so that
the estimate holds throughout
$[2^{-n-1}\epsilon_0,1)$, taking $C_{n+1}=C_n$
whenever the fraction is at most $1$ throughout
the new interval.
Since $q<1.138$, the fraction tends to zero
as $\epsilon\to0$, so the constants $C_n$
eventually stop increasing.
Taking $C_{\mathrm{len}}$ to be their final value
proves Eq.~\eqref{eq:pauli_compiler_length_induction}
for every $0<\epsilon<1$, and hence
\begin{equation}\label{eq:compiler_length_final}
\len(w)\le L_{\mathrm{full}}(\epsilon)
=O\!\left(1+\log^{2.138}(1/\epsilon)\right).
\end{equation}
For the runtime, Eq.~\eqref{eq:compiler_length_final} and Lemma~\ref{lem:pauli_stick_recursion} give $L_{\mathrm{stick}}(\epsilon^2)=\poly\log(1/\epsilon)$. The $O(\log(1/\epsilon))$ calls with error bounds $\epsilon,2\epsilon,4\epsilon,\ldots$ each make at most three calls to Algorithm~\ref{alg:axis}, with stick sizes at least $\epsilon^2$. Their total runtime is at most $C\log(1/\epsilon)T_{\mathrm{stick}}(\epsilon^2)$. Applying the runtime bound in Lemma~\ref{lem:pauli_stick_recursion} then gives
\begin{equation}\label{eq:pauli_compiler_runtime_recurrence}
\begin{aligned}
T_{\mathrm{full}}(\epsilon)
&\le C\log(1/\epsilon)T_{\mathrm{stick}}(\epsilon^2)
+\poly\log(1/\epsilon)\\
&\le C\log(1/\epsilon)\,
T_{\mathrm{full}}\!\left(\frac{c}{\log^q(1/\epsilon^2)}\right)
+\poly\log(1/\epsilon).
\end{aligned}
\end{equation}
Choose constants $k>1$ and $C_{\mathrm{time}}>0$ so that the last term in \eqref{eq:pauli_compiler_runtime_recurrence} is at most $C_{\mathrm{time}}\log^k(1/\epsilon)/2$. For sufficiently small $\epsilon$, we have $2\epsilon\le c/\log^q(1/\epsilon^2)\le\epsilon_0$ and
\[
C\log(1/\epsilon)\log^k\!\left(\frac{\log^q(1/\epsilon^2)}{c}\right)
\le\frac12\log^k(1/\epsilon),
\]
because $\log(\log^q(1/\epsilon^2)/c)=O(\log\log(1/\epsilon))$ and $k>1$. Therefore, by \eqref{eq:pauli_compiler_runtime_recurrence} we have
\[
T_{\mathrm{full}}(\epsilon)
\le C_{\mathrm{time}}C\log(1/\epsilon)\log^k\!\left(\frac{\log^q(1/\epsilon^2)}{c}\right)
+\frac{C_{\mathrm{time}}}{2}\log^k(1/\epsilon)
\le C_{\mathrm{time}}\log^k(1/\epsilon).
\]
\end{proof}

\section{A subcubic IFSK algorithm}\label{sec:inverse_free}

In this section, we present a subcubic IFSK algorithm for an arbitrary finite universal gate set $G\subset\mathrm{SU}(2)$, and prove Theorem~\ref{thm:general_inverse_free}.

\begin{theorem}[Formal version of Theorem~\ref{thm:general_inverse_free_informal}]\label{thm:general_inverse_free}
Let $G\subset\SU(2)$ be a finite universal gate set. Given $U\in\SU(2)$ and $0<\epsilon<1$, Algorithm~\ref{alg:full_compile} outputs a $G$-word $w$ satisfying $\|W-U\|\le\epsilon$ and $\len(w)=O(\log^{2.988}(1/\epsilon))$ in time $\poly\log(1/\epsilon)$.
\end{theorem}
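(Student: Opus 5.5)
The proof follows the two-subroutine architecture from the proof sketch: a directed-unitary construction, a surjective correction sequence $\mathcal P_{24}$, and a length recursion. Since these lemmas come after the statement, I describe them as the steps I would establish in order.

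\textbf{Step 1: directed unitaries.} For a unit vector $e$ and small $\epsilon$, I would exhibit two ideal unitaries $A,B$ with $\|A-I\|\asymp\epsilon^{2/15}$ and $\|B-I\|\asymp\epsilon^{1/5}$, chosen by solving for their logarithms in terms of $e$ and $\epsilon$. I would also fix a product $\Pthirtytwo(A,B)$ containing 32 copies of each. It would be built as a nested group-commutator-like word in which inverses are replaced by self-correcting inverse-free sequences in the style of Bouland and Giurgic\v{a}-Tiron. Its value should have logarithm equal to a leading Lie-bracket term of size $\Theta(\epsilon)$ pointing in direction $e$, plus a remainder of size $o(\epsilon)$.

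The key estimate is a stability statement. If $\widetilde A,\widetilde B$ are within $\epsilon^{2/15}/100$ and $\epsilon^{1/5}/100$ of $A,B$ respectively, then the perturbed product still satisfies $\epsilon\le\|s\|\le100\epsilon$ and $\|s/\|s\|-e\|\le1/20$. This holds because the perturbations enter only multiplied against other small factors. I would prove it by expanding in quaternion coordinates via Lemma~\ref{fact:quaternion-rotation}, tracking the degree of each term in a manner analogous to Lemma~\ref{lem:self_correcting_orders}.

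\textbf{Step 2: onto corrections.} Next I would define $\mathcal P_{24}(S,C_1,C_2)$, which contains eight copies each of directed unitaries $S$ (aimed in a few directions) and of $C_1,C_2$ near fixed anticommuting anchors $Q_1,Q_2$ (think $X,Z$). Writing $C_j=Q_j\exp(0,h_j)$, the map $(h_1,h_2)\mapsto\log\bigl(\mathcal P_{24}(S,Q_1,Q_2)^{-1}\mathcal P_{24}(S,C_1,C_2)\bigr)$ should have a derivative of rank $3$ with singular values bounded below by a multiple of $\epsilon$. This is because $C_j$ appears conjugating $S$, and by Lemma~\ref{fact:quaternion-rotation} the derivative is $2h\times s$. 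I would then apply Banach's fixed point theorem (Lemma~\ref{lem:banach_fixed_point}) on a closed ball of $h$ of radius $O(\epsilon)$ to show that every target within $c\epsilon^2$ of the base point is hit. With $C_j$ replaced by approximations of precision $c\epsilon$ and the base word approximated to $c\epsilon^2$, the total error should be $O(\epsilon^3)$.

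\textbf{Step 3: algorithm and recursion.} The compiler (Algorithm~\ref{alg:full_compile}) runs as follows. To reach precision $\epsilon^3$, it first compiles coarsely and computes the residual direction $e$ and size $\epsilon$. It then builds $S$ by recursively compiling $A,B$ at precisions $\epsilon^{2/15}/100$ and $\epsilon^{1/5}/100$. It solves for $C_1,C_2$ via the fixed point and compiles them to precision $c\epsilon$. Finally it compiles the remaining factor to precision $c\epsilon^2$. This gives
\[
L(\epsilon^3)\le L(c\epsilon^2)+16L(c\epsilon)+256L(\epsilon^{2/15}/100)+256L(\epsilon^{1/5}/100).
\]
Substituting $L(\epsilon)=K\log^{\alpha}(1/\epsilon)$, the leading terms require $3^{\alpha}\ge2^{\alpha}+16+256\bigl((2/15)^{\alpha}+(1/5)^{\alpha}\bigr)$. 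I would check numerically that $\alpha=2.988$ satisfies this with slack, and absorb the additive constants by an induction on dyadic ranges of $\epsilon$, exactly as in the proof of Theorem~\ref{thm:free_pauli_compilation_3}. Fixed-point iterations and coefficient evaluations take $\poly\log(1/\epsilon)$ time per call. The recursion tree for the runtime has depth $O(\log\log(1/\epsilon))$ with five branches per node, giving $\poly\log$ total runtime.

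\textbf{Main obstacle.} I expect Step 1 to be the hardest. It requires designing the 32-fold inverse-free word so that first- and low-order terms cancel robustly under coarse errors, and the exponents $2/15$ and $1/5$ must make the perturbation terms strictly smaller than $\epsilon/20$. I would first try a commutator-of-commutators structure where $A$ appears at degree $3$ and $B$ at degree $2$, consistent with $3\cdot\tfrac{2}{15}+2\cdot\tfrac15=\tfrac45<1$, and verify the degree bookkeeping symbolically.
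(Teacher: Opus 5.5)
Your Step 3 (the recursion, the check $(2/3)^{2.988}+16(1/3)^{2.988}+256\bigl((2/45)^{2.988}+(1/15)^{2.988}\bigr)<1$, and the $5^{O(\log\log(1/\epsilon))}$-node recursion tree) matches the paper's final argument. The two propositions it rests on, however, cannot be obtained by the mechanisms you describe.

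\textbf{Step 1.} Taking $A,B$ near $I$ is fatal. For any word in $\widetilde A,\widetilde B$ alone with 32 copies of each, the logarithm of the product is $32(\log\widetilde A+\log\widetilde B)+O(\epsilon^{4/15})$. Letting $\widetilde B$ range over its admissible ball of radius $\epsilon^{1/5}/100$ therefore moves the output by $\Theta(\epsilon^{1/5})\gg100\epsilon$, however the word is arranged and however $B$ is chosen. So the perturbations do not ``enter only multiplied against other small factors.'' (Your degree count $3\cdot\frac{2}{15}+2\cdot\frac15=\frac45$ would also give a term of size $\epsilon^{4/5}$, not $\Theta(\epsilon)$.)

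The missing idea is to work near anticommuting order-four elements, as in Bouland--Giurgic\v{a}-Tiron's self-correcting sequences, rather than near $I$. In the paper, conjugation by $A$ is a rotation $R$ by slightly more than $\pi/2$, with $\cos\theta\approx-\epsilon^{2/15}$. $B$ is chosen only after $\widetilde A$ is known, so that $q_1=\widetilde B\widetilde A=(\tau,\tilde v)$ is nearly pure imaginary with $\tau\approx\epsilon^{1/5}$. Since $\widetilde A\widetilde B=\widetilde A(\widetilde B\widetilde A)\widetilde A^{-1}$ exactly, $q_2=-\widetilde A\widetilde B=(-\tau,-R\tilde v)$ has $\operatorname{Re}q_2=-\operatorname{Re}q_1$ regardless of compilation errors. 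Lemma~\ref{lem:directed_cancellation} then gives $\mathcal P_{32}=(1,0)+48\tau\delta^3(0,\tilde v-R\tilde v)+O(\epsilon^{17/15})$ with $\delta=\tilde v\cdot R\tilde v\approx\epsilon^{4/15}$, so $\tau\delta^3\approx\epsilon$. Lemma~\ref{lem:rotation_geometry} picks $v$ so that $v-Rv$ is parallel to $\tilde e\approx e$. The coarse errors only perturb $\tau$ and $\delta$ by a few percent, and that is the source of robustness.

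\textbf{Step 2.} Your scaling is internally inconsistent. If the derivative with respect to the anchor perturbations had singular values $\gtrsim\epsilon$, then admissible errors of size $c\epsilon$ in $C_1,C_2$ would move the output by $\Theta(c\epsilon^2)$, not $O(\epsilon^3)$. The anchors would then need precision $c\epsilon^2$, turning $16L(c\epsilon)$ into $16L(c\epsilon^2)$, which kills the exponent since $16(2/3)^{2.988}>1$.

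In the paper the reverse holds. $\mathcal P_{24}$ is self-correcting in the anchors: every term of total degree at most two in $(s,\alpha_1,\alpha_2)$ except $4s_1(0,s_3,s_2)$ cancels, so the first-order $2h\times s$ effects you invoke cancel out. The sensitivity to the anchors is therefore only $O(\epsilon^2)$ (Lemma~\ref{lem:local_correction_error_control}). This is what permits precision $c\epsilon$, but it also means that covering a ball of radius $c\epsilon^2$ needs an anchor perturbation of order $c$, independent of $\epsilon$. A Banach argument on an $O(\epsilon)$-ball cannot do this.

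The paper instead uses a common conjugation $C_j=V(h)Q_jV(h)^{-1}$ and the identity $\mathcal P_{24}(S,VQ_1V^{-1},VQ_2V^{-1})=V\mathcal P_{24}(V^{-1}SV,Q_1,Q_2)V^{-1}$. After rescaling by $\|s\|^{-2}$, this reduces surjectivity to the failure of rotation-equivariance of $s\mapsto4s_1(0,s_3,s_2)$. The resulting Jacobian has determinant proportional to $s_1^2(s_3^2-s_2^2)(\|s\|^2-2s_1^2)$, so $S$ must point near a fixed good direction $e_0=(1,1,2)/\sqrt6$, not along the residual. Relatedly, the net is centered at $\mathcal P_{24}(S,Q_1,Q_2)$, which lies $\Theta(\epsilon^2)$ away from $I$. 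The algorithm must therefore build $S$ first and then coarsely compile $U\mathcal P_{24}(S,Q_1,Q_2)^{-1}$, rather than compile $U$ and aim $S$ at the residual.
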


Our algorithm recursively improves an approximation to the target $U$ by appending a correction that reduces the error from $O(\epsilon^2)$ to $\epsilon^3$ in each round. In Section~\ref{sec:directed_unitaries}, we construct \textit{directed unitaries}, which lie near the identity and point approximately in a prescribed direction. We then use them in Section~\ref{sec:local_correction_net} to approximate any required correction in a small region near the identity. Both constructions need only coarser approximations, which we obtain recursively. Section~\ref{sec:full_compilation} combines these constructions into Algorithm~\ref{alg:full_compile} and analyzes its output length and runtime.

\subsection{Directed unitaries}\label{sec:directed_unitaries}
Here we introduce the concept of a \textit{directed unitary}, which is a unitary whose length (angle of rotation) is specified only up to a multiplicative constant, but whose direction is tightly constrained to lie within a fixed small angle of a desired direction. 

\begin{definition}\label{def:directed_unitary}
Let $\epsilon>0$ and let $e\in\mathbb R^3$ be a unit vector. We say that a unitary $W\in\SU(2)$ is \emph{$(e,\epsilon)$-directed} if $\epsilon\le\|\log W\|\le100\epsilon$ and $\left\|\frac{\log W}{\|\log W\|}-e\right\|\le1/20$.
\end{definition}
The following proposition shows how to construct such a unitary in any desired direction $e\in\R^3$ utilizing the sequence
\begin{align}
\Pthirtytwo(Y_1,Y_2)
=
(Y_1^3Y_2)(Y_1^2Y_2Y_1)(Y_2^3Y_1)(Y_2Y_1Y_2Y_1)(Y_2^3Y_1)(Y_2^2Y_1Y_2)(Y_1^3Y_2)(Y_1Y_2Y_1Y_2).
\label{eq:P32}
\end{align}
where $Y_1,Y_2\in\SU(2)$, allowing errors much larger than $\epsilon$ in the unitaries used to form the sequence.

\begin{proposition}\label{prop:directed_stick}
There exists a constant $\epsilon_0>0$ such that, for every unit vector
$e\in\mathbb R^3$ and every $0<\epsilon<\epsilon_0$, there is
$A\in\SU(2)$ with the following property: for every
$\widetilde A\in\SU(2)$ satisfying $\|\widetilde A-A\|\le \epsilon^{2/15}/100$, there exists $B\in\SU(2)$ such that, for every $\widetilde B\in\SU(2)$ satisfying $\|\widetilde B-B\|\le \epsilon^{1/5}/100$, the unitary $\Pthirtytwo(\widetilde B\widetilde A,\widetilde A\widetilde B)$ is $(e,\epsilon)$-directed. The classical descriptions of $A$ and $B$ can be computed in $O(1)$ time, given $e,\epsilon$ and, for $B$, also $\widetilde A$.
\end{proposition}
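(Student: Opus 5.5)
The plan is to work entirely in the Lie algebra via the quaternion logarithm, treating $\Pthirtytwo$ as a group-commutator-type word whose low-order terms cancel.

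\textbf{Setup and reduction.} I will take $A=\exp(0,a)$ with $\|a\|\asymp\epsilon^{2/15}$. Given $\widetilde A=\exp(0,\tilde a)$, I will take $B=\exp(0,b)$ with $\|b\|\asymp\epsilon^{1/5}$. All four unitaries are then close to $I$, so $\log$ is well defined. Put $Y_1=\widetilde B\widetilde A$ and $Y_2=\widetilde A\widetilde B$. Since $Y_2=\widetilde A\,Y_1\,\widetilde A^{-1}$, Lemma~\ref{fact:quaternion-rotation} gives $\log Y_2=R_{\widetilde A}\log Y_1$. Hence, with $u=\log Y_1$,
\[
\log Y_2-\log Y_1 = 2\tilde a\times u + O(\|\tilde a\|^2\|u\|).
\]
So $\Pthirtytwo$ is a word in $Y_1$ and $Y_2=Y_1\exp(w)$, where $w$ is small relative to $u$ by a factor $\|\tilde a\|$.

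\textbf{Expansion of $\Pthirtytwo$.} The word is balanced: it has 16 copies each of $Y_1$ and $Y_2$, arranged in 4-letter blocks with a Thue--Morse-like pattern. I will expand $\log\Pthirtytwo(Y_1,Y_2)$ by BCH as a series in the two small parameters, counting the degree in $\tilde a$ and the degree in $u$ separately.
\begin{itemize}
\item The first-order term $32u$ is killed by the conjugation structure, which should follow from the pairing $Y_1^3Y_2$ versus $Y_2^3Y_1$ etc.
\item I will verify, block by block, that every monomial of total weight $(2/15)k+(1/5)m<1$ vanishes. Here $k$ is the degree in $\tilde a$ and $m$ the degree in $u$.
\item What should survive at weight exactly $1$ is a single nested term of degree $k=3$ in $\tilde a$ and $m=3$ in $u$. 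Its size is $\|\tilde a\|^3\|u\|^3\asymp\epsilon^{6/15+9/15}=\epsilon$, and it is of the shape $c\,\tilde a\times(\tilde a\times(u\times\cdots))$ with an explicit nonzero constant $c$.
\end{itemize}
The exponents $2/15$ and $1/5$ are exactly the ones that make $3\cdot\tfrac{2}{15}+3\cdot\tfrac15=1$. I expect the surviving vector to be, up to the constant, a cubic form $\Psi(\tilde a,u)$ in $u$ and $\tilde a$.

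\textbf{Aiming and robustness.} For a given $e$, I will fix unit directions $\hat a$ and $\hat u$ with $\Psi(\hat a,\hat u)$ parallel to $e$, and then set magnitudes so that $\|\Psi\|\approx 10\epsilon$, inside $[\epsilon,100\epsilon]$. Since $\Psi$ is homogeneous and rotation-equivariant, one choice for $e=(0,0,1)$ rotated by $R$ works for all $e$, so this step is $O(1)$. I then set $a=\lambda\hat a$ with $\lambda\asymp\epsilon^{2/15}$.

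The parameter $b$ is computed from $\widetilde A$ so that $\log(B\widetilde A)=\mu\hat u$ exactly, i.e.\ $B=\exp(0,\mu\hat u)\widetilde A^{-1}$. The inverse here is only classical arithmetic, not a gate. This makes $u$ depend on $\widetilde B$ only, which is why $B$ may depend on $\widetilde A$.

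For the perturbations, $\|\tilde a-a\|\le\epsilon^{2/15}/100$ and $\|u-\mu\hat u\|\lesssim\epsilon^{1/5}/100$. These are small constant fractions of $\lambda$ and $\mu$, so by homogeneity $\Psi$ moves by a relative amount $O(1/100)$ in direction and in norm. All higher-weight BCH terms are $O(\epsilon^{1+\gamma})$ for some $\gamma>0$, and are absorbed once $\epsilon<\epsilon_0$. Choosing the constants in $\lambda$ and $\mu$ suitably then yields the $1/20$ angle bound and the length window.

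\textbf{Main obstacle.} The hard step is the BCH bookkeeping: verifying that $\Pthirtytwo$ annihilates every mixed monomial of weight below $1$, and that the weight-$1$ coefficient is nonzero with a controllable direction. The delicate monomials are of the types $(k,m)=(1,1),(2,1),(1,2),(3,1),(2,2),(4,1)$, which have weight below $1$ and must all cancel.

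I would first reduce this to a free-Lie-algebra computation in the two symbols $\log Y_1$ and $\log Y_2-\log Y_1$, using the block symmetry, namely the swap $Y_1\leftrightarrow Y_2$ combined with word reversal, to halve the cases. If that symmetry argument does not close, I would check the remaining coefficients by direct symbolic expansion, which is finite because only bounded degree is involved.
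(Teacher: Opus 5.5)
\noindent Your approach fails at its first claimed cancellation. With $\widetilde A$ and $\widetilde B$ close to $I$, both $Y_1$ and $Y_2=\widetilde A Y_1\widetilde A^{-1}$ are close to $I$. The linear BCH term of $\log\Pthirtytwo(Y_1,Y_2)$ is then just the sum of the logarithms of the factors, $16u+16R_{\widetilde A}u$. Since $R_{\widetilde A}$ is within $O(\epsilon^{2/15})$ of the identity, this has norm about $32\|u\|$, and the higher-order terms are $O(\|u\|^2)$, too small to cancel it.

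A word with only positive exponents has no mechanism for killing this term: at $\widetilde A=I$ the word is literally $Y_1^{32}=\Phi(\exp(0,32u))$. So pairing $Y_1^3Y_2$ with $Y_2^3Y_1$ does nothing at first order. Your list of weight-below-one monomials that must vanish omits exactly the pure-$u$ ones $(k,m)=(0,m)$, and $(0,1)$ equals $32u\neq0$. With $\|u\|\asymp\epsilon^{1/5}$ the output has $\|\log W\|\approx32\epsilon^{1/5}\gg100\epsilon$. Taking $\|u\|$ of order $\epsilon$ instead does not help, because the allowed error $\|\widetilde B-B\|\le\epsilon^{1/5}/100$ then moves $u$, and with it the linear term, by far more than $\epsilon$. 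This is the basic obstruction to inverse-free self-correction near the identity: input errors propagate linearly. (Separately, $B=\exp(0,\mu\hat u)\widetilde A^{-1}$ has $\|\log B\|\asymp\epsilon^{2/15}$, not $\epsilon^{1/5}$.)

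\noindent The missing idea is to work near elements of order four rather than near $I$. The paper takes $A=\Phi(a)$ with $\operatorname{Re}a=\sqrt{(1-\epsilon^{2/15})/2}$, so $A$ is a constant distance from $I$. Conjugation by $\widetilde A$ is then a rotation $R$ with angle just past $\pi/2$ ($\cos\theta\approx-\epsilon^{2/15}$). Given $\widetilde A$, it sets $B=\Phi((\epsilon^{1/5},v)\tilde a^{-1})$. Then $q_1=\Phi^{-1}(\widetilde B\widetilde A)=(\tau,\tilde v)$ is nearly a pure unit quaternion, and $q_2=-\Phi^{-1}(\widetilde A\widetilde B)=(-\tau,-R\tilde v)$ nearly anticommutes with it; the sixteen sign flips cancel.

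The small parameters are $\tau\approx\epsilon^{1/5}$ and $\delta=\tilde v\cdot R\tilde v\approx\epsilon^{4/15}$. An exact reduction using $q_1^2=2\tau q_1-(1,0)$, $q_2^2=-2\tau q_2-(1,0)$ and the formula for $q_2q_1$ (Lemma~\ref{lem:directed_cancellation}) gives $\Pthirtytwo(q_1,q_2)=(1,0)+48\tau\delta^3(q_1+q_2)+O(\max\{|\tau|^{17/3},|\delta|^{17/4}\})$. Here $\tau\delta^3\asymp\epsilon$ and $q_1+q_2=(0,\tilde v-R\tilde v)$.

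Lemma~\ref{lem:rotation_geometry} chooses $v$ with $v\cdot Rv=\rho$ and $v-Rv$ along $\tilde e\approx e$. Its perturbation bound shows the $\epsilon^{1/5}/100$ error in $\widetilde B$ moves $\delta$ only by about $\epsilon^{1/5}\cdot\epsilon^{1/15}/50$, a small fraction of $\rho$. This is why $B$ must depend on $\widetilde A$: the induced rotation $R$ must be known to tune $\delta$ and to aim $q_1+q_2$.
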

The next lemma presents a cancellation property of the sequence $\Pthirtytwo(Y_1,Y_2)$ when $Y_1$ and $Y_2$ have opposite real parts.

\begin{lemma}\label{lem:directed_cancellation}
There exists an absolute constant $C>0$ such that, for every
$q_1,q_2\in\mathrm{Sp}(1)$ satisfying $\operatorname{Re}q_1=\tau=-\operatorname{Re}q_2$, setting $\delta=\operatorname{Re}(q_1q_2)+\tau^2$, we have
\[
\left\|
\Pthirtytwo(q_1,q_2)-(1,0)
-48\tau\delta^3(q_1+q_2)
\right\|
\le
C\max\{|\tau|^{17/3},\,|\delta|^{17/4}\}.
\]
\end{lemma}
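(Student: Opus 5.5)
Write $q_1=(\tau,u)$ and $q_2=(-\tau,v)$ with $\|u\|^2=\|v\|^2=1-\tau^2$. Then $\operatorname{Re}(q_1q_2)=-\tau^2-u\cdot v$, so $\delta=-u\cdot v$. The quantities $\tau$ and $\delta$ measure how far the pair is from the ``ideal'' configuration in which $q_1,q_2$ are orthogonal pure quaternions. At that point $q_1^2=q_2^2=-1$ and $q_1q_2=-q_2q_1$, i.e.\ they generate a quaternion group. The plan is to expand $\Pthirtytwo(q_1,q_2)$ as a function of the two small parameters $(\tau,\delta)$ around this point.

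\textbf{Step 1 (normal form).} Reduce to a two-parameter family by symmetry. Conjugating both $q_i$ by a common unit quaternion conjugates $\Pthirtytwo$ and $q_1+q_2$ by the same element, and it preserves $\tau$, $\delta$ and the norm. So we may take $u=\sqrt{1-\tau^2}\,e_1$ and $v=\sqrt{1-\tau^2}(\cos\phi\, e_1+\sin\phi\, e_2)$ with $\cos\phi=-\delta/(1-\tau^2)$. Then every word in $q_1,q_2$ lies in the real span of $1,e_1,e_2,e_3$ with coefficients that are explicit polynomials in $\tau$, $\delta$ and $s=\sqrt{1-\tau^2-\delta^2/(1-\tau^2)}$ (up to normalisation). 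This makes $\Pthirtytwo$ a real-analytic function $F(\tau,\delta)$ near $(0,0)$.

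\textbf{Step 2 (structure of the expansion).} Compute the expansion of $F(\tau,\delta)-(1,0)$ in monomials $\tau^a\delta^b$. The useful reduction is $q_1^2=2\tau q_1-1$ and $q_2^2=-2\tau q_2-1$, which follows from Cayley--Hamilton for unit quaternions, together with the relation $q_1q_2+q_2q_1=2(\tau^2+\delta)\cdot(-1)+2\tau(q_2-q_1)$ up to sign conventions; the exact form comes from the real part of the product. Using these identities, each of the eight blocks $(Y_1^3Y_2)$, $(Y_1^2Y_2Y_1)$, \dots\ can be rewritten as $\pm q_1q_2$ or $\pm q_2q_1$ plus corrections linear in $\tau$ and $\delta$. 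The ideal product is then $\pm1$ in the quaternion group. The ordering in \eqref{eq:P32} is designed so that the ideal product is $(1,0)$ and low-order corrections telescope. Next, use symmetries of the word to kill monomials: the swap $q_1\leftrightarrow q_2$ combined with $\tau\mapsto-\tau$, word reversal, and inversion (which sends $\tau$-terms to themselves while flipping the imaginary part) should force vanishing of all monomials with $a+b$ small. The expected result is that every monomial $\tau^a\delta^b$ with $3a+4b<17$ vanishes, except $48\tau\delta^3(q_1+q_2)$, whose total weight is $3+12=15$ and whose imaginary direction is $u+v$ at leading order. Each remaining monomial with $3a+4b\ge17$ satisfies $|\tau|^a|\delta|^b\le\max\{|\tau|^{17/3},|\delta|^{17/4}\}$ by weighted AM--GM: $|\tau|^a|\delta|^b\le M^{(3a+4b)/12}$ with $M=\max\{|\tau|^{12/3\cdot},\dots\}$, i.e.\ set $x=|\tau|^{1/4}$ and $y=|\delta|^{1/3}$ scaled, and the weighted degree is at least $17$.

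\textbf{Step 3 (globalisation).} The expansion gives the bound for $|\tau|,|\delta|\le r_0$. Outside that region both sides are $O(1)$ and the right-hand side is bounded below by $r_0^{17/3}$, so enlarging $C$ handles all $q_1,q_2$. The remainder in the analytic region is controlled by Taylor's theorem with a uniform constant, using the weighted truncation.

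\textbf{Main obstacle.} The main obstacle is Step 2: verifying the exact cancellation of all weighted orders below $17$ and the coefficient $48$. I would first do this by a computer-algebra style symbolic expansion organised block by block. I would write each block as $(1,0)$-valued ideal part times $\exp$ of a correction, expand to weighted order $17$, and track cancellations pairwise between blocks $k$ and $k+4$. The word's structure suggests it is built from two commutator-like halves. If the symmetry arguments don't suffice, I would fall back on direct expansion, which is finite.
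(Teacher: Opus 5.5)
Your outline is essentially the paper's proof. The paper uses exactly the relations $q_1^2=2\tau q_1-(1,0)$, $q_2^2=-2\tau q_2-(1,0)$ and the anticommutation relation, expands in $(\tau,\delta)$, and bounds each surviving monomial $\tau^a\delta^b$ with $3a+4b\ge17$ by $\max\{|\tau|^{17/3},|\delta|^{17/4}\}$. Your AM--GM step is correct once you take $M=\max\{|\tau|^4,|\delta|^3\}$.

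The difference is that the paper uses the relations as an exact rewriting system rather than inside a perturbative expansion. They make the real span of $(1,0),q_1,q_2,q_1q_2$ closed under right multiplication by $q_1$ and $q_2$. Hence $\Pthirtytwo(q_1,q_2)=\beta_0(1,0)+\beta_1q_1+\beta_2q_2+\beta_{12}q_1q_2$ with each $\beta$ an honest polynomial in $(\tau,\delta)$, with no square roots. Since the four basis quaternions are unit and $|\tau|,|\delta|\le1$, the triangle inequality gives the bound globally. Your normal form, the analyticity and Taylor-remainder argument, and the separate globalisation step can therefore all be dropped.

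Two points need fixing. First, your anticommutator has the wrong sign on the scalar term. Expanding $pq+qp$ for $p=(\tau,u)$, $q=(-\tau,v)$ and using $\delta=-u\cdot v$ gives $q_1q_2+q_2q_1=2(\tau^2+\delta)(1,0)+2\tau(q_2-q_1)$. The lemma is nothing but exact cancellations, so this sign matters.

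Second, and more seriously, Step~2 as written only asserts the expected outcome. The symmetries you list are not shown to annihilate any particular monomial. The vanishing of all weights below $17$, together with the coefficient $48$, is a property of this specific word, so it must come from the finite computation. That computation is what the paper supplies: $\beta_1,\beta_2=48\tau\delta^3-272\tau^3\delta^2+464\tau^5\delta-240\tau^7+\cdots$, $\beta_{12}=-96\tau^2\delta^3+\cdots$, and $\beta_0=1+384\tau^2\delta^4+\cdots$.

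If you want a structural check in place of the symmetry heuristics, start at $\tau=0$. There $q_j^{-1}=-q_j$, and the word collapses exactly to $(q_1q_2)(q_2q_1)^5(q_1q_2)^4=(1,0)$, so every correction carries a factor $\tau$. The linear-in-$\tau$ term at fixed $\delta$ is $-\sum_k\pm\mathrm{Ad}_{w_{<k}}(x_k)$, with $+$ for $q_1$-letters, $-$ for $q_2$-letters, and $w_{<k}$ the prefix. In the $(u,v)$-plane, identified with $\mathbb{C}$ via $u\mapsto1$, $v\mapsto e^{i\theta}$ with $\cos\theta=-\delta$ and $\omega=e^{2i\theta}$, this sum becomes $-(1-e^{i\theta})\,\omega^{-1}\cdot2(1+\omega)^2(1+\omega^3)$. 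Its triple zero at $\omega=-1$ gives $48\tau\delta^3(u+v)+O(\tau\delta^4)$. The orders $\tau^2,\dots,\tau^5$ at low powers of $\delta$ still require the explicit expansion.
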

\begin{proof}
We begin by showing that $\Pthirtytwo(q_1,q_2)$ can be expressed as a linear combination of $(1,0),q_1,q_2,q_1q_2$. Quaternion multiplication gives
\begin{equation}\label{eq:P32_quaternion_reductions}
\begin{cases}
q_1^2=2\tau q_1-(1,0),\\
q_2^2=-2\tau q_2-(1,0),\\
q_2q_1=2(\delta+\tau^2)(1,0)-2\tau q_1+2\tau q_2-q_1q_2.
\end{cases}
\end{equation}
Applying \eqref{eq:P32_quaternion_reductions} to the first block $q_1^3q_2$, we obtain
\[
\begin{aligned}
q_1^3q_2
=\bigl(2\tau q_1-(1,0)\bigr)q_1q_2
=2\tau\bigl(2\tau q_1-(1,0)\bigr)q_2-q_1q_2
=(4\tau^2-1)q_1q_2-2\tau q_2.
\end{aligned}
\]
After multiplying by each new factor, we repeatedly use \eqref{eq:P32_quaternion_reductions} to replace occurrences of $q_1^2$, $q_2^2$, and $q_2q_1$. The resulting expression contains only real multiples of the four quaternions $(1,0), q_1,q_2$, and $q_1q_2$. We denote the coefficients of these four terms by $\beta_0,\beta_1,\beta_2$, and $\beta_{12}$, respectively. Direct calculation establishes that
\[
\begin{aligned}
\beta_0=&
1+384\tau^2\delta^4-2560\tau^4\delta^3
+5888\tau^6\delta^2-5632\tau^8\delta+1920\tau^{10}
+O\big(|\tau|^{12}+|\delta|^6\big),\\
\beta_1,\beta_2=&
48\tau\delta^3-272\tau^3\delta^2
+464\tau^5\delta-240\tau^7
+O\big(|\tau|^9+|\delta|^{9/2}\big),\\
\beta_{12}=&
-96\tau^2\delta^3+544\tau^4\delta^2
-928\tau^6\delta+480\tau^8
+O\big(|\tau|^{10}+|\delta|^5\big),
\end{aligned}
\]
and therefore
\[
\begin{aligned}
\left\|
\Pthirtytwo(q_1,q_2)-(1,0)
-48\tau\delta^3(q_1+q_2)
\right\|&\le
|\beta_0-1|
+|\beta_1-48\tau\delta^3|
+|\beta_2-48\tau\delta^3|
+|\beta_{12}|\\
&\le
C\max\left\{|\tau|^{17/3},|\delta|^{17/4}\right\}
\end{aligned}
\]
for some absolute constant $C$.
\end{proof}

The following lemma establishes a geometric property of rotations that we will use to choose $B$ in the proof of Proposition~\ref{prop:directed_stick}.

\begin{lemma}\label{lem:rotation_geometry}
Let $R$ be a rotation of $\mathbb R^3$ of angle $\theta\in[\pi/2,\pi]$ about a unit axis $u$. Let $e\perp u$ be a unit vector. For every $\ell>0$ and $0\le\rho<\ell^2$, there exists $v\in\mathbb R^3$ satisfying
\[
\|v\|=\ell,
\qquad
v\cdot Rv=\rho,
\qquad
v-Rv=\sqrt{2(\ell^2-\rho)}\,e.
\]
Moreover, for every $\eta\ge0$ and every $\tilde v\in\mathbb R^3$,
\[
|\tilde v\cdot R\tilde v-\rho|
\le
2\|\tilde v-v\|\sqrt{-\ell^2\cos\theta+(1+\cos\theta)\rho}
+\|\tilde v-v\|^2.
\]
\end{lemma}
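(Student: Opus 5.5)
We will work in an orthonormal frame $(e,u,e\times u)$ adapted to the rotation. Write $v=a\,e+b\,(e\times u)+c\,u$. Since $R$ rotates by $\theta$ about $u$, we have $Rv=c\,u+a(\cos\theta\, e+\sin\theta\,(e\times u))+b(-\sin\theta\, e+\cos\theta\,(e\times u))$. The condition $v-Rv=\lambda e$, with $\lambda=\sqrt{2(\ell^2-\rho)}>0$, has no $u$-component automatically. In the $(e,e\times u)$ plane it says that $(I-R_\theta)(a,b)^T=(\lambda,0)^T$. Because $\theta\ge\pi/2>0$, the matrix $I-R_\theta$ is invertible, so $(a,b)$ is uniquely determined, and its norm is $\lambda/|1-e^{i\theta}|=\lambda/(2\sin(\theta/2))$. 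The remaining coordinate $c$ is free, and we will choose it to satisfy the norm condition.

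The key identity is $\|v-Rv\|^2=2\|v\|^2-2v\cdot Rv$, using $\|Rv\|=\|v\|$. So once $\|v\|=\ell$ holds, the condition $v-Rv=\lambda e$ is equivalent to $v\cdot Rv=\rho$, since $\lambda^2=2(\ell^2-\rho)$. It therefore suffices to choose $c$ with $a^2+b^2+c^2=\ell^2$, which requires $a^2+b^2\le\ell^2$. Now $a^2+b^2=\lambda^2/(4\sin^2(\theta/2))=(\ell^2-\rho)/(1-\cos\theta)$. Since $\cos\theta\le0$ we have $1-\cos\theta\ge1$, and so $a^2+b^2\le\ell^2-\rho\le\ell^2$ because $\rho\ge0$. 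We then take $c^2=\ell^2-a^2+b^2$ with the sign corrected, i.e. $c^2=\ell^2-(a^2+b^2)$, and this proves existence. This step is exactly where the hypothesis $\theta\ge\pi/2$ is used.

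For the perturbation bound, set $d=\tilde v-v$. Expanding gives
\[
\tilde v\cdot R\tilde v-v\cdot Rv=d\cdot Rv+v\cdot Rd+d\cdot Rd=d\cdot(Rv+R^{T}v)+d\cdot Rd.
\]
The last term is bounded by $\|d\|^2$. For the linear term we use Cauchy--Schwarz, $|d\cdot(Rv+R^Tv)|\le\|d\|\,\|(R+R^T)v\|$, so it remains to show $\|(R+R^T)v\|\le2\sqrt{-\ell^2\cos\theta+(1+\cos\theta)\rho}$.

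The operator $R+R^T$ acts as $2$ on the $u$-axis and as $2\cos\theta$ on the orthogonal plane. Hence
\[
\|(R+R^T)v\|^2=4c^2+4\cos^2\theta\,(a^2+b^2)=4\ell^2-4(1-\cos^2\theta)(a^2+b^2).
\]
Substituting $a^2+b^2=(\ell^2-\rho)/(1-\cos\theta)$ gives
\[
4\ell^2-4(1+\cos\theta)(\ell^2-\rho)=4\bigl(-\ell^2\cos\theta+(1+\cos\theta)\rho\bigr),
\]
which is exactly the required bound. It is nonnegative, being the squared norm computed above. The main risk is only bookkeeping of the planar rotation inverse; everything else is exact algebra.
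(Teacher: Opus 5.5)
Your proof is correct and follows essentially the same route as the paper: you split $v$ into a component in $u^\perp$, determined by inverting $I-R$ there (squared norm $(\ell^2-\rho)/(1-\cos\theta)\le\ell^2$ since $\cos\theta\le0$), and a $u$-component fixed by the norm condition, and you bound the linear term using $(R+R^{T})v=2(\cos\theta\,v_\perp+\kappa u)$ together with Cauchy--Schwarz. One wording fix: given $\|v\|=\ell$, the condition $v-Rv=\lambda e$ \emph{implies} $v\cdot Rv=\rho$ rather than being equivalent to it, and that implication is the only direction you use.
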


\begin{figure}[htbp!]
\centering
\includegraphics[width=0.6\linewidth]{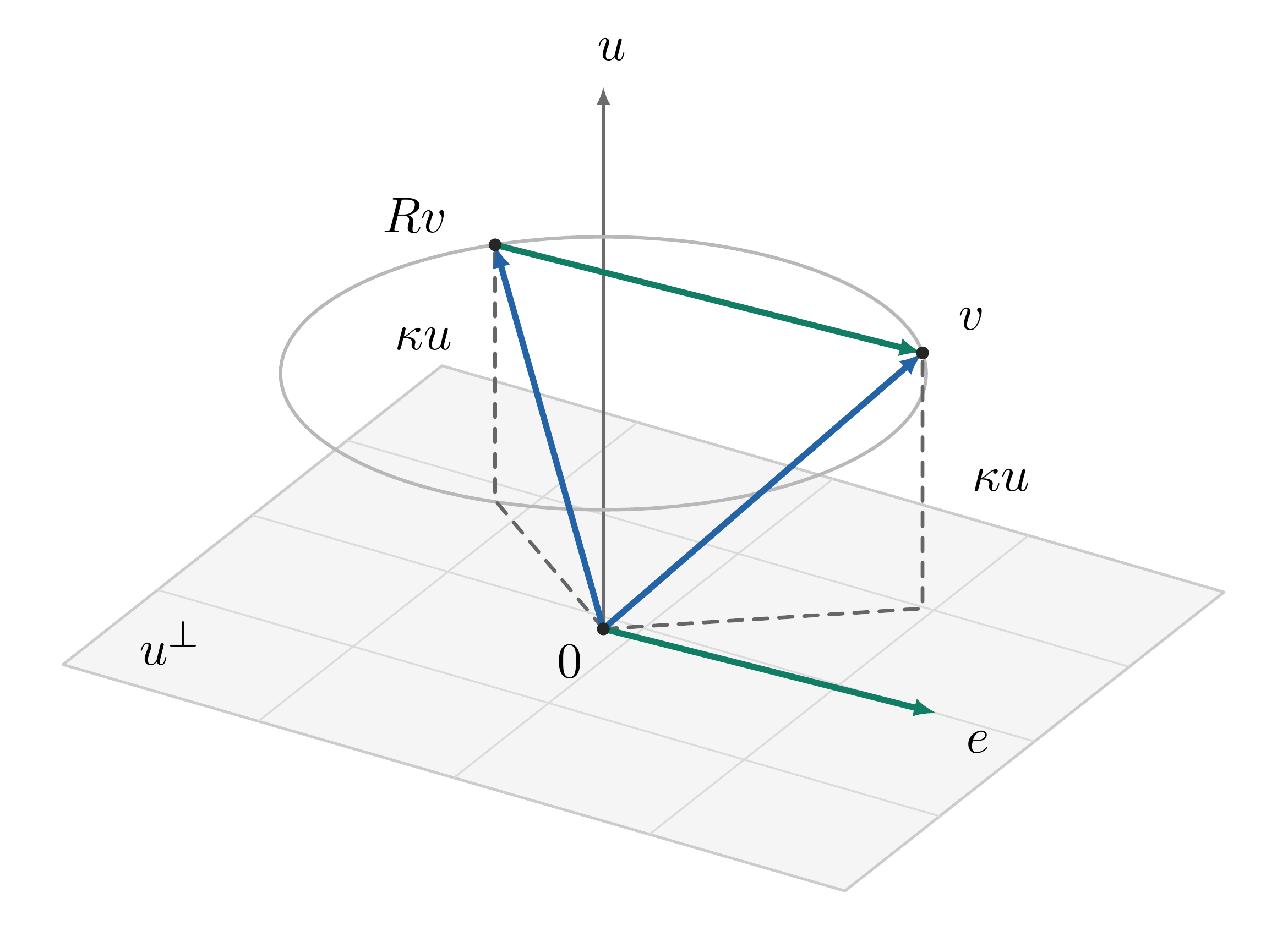}
\caption{The geometric intuition of Lemma~\ref{lem:rotation_geometry}}
\label{fig:rotation-geometry}
\end{figure}

\begin{proof}
Write $v=v_\perp+\kappa u$, where $v_\perp\in u^\perp$ and $\kappa\ge0$. Then, we have $v-Rv=v_\perp-Rv_\perp$ and
\[
\|v_\perp-Rv_\perp\|^2
=2\|v_\perp\|^2-2v_\perp\cdot Rv_\perp
=2(1-\cos\theta)\|v_\perp\|^2.
\]
Since $1-\cos\theta>0$, the linear map $v_\perp\mapsto v_\perp-Rv_\perp$ is invertible on $u^\perp$. Therefore, there exists $v_\perp$ such that
\[
v_\perp-Rv_\perp=\sqrt{2(\ell^2-\rho)}\,e.
\]
which satisfies
\[
\|v_\perp\|^2
=\frac{\ell^2-\rho}{1-\cos\theta}
\le\ell^2.
\]
Setting
\[
\kappa=\sqrt{\ell^2-\|v_\perp\|^2}
=\sqrt{\frac{\rho-\ell^2\cos\theta}{1-\cos\theta}}
\]
then gives
\[
\|v\|^2=\|v_\perp\|^2+\kappa^2=\ell^2,
\qquad
v\cdot Rv=\cos\theta\,\|v_\perp\|^2+\kappa^2=\rho.
\]
For the perturbation bound, note that we have
\[
Rv_\perp+R^{-1}v_\perp=2\cos\theta\,v_\perp.
\]
Consequently,
\[
\begin{aligned}
\tilde v\cdot R\tilde v-\rho
&=v\cdot R(\tilde v-v)+(\tilde v-v)\cdot Rv+(\tilde v-v)\cdot R(\tilde v-v)\\
&=2(\cos\theta\,v_\perp+\kappa u)\cdot(\tilde v-v)+(\tilde v-v)\cdot R(\tilde v-v).
\end{aligned}
\]
By our choices of $v_\perp$ and $\kappa$,
\[
\begin{aligned}
\|\cos\theta\,v_\perp+\kappa u\|^2
=\cos^2\theta\,\|v_\perp\|^2+\kappa^2
=-\ell^2\cos\theta+(1+\cos\theta)\rho.
\end{aligned}
\]
Then by the Cauchy-Schwartz inequality, we can conclude that
\[
\begin{aligned}
|\tilde v\cdot R\tilde v-\rho|
&\le 2\|\tilde v-v\|\,\|\cos\theta\,v_\perp+\kappa u\|+\|\tilde v-v\|^2\\
&=
2\|\tilde v-v\|\sqrt{-\ell^2\cos\theta+(1+\cos\theta)\rho}
+\|\tilde v-v\|^2.
\end{aligned}
\]
\end{proof}

\begin{proof}[Proof of Proposition~\ref{prop:directed_stick}]
We choose $A$ so that conjugation by every $\widetilde A$ that is close to $A$ induces a rotation of angle slightly larger than $\pi/2$. For each such $\widetilde A$, we choose a unit direction $\tilde e$ close to $e$ and perpendicular to the rotation axis, then use Lemma~\ref{lem:rotation_geometry} to choose an operator $B$ so that the leading term of $\log \Pthirtytwo(B\widetilde A,\widetilde A B)$ has the required size and points in direction $\tilde e$.

Let $u$ be the normalized projection onto $e^\perp$ of a coordinate unit vector whose inner product with $e$ has absolute value at most $3/4$. Set
\[
a\coloneqq
\left(
\sqrt{\frac{1-\epsilon^{2/15}}2},
\sqrt{\frac{1+\epsilon^{2/15}}2}\,u
\right),
\qquad
A\coloneqq\Phi(a).
\]
Write $\tilde a\coloneqq\Phi^{-1}(\widetilde A)$, so $\|\tilde a-a\|\le\epsilon^{2/15}/100$. By Lemma~\ref{fact:quaternion-rotation}, there exists a rotation $R$ induced by conjugation by $\tilde a$, whose angle $\theta\in[0,\pi]$ satisfies $\cos\theta=2(\operatorname{Re}\tilde a)^2-1$. Since $\|\operatorname{Im}a\|\ge1/\sqrt2$ and $2(\operatorname{Re}a)^2-1=-\epsilon^{2/15}$, the bound $\|\tilde a-a\|\le\epsilon^{2/15}/100$ gives, for sufficiently small $\epsilon_0$,
\[
\begin{aligned}
\|\tilde u-u\|\leq \frac{\epsilon^{2/15}}{25},
\qquad
-\frac{26}{25}\epsilon^{2/15}
\le\cos\theta
\le-\frac{24}{25}\epsilon^{2/15},
\end{aligned}
\]
where $\tilde u\coloneqq
\operatorname{Im}\tilde a/\|\operatorname{Im}\tilde a\|$ is the unit axis of $R$. The second estimate also gives $\theta\in(\pi/2,\pi)$. Since $e\perp u$, we have $|e\cdot\widetilde u|=O(\epsilon^{2/15})<1/2$. The normalized projection of $e$ onto $\widetilde u^\perp$ is therefore well-defined and satisfies
\[
\widetilde e\coloneqq
\frac{e-(e\cdot\widetilde u)\widetilde u}
{\|e-(e\cdot\widetilde u)\widetilde u\|},
\qquad
\|\widetilde e-e\|=O(\epsilon^{2/15}).
\]
Apply Lemma~\ref{lem:rotation_geometry} with $\ell=\sqrt{1-\epsilon^{2/5}}$ and $\rho=\epsilon^{1/5}\sqrt{-\cos\theta}$, which satisfy $0<\rho<\ell^2$ for sufficiently small $\epsilon_0$. For the resulting vector $v$, set $B=\Phi((\epsilon^{1/5},v)\tilde a^{-1})$.
For any $\widetilde B$ satisfying the stated error bound, write
\[
q_1=\Phi^{-1}(\widetilde B\widetilde A)=(\tau,\tilde v),
\qquad
q_2=-\Phi^{-1}(\widetilde A\widetilde B)=(-\tau,-R\tilde v).
\]
Then $|\tau-\epsilon^{1/5}|$ and $\|\tilde v-v\|$ are at most $\epsilon^{1/5}/100$. Put $\delta=\operatorname{Re}(q_1q_2)+\tau^2=\tilde v\cdot R\tilde v$. The perturbation bound in Lemma~\ref{lem:rotation_geometry} gives
\[
|\delta-\rho|
\le\frac{\epsilon^{1/5}}{50}\sqrt{-\ell^2\cos\theta+(1+\cos\theta)\rho}
+\frac{\epsilon^{2/5}}{10^4}
\le\frac{\rho}{40}
\]
for sufficiently small $\epsilon_0$. In particular, $99\epsilon^{1/5}/100\le\tau\le101\epsilon^{1/5}/100$ and $19\epsilon^{4/15}/20\le\delta\le21\epsilon^{4/15}/20$.
Also, $\tilde v-R\tilde v=\sqrt{2(\ell^2-\rho)}\,\tilde e+O(\epsilon^{1/5})$. There are sixteen copies of $q_2$ in $\Pthirtytwo(q_1,q_2)$, so their minus signs cancel. Lemma~\ref{lem:directed_cancellation} and Eq.~\eqref{eq:quaternion_principal_log} therefore give
\[
\log\Pthirtytwo(\widetilde B\widetilde A,\widetilde A\widetilde B)
=48\tau\delta^3\sqrt{2(\ell^2-\rho)}\,e+O(\epsilon^{17/15}).
\]
For sufficiently small $\epsilon_0$, the coefficient of $e$ lies between $50\epsilon$ and $80\epsilon$. The output is therefore $(e,\epsilon)$-directed.

Finally, we discuss the time complexity of this process. The construction in Lemma~\ref{lem:rotation_geometry} gives
\[
v=\sqrt{2(\ell^2-\rho)}\left((I-R)|_{\tilde u^\perp}\right)^{-1}\tilde e
+\sqrt{\frac{\rho-\ell^2\cos\theta}{1-\cos\theta}}\,\tilde u.
\]
Thus, the descriptions of $A$ and $B$ can be computed in $O(1)$ time using a constant number of arithmetic operations and roots of fixed degree, together with a linear solve in two dimensions.
\end{proof}

\subsection{A local correction net}\label{sec:local_correction_net}

Here we construct a local correction net from an $(e,\epsilon)$-directed unitary $S$ (Definition~\ref{def:directed_unitary}) and two unitaries $Q_1,Q_2$, called \textit{anchors}. For $C_1,C_2\in\mathrm{SU}(2)$, define the local correction sequence 
\begin{align}\label{eq:local_correction_sequence}
\mathcal P_{24}(S,C_1,C_2)
=(SC_1SC_1)(SC_2SC_1)(SC_2SC_2)(C_1SC_2C_1)(C_2SC_1C_1)(C_2C_1C_2C_2).
\end{align}
For suitable choices of $e$ and $Q_1,Q_2$, varying $C_1,C_2$ in neighborhoods of $Q_1,Q_2$ lets us approximate every unitary in a neighborhood of $\mathcal P_{24}(S,Q_1,Q_2)$. 
For simplicity, here we choose
\[
Q_1=iX,\qquad Q_2=iY,\qquad e=e_0\coloneqq (1,1,2)/\sqrt6.
\]
and $C_j\coloneqq V(h)Q_jV(h)^{-1}$ for $j=1,2$, where $V(x)\coloneqq\Phi(\exp(0,x))$ for $x\in\mathbb R^3$, and $h$ will be specified below. Write $s\coloneqq\log S$. For $S\ne I$ sufficiently close to $I$, define
\begin{equation}\label{eq:P24_conjugation_map}
\Psi_S(x)\coloneqq\|s\|^{-2}\bigl[\log\mathcal P_{24}(S,V(x)Q_1V(x)^{-1},V(x)Q_2V(x)^{-1})
-\log\mathcal P_{24}(S,Q_1,Q_2)\bigr].
\end{equation}
The sequence equals $U$ at $x=h$ precisely when $\Psi_S(h)=(\log U-\log\mathcal P_{24}(S,Q_1,Q_2))/\|s\|^2$. We choose $h$ as the unique fixed point in $B_0(1/1000)$ of the map
\begin{equation}\label{eq:P24_fixed_point_update}
x\longmapsto x+(D\Psi_S(0))^{-1}\left(\frac{\log U-\log\mathcal P_{24}(S,Q_1,Q_2)}{\|s\|^2}-\Psi_S(x)\right).
\end{equation}
The following proposition establishes that this choice is well defined and bounds the error caused by approximating $C_1$ and $C_2$.

\begin{proposition}\label{prop:local_correction_net}
There exist constants $c,\epsilon_0>0$ such that, for every $0<\epsilon<\epsilon_0$, every $(e_0,\epsilon)$-directed unitary $S$, and every $U\in\mathrm{SU}(2)$ satisfying $\|U-\mathcal P_{24}(S,Q_1,Q_2)\|\le c\epsilon^2$, the fixed point $h$ of the map in \eqref{eq:P24_fixed_point_update} is well defined with $\|h\|\le1$, can be computed in time $\poly\log(1/\epsilon)$, and satisfies
\[
\mathcal P_{24}(S,C_1,C_2)=U.
\]
for $C_j\coloneqq V(h)Q_jV(h)^{-1}$, $j=1,2$. Moreover, for every $\widetilde C_1,\widetilde C_2\in\mathrm{SU}(2)$ satisfying $\|\widetilde C_j-C_j\|\le c\epsilon$ for $j=1,2$, we have $\|\mathcal P_{24}(S,\widetilde C_1,\widetilde C_2)-U\|\le\epsilon^3$.
\end{proposition}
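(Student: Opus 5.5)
The plan is to expand $\log\mathcal P_{24}(S,C_1,C_2)$ to second order in $s=\log S$. From that expansion I would show that $\Psi_S$ is a small perturbation of an explicit linear map $M$ that does not depend on $\epsilon$, and then run a Newton/contraction argument via Lemma~\ref{lem:banach_fixed_point}.

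\textbf{Expanding the anchored product.} With $C_j=V(x)Q_jV(x)^{-1}$ we have $C_j^2=-I$ and $C_1C_2=-C_2C_1$, because $(iX)^2=-I$ and $X,Y$ anticommute. In \eqref{eq:local_correction_sequence} I would move every $C_j$ to the right using $C_jS=(C_jSC_j^{-1})C_j$. The product then becomes $\pm\prod_k \exp(0,R_kR_{V(x)}s)$ times a product of anchors, where each $R_k$ is one of the signed Pauli rotations generated by $iX,iY$. The residual anchor product should be the identity; this is presumably why the word was chosen as it is. The count is eight copies each of $C_1$ and $C_2$, with the order arranged so that the residue is $\pm I$. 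By BCH, $\log\mathcal P_{24}=\sum_k R_kR_{V(x)}s+\tfrac12\sum_{k<l}(R_kR_{V(x)}s)\times(R_lR_{V(x)}s)+O(\|s\|^3)$. The first-order sum should vanish, since the $R_k$ sum to zero over the Klein four-group of Pauli conjugations; this also has to be verified against the word. The expansion then gives
\[
\log\mathcal P_{24}(S,C_1,C_2)=\Omega\bigl(R_{V(x)}s\bigr)+O(\|s\|^3),
\]
where $\Omega$ is an explicit quadratic form. Dividing by $\|s\|^2$ yields $\Psi_S(x)=\Omega(R_{V(x)}\hat s)-\Omega(\hat s)+O(\epsilon)$ with $\hat s=s/\|s\|$. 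This estimate holds in $C^1$ on $B_0(1)$, because the $x$-derivatives of the BCH remainder are also $O(\|s\|^3)$.

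\textbf{Nondegeneracy and contraction.} Using Lemma~\ref{fact:quaternion-rotation}, $R_{V(x)}\hat s=\hat s+2x\times\hat s+O(\|x\|^2)$. So $D\Psi_S(0)=M(\hat s)+O(\epsilon)$, where $M(\hat s)y=2\,D\Omega(\hat s)[y\times\hat s]$. The key computation, which I expect to be the main obstacle, is to show that $M(e_0)$ is invertible. Since $\|\hat s-e_0\|\le1/20$, I then also need $\|M(\hat s)^{-1}\|$ bounded uniformly over that cap. This requires explicit evaluation of $\Omega$, a bound $\sigma_{\min}(M(e_0))\ge m$, and a Lipschitz bound on $M$ of size $O(1)$ over the cap. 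Note that $y\times\hat s$ kills the direction $\hat s$, so $D\Omega$ must compensate; the asymmetric choice $e_0=(1,1,2)/\sqrt6$ is presumably what makes this work. I would verify it by a direct $3\times3$ determinant computation. If invertibility failed, I would reexamine whether the scale $\|s\|^2$ is the right one.

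Given this, consider $F(x)=x+D\Psi_S(0)^{-1}(t-\Psi_S(x))$ with $\|t\|\le c'\epsilon^2/\|s\|^2\le c'$. On $B_0(1/1000)$ we have $\|DF\|=\|D\Psi_S(0)^{-1}(D\Psi_S(0)-D\Psi_S(x))\|\le K\|x\|+O(\epsilon)\le1/2$, using second derivatives of $\Omega(R_{V(x)}\hat s)$ bounded by an absolute $K$; here $1/1000$ is small enough. Also $\|F(0)\|\le\|M^{-1}\|\,\|t\|\le1/2000$ once $c$ is small. So $F$ maps the closed ball into itself and contracts, and Lemma~\ref{lem:banach_fixed_point} gives a unique fixed point $h$ with $\Psi_S(h)=t$. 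Hence $\mathcal P_{24}(S,C_1,C_2)=U$; the conversion between $\|U-\mathcal P_{24}\|$ and differences of logs costs only constants near $I$. Computing $h$ takes $O(\log(1/\epsilon))$ iterations of $F$, each evaluating $\Psi_S$ exactly in $O(1)$ arithmetic at $\poly\log(1/\epsilon)$ precision. This gives the $\poly\log(1/\epsilon)$ time bound.

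\textbf{Robustness to perturbed $C_j$.} Write $\widetilde C_j=C_jE_j$ with $\|E_j-I\|\le c\epsilon$, and again push all anchors to the right. Each $E_j$ contributes a factor $\exp(0,R'e_j)$ interleaved with the $S$-factors. In the first-order term, the $e_j$ contributions cancel for the same group-sum reason as before. This cancellation requires that each $C_j$ appear with conjugation pattern summing to zero, and it is part of the same word check. Every surviving term is then at least a cross term $s\times e$, $e\times e$, or of third order. These are $O(\epsilon\cdot c\epsilon)$, which alone is not $\epsilon^3$. So the real claim must be that the degree-2 terms also cancel; I would verify this from the explicit word. That leaves $O(\|s\|^2\|e\|+\|e\|^3+\|s\|\|e\|^2)=O(c\epsilon^3)$, which is at most $\epsilon^3$ for small $c$. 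Should the $s\times e$ cross term survive, the fallback is to absorb it into $h$: it is linear in $e$ and of size $c\epsilon^2$, but this fallback would only give an error bound of $O(\epsilon^2)$. I therefore expect the explicit cancellation check on \eqref{eq:local_correction_sequence} to be the decisive and most laborious step.
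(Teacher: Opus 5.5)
\textbf{There is a genuine gap.} It sits at the step you yourself called the main obstacle, and it comes from a conjugation error in your first paragraph.

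The prefixes of the $C_j=V(x)Q_jV(x)^{-1}$ are $V(x)P_kV(x)^{-1}$. Pushing them to the right therefore turns $s$ into $R_xR_kR_x^{-1}s$, not $R_kR_{V(x)}s$. Equivalently, by the identity
\[
\mathcal P_{24}\bigl(S,V(x)Q_1V(x)^{-1},V(x)Q_2V(x)^{-1}\bigr)=V(x)\,\mathcal P_{24}\bigl(V(x)^{-1}SV(x),Q_1,Q_2\bigr)\,V(x)^{-1},
\]
the leading term is $R_x\,\Omega(R_x^{-1}s)$ with $\Omega(s)=4s_1(0,s_3,s_2)$. This is exactly the paper's $\overline{\Psi}_S$.

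With your form $\Omega(R_{V(x)}s)$, the linearization $M(\hat s)y=2\,D\Omega(\hat s)[y\times\hat s]$ satisfies $M(\hat s)\hat s=0$ for every $\hat s$. So it is singular identically. $D\Omega(\hat s)$ cannot ``compensate'', because it acts after the cross product has already annihilated $\hat s$. Your planned determinant check would fail for every choice of $e_0$, and changing the scale $\|s\|^2$ would not help either.

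The missing direction comes from the outer rotation. Rotating about $\hat s$ leaves the conjugated input $R_x^{-1}s$ unchanged, but it rotates the output $\Omega(s)$. The correct derivative is
\[
D\overline{\Psi}_S(0)y=2\,y\times\Omega(\hat s)-2\,D\Omega(\hat s)[y\times\hat s],
\]
which reproduces the paper's explicit matrix \eqref{eq:P24_conjugation_derivative}. Its determinant has absolute value $1024\,\hat s_1^2(\hat s_3^2-\hat s_2^2)(1-2\hat s_1^2)$. The role of $e_0$ is to keep all three factors bounded away from zero on the $1/20$-cap. That yields $\|(D\overline{\Psi}_S(0))^{-1}\|<3$.

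Once this is corrected, the rest of your outline follows the paper's proof:
\begin{itemize}
\item a contraction on $B_0(1/1000)$, using that $\Psi_S-\overline{\Psi}_S=\bigl(R_xE(R_x^{-1}s)-E(s)\bigr)/\|s\|^2$ is $O(\|s\|)$ in $C^1$;
\item $O(\log(1/\epsilon))$ iterations for the runtime;
\item for robustness, conjugation by $V(h)$ back to the anchors $Q_j$, followed by cancellation of every term of degree at most two that involves the $\alpha_j$.
\end{itemize}
You left that last cancellation unverified, but it does hold for this word. It is precisely the joint expansion \eqref{eq:P24_joint_expansion} used in Lemma~\ref{lem:local_correction_error_control}. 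Each of $I,R_X,R_Y,R_Z$ occurs exactly twice among the conjugations applied to $s$, to $\alpha_1$, and to $\alpha_2$. A direct sign check then shows that all $\alpha_i\alpha_j$ and $s\alpha_j$ cross-product terms vanish, so the $O(\epsilon^2)$ fallback you mention is not needed.
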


\subsubsection{Expansion and error bounds}

We compute the quadratic term of $\log\mathcal P_{24}(S,Q_1,Q_2)$ and bound the remainder and its first three derivatives. We then bound how $\mathcal P_{24}$ changes when $Q_1,Q_2$ are replaced by nearby $C_1,C_2$.

\begin{lemma}\label{lem:local_correction_cancellation}
There exists an absolute constant $\epsilon_0>0$ such that, for every $0<\epsilon<\epsilon_0$ and every $S\in\mathrm{SU}(2)\setminus\{-I\}$ satisfying $\|\log S\|\le100\epsilon$, the sequence $\mathcal P_{24}$ in \eqref{eq:local_correction_sequence} satisfies
\begin{align}\label{eq:P24-expansion}
\log\mathcal P_{24}(S,Q_1,Q_2)
=4s_1(0,s_3,s_2)+O(\|s\|^3).
\end{align}
\end{lemma}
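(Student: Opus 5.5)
The plan is to separate the anchors from the copies of $S$, reduce the problem to a product of eight small exponentials, and expand that product to second order.

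\textbf{Commuting the anchors through.} In quaternion form, $Q_1=iX=\Phi(0,-e_1)$ and $Q_2=iY=\Phi(0,-e_2)$. Conjugation by either anchor is therefore the rotation $R_{Q_j}$ of Lemma~\ref{fact:quaternion-rotation}, which is a rotation by $\pi$ about a coordinate axis. Concretely, $R_{Q_1}=\operatorname{diag}(1,-1,-1)$ and $R_{Q_2}=\operatorname{diag}(-1,1,-1)$.

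Counting letters in \eqref{eq:local_correction_sequence}, each of $S$, $C_1$, $C_2$ appears exactly eight times. Label the copies of $S$ in order $k=1,\dots,8$, and let $g_k$ be the product of the anchors standing to the left of the $k$-th copy. Pushing all anchors to the right gives
\[
\mathcal P_{24}(S,Q_1,Q_2)=\Bigl(\prod_{k=1}^{8} g_kSg_k^{-1}\Bigr)\,G,
\]
where $G$ is the product of all 16 anchors in order. Each factor satisfies $g_kSg_k^{-1}=\Phi(\exp(0,D_ks))$, where $D_k$ is a diagonal sign matrix in $\{I,R_{Q_1},R_{Q_2},R_{Q_1}R_{Q_2}\}$ read off from the parity of the numbers of $Q_1$'s and $Q_2$'s preceding the $k$-th $S$.

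\textbf{Checking $G$.} I first reduce $G$ in the Pauli group using $Q_1^2=Q_2^2=-I$ and $Q_1Q_2=-Q_2Q_1$, and check directly that $G=I$. This is necessary for the left side of \eqref{eq:P24-expansion} to be $O(\epsilon^2)$. If instead $G=-I$ appeared, the sign convention for the anchors would have to be revisited, so this is a genuine check rather than a formality.

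\textbf{Second-order expansion.} Next I expand the product of the eight small exponentials. The quaternion rule gives
\[
\exp(0,a)\exp(0,b)=\exp\bigl(0,\,a+b+a\times b+O((\|a\|+\|b\|)^3)\bigr),
\]
which is BCH with $\tfrac12[\cdot,\cdot]$ corresponding to the cross product under $\Phi$. Iterating this gives
\[
\log\prod_{k=1}^8\exp(0,D_ks)=\sum_k D_ks+\sum_{j<k}(D_js)\times(D_ks)+O(\|s\|^3).
\]
The remainder is uniform because $\|s\|\le100\epsilon$ is small for $\epsilon<\epsilon_0$, and the word has fixed length.

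Two things then need to be verified by tabulating the eight sign patterns $D_k$:
\begin{itemize}
\item The first-order term cancels, $\sum_kD_k=0$. This should hold because each of the four sign matrices should occur exactly twice.
\item The ordered sum of pairwise cross products equals $4s_1(0,s_3,s_2)$. Since $(D_js)\times(D_ks)$ has components that are signed products $s_as_b$ with $a\ne b$, the computation reduces to adding up signs for the three monomials $s_2s_3$, $s_1s_3$, $s_1s_2$. The claim is that the $s_2s_3$ coefficient cancels and the other two coefficients are each $4$.
\end{itemize}

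\textbf{Main obstacle.} The delicate part is purely combinatorial: getting the $D_k$ exactly right, including the order of conjugations, and summing the 28 ordered cross products without a sign error. I would organize this as a table of the $D_k$ with a running partial sum, and cross-check the result numerically for a random small $s$.
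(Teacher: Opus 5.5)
Your proposal is correct and follows essentially the paper's proof: push the anchors to the right, use that the total anchor product is $I$, and expand the eight conjugated copies of $S$ to second order. The only difference is that the paper multiplies the copies in four pairs, while you sum all $28$ ordered cross products at once; your deferred checks go through, with sign pattern $I,R_{Q_1},I,R_{Q_2},R_{Q_1}R_{Q_2},R_{Q_1},R_{Q_2},R_{Q_1}R_{Q_2}$, $G=I$, and cross-product sum $(0,4s_1s_3,4s_1s_2)$.
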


\begin{proof}
Since $Q_1^2=Q_2^2=-I$ and $Q_1Q_2=-Q_2Q_1$,
\[
\mathcal P_{24}(I,Q_1,Q_2)
=(Q_2Q_1^2Q_2)(Q_1Q_2^2Q_1)=I.
\]
Applying $Q_jU=(Q_jUQ_j^{-1})Q_j$ for $U\in\mathrm{SU}(2)$ in \eqref{eq:local_correction_sequence} and using $\mathcal P_{24}(I,Q_1,Q_2)=I$ gives
\[
\begin{aligned}
\mathcal P_{24}(S,Q_1,Q_2)
&=\bigl[S(Q_1SQ_1^{-1})\bigr]\bigl[S(Q_2SQ_2^{-1})\bigr]\cdot\bigl[(Q_1Q_2S(Q_1Q_2)^{-1})(Q_1SQ_1^{-1})\bigr]\\
&\qquad\quad\cdot\bigl[(Q_2SQ_2^{-1})(Q_1Q_2S(Q_1Q_2)^{-1})\bigr].
\end{aligned}
\]
In the quaternion notation, we have
\[
\begin{aligned}
&\Phi^{-1}\!\bigl(S(Q_1SQ_1^{-1})\bigr)
=\bigl(1-2s_1^2,\,(2s_1,\,2s_1s_3,\,-2s_1s_2)\bigr)+O(\|s\|^3),\\
&\Phi^{-1}\!\bigl(S(Q_2SQ_2^{-1})\bigr)
=\bigl(1-2s_2^2,\,(-2s_2s_3,\,2s_2,\,2s_1s_2)\bigr)+O(\|s\|^3),\\
&\Phi^{-1}\!\bigl((Q_1Q_2S(Q_1Q_2)^{-1})(Q_1SQ_1^{-1})\bigr)
=\bigl(1-2s_2^2,\,(2s_2s_3,\,-2s_2,\,2s_1s_2)\bigr)+O(\|s\|^3),\\
&\Phi^{-1}\!\bigl((Q_2SQ_2^{-1})(Q_1Q_2S(Q_1Q_2)^{-1})\bigr)
=\bigl(1-2s_1^2,\,(-2s_1,\,2s_1s_3,\,2s_1s_2)\bigr)+O(\|s\|^3).
\end{aligned}
\]
which leads to
\[
\Phi^{-1}\!\left(\mathcal P_{24}(S,Q_1,Q_2)\right)
=\bigl(1,4s_1(0,s_3,s_2)\bigr)+O(\|s\|^3),
\]
and
\[
\log\mathcal P_{24}(S,Q_1,Q_2)
=4s_1(0,s_3,s_2)+O(\|s\|^3).
\]
\end{proof}

The next lemma analyzes the higher order error terms in Lemma~\ref{lem:local_correction_cancellation} and its derivatives. For $y\in B_0(1/16)$, define
\begin{equation}\label{eq:P24_remainder}
E(y)\coloneqq\log\mathcal P_{24}(V(y),Q_1,Q_2)-4y_1(0,y_3,y_2).
\end{equation}

\begin{lemma}\label{lem:local_correction_remainder}
For every $y\in B_0(1/16)$,
\[
\|E(y)\|=O(\|y\|^3),\qquad
\|DE(y)\|=O(\|y\|^2),\qquad
\|D^2E(y)\|=O(\|y\|),\qquad
\|D^3E(y)\|=O(1).
\]
\end{lemma}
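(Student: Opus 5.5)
The plan is to treat $E$ as a real-analytic map on a neighborhood of $B_0(1/16)$ and read off the derivative bounds from its vanishing Taylor coefficients. The function $y\mapsto \mathcal P_{24}(V(y),Q_1,Q_2)$ is a polynomial in the quaternion entries of $\exp(0,y)$, hence real-analytic, indeed entire, in $y$. By Lemma~\ref{lem:local_correction_cancellation} applied with $S=V(y)$, and since $\log V(y)=y$ for $\|y\|<\pi$, the product lies within $O(\|y\|^2)$ of $(1,0)$. This keeps it away from $(-1,0)$ on $B_0(1/16)$, possibly after checking an explicit crude bound on the closed ball.

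First, I would verify that the principal logarithm \eqref{eq:quaternion_principal_log} is real-analytic on $\mathrm{Sp}(1)\setminus\{(-1,0)\}$. Writing $\log(\tau,v)=g(\tau)v$ with $g(\tau)=\arccos(\tau)/\sqrt{1-\tau^2}$, the function $g$ is analytic on $(-1,1]$, including at $\tau=1$. Consequently $E$ is real-analytic on an open neighborhood of the closed ball $B_0(1/16)$, and there its derivatives up to order three are continuous and hence uniformly bounded. This immediately gives $\|D^3E(y)\|=O(1)$.

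Second, I would show that $E$ vanishes to third order at $y=0$, meaning $E(0)=0$, $DE(0)=0$ and $D^2E(0)=0$. Lemma~\ref{lem:local_correction_cancellation} already states $E(y)=O(\|y\|^3)$ for small $y$. Combined with smoothness, a Taylor expansion $E(y)=E(0)+DE(0)y+\tfrac12D^2E(0)[y,y]+O(\|y\|^3)$ then forces the lower-order terms to vanish: a homogeneous polynomial of degree $k<3$ that is $O(\|y\|^3)$ must be zero. If one prefers not to rely on the $O(\cdot)$ in that lemma being uniform, the same conclusion follows from the exact expansion carried out in its proof. There, the products of conjugated copies of $S$ agree with the stated second-order quaternions up to $O(\|s\|^3)$, so $E$ has no terms of degree at most two.

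Finally, I would apply Taylor's theorem with integral remainder along the segment from $0$ to $y$, which lies in the convex ball. Using $\sup\|D^3E\|=O(1)$ gives $\|D^2E(y)\|\le\|D^2E(0)\|+O(\|y\|)=O(\|y\|)$. Integrating once more gives $\|DE(y)\|=O(\|y\|^2)$, and once more $\|E(y)\|=O(\|y\|^3)$.

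The main obstacle is certifying that the whole ball $B_0(1/16)$, and not only a small neighborhood of $0$, maps away from $-I$, so that $\log$ remains analytic there. I would handle this with a crude triangle-inequality estimate. Each of the 24 factors is either an exact anchor or lies within $\|V(y)-I\|\le 1/16$ of an anchor. Since $\mathcal P_{24}(I,Q_1,Q_2)=I$, the product is then within $12\cdot(1/16)<1$ of $I$ in operator norm, and in particular stays away from $-I$.
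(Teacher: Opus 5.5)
Your proposal is correct and follows the same route as the paper: a uniform bound on $D^3E$ over the ball, then $E(0)=DE(0)=D^2E(0)=0$ from Lemma~\ref{lem:local_correction_cancellation}, then Taylor's formula. The only difference is how you bound $D^3E$: you argue qualitatively from real-analyticity of the principal logarithm plus compactness, whereas the paper gets an explicit constant ($<10^5$) by differentiating the exponential and logarithm power series, using $\|\Phi^{-1}(\mathcal P_{24}(V(y),Q_1,Q_2))-(1,0)\|\le 8\|y\|\le 1/2$ (only $8$ of the $24$ factors involve $V(y)$, not $12$, though your cruder count still suffices).
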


\begin{proof}
We first bound $D^3E$. For $y\in B_0(1/16)$, the exponential series gives
\[
\|D^k\exp(0,y)\|
\le\sum_{n=k}^{\infty}\frac{\|y\|^{n-k}}{(n-k)!}
=e^{\|y\|},\qquad 0\le k\le3.
\]
Since $S$ occurs eight times in \eqref{eq:local_correction_sequence}, the product rule gives
\begin{equation}\label{eq:P24_remainder_product_derivatives}
\bigl\|D_y^k[\Phi^{-1}(\mathcal P_{24}(V(y),Q_1,Q_2))]\bigr\|
\le8^k e^{8\|y\|}\le2\cdot8^k,\qquad k=1,2,3.
\end{equation}
Lemma~\ref{lem:local_correction_cancellation} with $S=I$ gives $\mathcal P_{24}(I,Q_1,Q_2)=I$, and
\[
\|\Phi^{-1}(\mathcal P_{24}(V(y),Q_1,Q_2))-(1,0)\|
\le8\|\exp(0,y)-(1,0)\|\le8\|y\|\le\frac12.
\]
The logarithm series gives
\[
E(y)=\operatorname{Im}\sum_{n=1}^{\infty}\frac{(-1)^{n+1}}{n}
\bigl[\Phi^{-1}(\mathcal P_{24}(V(y),Q_1,Q_2))-(1,0)\bigr]^n-4y_1(0,y_3,y_2).
\]
Differentiating this series three times and applying \eqref{eq:P24_remainder_product_derivatives} gives
\[
\begin{aligned}
\|D^3E(y)\|
&\le(2\cdot8^3)\sum_{n=1}^{\infty}2^{1-n}
+3(2\cdot8)(2\cdot8^2)\sum_{n=2}^{\infty}(n-1)2^{2-n}\\
&\qquad+(2\cdot8)^3\sum_{n=3}^{\infty}(n-1)(n-2)2^{3-n}
<10^5.
\end{aligned}
\]
Lemma~\ref{lem:local_correction_cancellation} gives $E(y)=O(\|y\|^3)$ as $y\to0$, hence $E(0)=DE(0)=D^2E(0)=0$. Taylor's formula gives the stated bounds on $E$, $DE$, and $D^2E$ throughout $B_0(1/16)$.
\end{proof}

The next lemma bounds the effect of replacing the anchors $Q_1,Q_2$ by nearby unitaries in the value of $\mathcal P_{24}$.

\begin{lemma}\label{lem:local_correction_error_control}
There exists an absolute constant $\epsilon_0>0$ such that, for every $0<\epsilon<\epsilon_0$, every $S\in\mathrm{SU}(2)\setminus\{-I\}$ satisfying $\|\log S\|\le100\epsilon$, every $c\in(0,1]$, and every $C_1,C_2\in\mathrm{SU}(2)$ satisfying $\|C_1-Q_1\|,\|C_2-Q_2\|\le c\epsilon$, we have
\begin{align}\label{eq:P24_error_control}
\|\mathcal P_{24}(S,C_1,C_2)-\mathcal P_{24}(S,Q_1,Q_2)\|
=O(c\epsilon^3).
\end{align}
The implied constant is absolute.
\end{lemma}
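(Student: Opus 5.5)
The plan is to reduce the statement to a finite algebraic cancellation computation. First I would establish the cancellations (a)–(c) below, then carry out a Taylor estimate.

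\emph{Reparametrize the perturbations.} For $\|C_j-Q_j\|\le c\epsilon$, write $C_j=V(a_j)Q_j$ with $a_j\coloneqq\log(C_jQ_j^{-1})$. Then $\|a_j\|=O(c\epsilon)$, and $S=V(s)$ with $\|s\|\le100\epsilon$.

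\emph{Move the anchors to the right.} As in the proof of Lemma~\ref{lem:local_correction_cancellation}, I would repeatedly use $Q_jU=(Q_jUQ_j^{-1})Q_j$ to push every anchor to the right end of \eqref{eq:local_correction_sequence}. This gives $\mathcal P_{24}(S,C_1,C_2)=\prod_{k=1}^{20}V(R_k x_k)\cdot\mathcal P_{24}(I,Q_1,Q_2)$. Here $\mathcal P_{24}(I,Q_1,Q_2)=I$. Each $x_k$ is one of $s,a_1,a_2$, and each $R_k$ is a signed coordinate flip, namely the rotation induced by conjugation by a product of $Q_1,Q_2$ (Lemma~\ref{fact:quaternion-rotation}). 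There are eight $s$-factors and twelve $a$-factors, and the positions and flips are read directly off \eqref{eq:local_correction_sequence}. Setting $a_1=a_2=0$ recovers the product from Lemma~\ref{lem:local_correction_cancellation}.

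\emph{Apply BCH and isolate the terms involving $a$.} I would apply the Baker--Campbell--Hausdorff expansion to the product of twenty factors, each $O(\epsilon)$-close to $I$:
\[
\log\prod_k V(y_k)=\sum_k y_k+\sum_{i<k}y_i\times y_k+O\Bigl(\bigl(\textstyle\sum_k\|y_k\|\bigr)^3\Bigr).
\]
In quaternion form the commutator is a cross product, with the normalization coming from Lemma~\ref{fact:quaternion-rotation}. The cubic remainder is controlled by the same product-rule and log-series estimates as in Lemma~\ref{lem:local_correction_remainder}. Subtracting the $a=0$ expression, the difference $\log\mathcal P_{24}(S,C_1,C_2)-\log\mathcal P_{24}(S,Q_1,Q_2)$ consists of the following pieces:
\begin{itemize}
\item a linear term $\sum_{k:\,x_k=a_j}R_ka_j$;
\item bilinear $a$--$a$ cross terms, of size $O(c^2\epsilon^2)$;
\item bilinear $s$--$a$ cross terms, of size $O(c\epsilon^2)$;
\item cubic terms that contain at least one $a$.
\end{itemize}
Since $c\le1$, every such cubic term is $O(c\epsilon^3)$. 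This is exactly the bound in \eqref{eq:P24_error_control} once the lower-order pieces are shown to vanish.

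\emph{Verify the three cancellations.} It therefore suffices to check:
\begin{itemize}
\item[(a)] for each $j$, the six signed flips $R_k$ acting on $a_j$ sum to zero;
\item[(b)] the bilinear form $\sum_{i<k}(\cdot)\times(\cdot)$ restricted to $a$--$a$ pairs vanishes identically;
\item[(c)] the $s$--$a$ bilinear terms vanish identically.
\end{itemize}
Each of these is a finite computation with $3\times3$ diagonal sign matrices and the cross product, in the same style as the table in Lemma~\ref{lem:local_correction_cancellation}. For (a), the anchor pairs $(Q_1,Q_2)$ appear in a balanced pattern, so each $C_j$ occurrence is paired with one whose conjugating word differs by an anticommuting Pauli, which gives the sign cancellation. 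Conjugation by $Q_1Q_2$ versus $Q_2Q_1$ differs only by a global sign, which does not affect the induced rotation.

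\emph{Main obstacle.} I expect steps (b) and especially (c) to be the hard part, since they are the genuine content of the lemma. My first move would be to verify them with a mechanical tabulation of the 20 signed factors.

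If (c) fails for arbitrary independent $a_1,a_2$, I would fall back on a weaker claim. In that case I would show that the $s$--$a$ and $a$--$a$ terms are bounded by the cubic scale after using the explicit structure $\|s\|\le100\epsilon$ together with a second application of the reordering trick. This means grouping the sequence into the four blocks of Lemma~\ref{lem:local_correction_cancellation}, each of the form (conjugate of $S$)$\cdot$(conjugate of $S$), and checking that each block's first-order dependence on $a$ cancels within the block.
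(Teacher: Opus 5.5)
Your plan is the paper's argument. The paper writes $C_j=Q_j\Phi(\exp(0,\alpha_j))$ and asserts, without displaying the computation you propose to tabulate, that every term of total degree at most two in $(s,\alpha_1,\alpha_2)$ is independent of $\alpha_1,\alpha_2$. It then bounds the surviving monomials (at least one $\alpha$, total degree at least three) by the majorant $\sum_{k\ge3}[((800+32c)\epsilon)^k-(800\epsilon)^k]/k!=O(c\epsilon^3)$. Expanding the quaternion product directly is slightly cleaner than BCH, because it isolates the $a$-dependent part of the cubic remainder automatically. Your route also works once you show that the $a$-derivative of the degree-$\ge3$ BCH remainder is $O(\epsilon^2)$ and integrate over $\|a\|=O(c\epsilon)$.

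Several details need correcting.

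\begin{itemize}
\item \emph{Factor count.} After the anchors are pushed to the right there are $24$ near-identity factors, since each argument of $\mathcal P_{24}$ occurs eight times. That means eight $s$-factors and eight factors for each $a_j$, not $20$, $12$ and $6$.

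\item \emph{Justification of (a).} Your stated reason is wrong: two conjugators differing by one anchor do not cancel, e.g.\ $I+R_{Q_1}=\mathrm{diag}(2,0,0)$. The actual reason is that, for each $j$, the anchor prefixes in front of the eight occurrences of $C_j$ are $\pm I,\pm Q_1,\pm Q_2,\pm Q_1Q_2$, each exactly twice. Then (a) follows from $I+R_{Q_1}+R_{Q_2}+R_{Q_1Q_2}=0$.

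\item \emph{Steps (b) and (c).} The tabulation does succeed. Each within-family sum $\sum_{k<k'}R_ka_j\times R_{k'}a_j$ vanishes identically. For each pair of distinct families, the sign matrix $K_{mn}=\sum\operatorname{sgn}(k'-k)(R_k)_{mm}(R_{k'})_{nn}$ comes out diagonal. Only off-diagonal entries enter a cross product, so no bilinear term survives.

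\item \emph{Fallback plan.} This is not a real alternative. If the $s$--$a$ bilinear terms did not vanish identically, the difference would be of order $c\epsilon^2$ for suitable $s,a$, and the lemma would be false.
\end{itemize}
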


\begin{proof}
Write $s\coloneqq\log S$ and $\alpha_j\coloneqq\log(Q_j^{-1}C_j)$ for $j\in\{1,2\}$. Then $C_j=Q_j\Phi(\exp(0,\alpha_j))$, $\|\alpha_j\|\le2\|C_j-Q_j\|\le2c\epsilon$ for $\epsilon<\epsilon_0\le1$, and
\begin{equation}\label{eq:P24_input_series}
\Phi^{-1}(S)=\sum_{k=0}^{\infty}\frac{(0,s)^k}{k!},
\qquad
\Phi^{-1}(C_j)=\Phi^{-1}(Q_j)\sum_{k=0}^{\infty}\frac{(0,\alpha_j)^k}{k!}.
\end{equation}
Substituting these series into \eqref{eq:local_correction_sequence} and collecting terms of degree at most two gives
\begin{equation}\label{eq:P24_joint_expansion}
\Phi^{-1}\!\left(\mathcal P_{24}(S,C_1,C_2)\right)
=\bigl(1,4s_1(0,s_3,s_2)\bigr)
+O\!\left((\|s\|+\|\alpha_1\|+\|\alpha_2\|)^3\right).
\end{equation}
Subtracting $\Phi^{-1}(\mathcal P_{24}(S,Q_1,Q_2))$ removes all terms independent of $\alpha_1,\alpha_2$, including everything of degree at most two. To bound the difference term by term, apply $\Phi^{-1}$ to \eqref{eq:local_correction_sequence} and expand each factor using \eqref{eq:P24_input_series}. Let $T_{a,b}$ be the product of one chosen summand from each series, in the original order, where the lists $a=(a_1,\ldots,a_8)$ and $b=(b_1,\ldots,b_{16})$ record the powers chosen at the eight occurrences of $\Phi^{-1}(S)$ and the sixteen occurrences of $\Phi^{-1}(C_1),\Phi^{-1}(C_2)$, respectively, from left to right.
For example, choosing the linear summand for the first $\Phi^{-1}(S)$, the quadratic summand for the last $\Phi^{-1}(C_2)$, and the constant summands elsewhere gives
\[
T_{(1,0,\ldots,0),(0,\ldots,0,2)}
=(0,s)\frac{(0,\alpha_2)^2}{2!},
\]
since Lemma~\ref{lem:local_correction_cancellation} with $S=I$ gives $\mathcal P_{24}(I,Q_1,Q_2)=I$.

The norm of $T_{a,b}$ is the product of the norms of its factors. Using $\|s\|\le100\epsilon$, $\|\alpha_j\|\le2c\epsilon$, and $\|\Phi^{-1}(Q_j)\|=1$, we obtain
\[
\|T_{a,b}\|
\le\frac{(100\epsilon)^{\sum_i a_i}(2c\epsilon)^{\sum_i b_i}}
{\prod_{i=1}^{8}a_i!\,\prod_{i=1}^{16}b_i!}.
\]
The remaining terms have $\sum_i b_i\ge1$ and $\sum_i a_i+\sum_i b_i\ge3$. Applying the triangle inequality after the cancelations then gives
\begin{equation}\label{eq:P24_series_term_bound}
\begin{aligned}
\|\mathcal P_{24}(S,C_1,C_2)-\mathcal P_{24}(S,Q_1,Q_2)\|
&=\Biggl\|\sum_{\substack{\sum_i b_i\ge1\\\sum_i a_i+\sum_i b_i\ge3}}T_{a,b}\Biggr\|
\le\sum_{\substack{\sum_i b_i\ge1\\\sum_i a_i+\sum_i b_i\ge3}}\|T_{a,b}\|\\
&\le\sum_{\substack{\sum_i b_i\ge1\\\sum_i a_i+\sum_i b_i\ge3}}
\frac{(100\epsilon)^{\sum_i a_i}(2c\epsilon)^{\sum_i b_i}}
{\prod_{i=1}^{8}a_i!\,\prod_{i=1}^{16}b_i!}\\
&=\sum_{k=3}^{\infty}\frac{((800+32c)\epsilon)^k-(800\epsilon)^k}{k!}
=O(c\epsilon^3).
\end{aligned}
\end{equation}
\end{proof}

\subsubsection{Local surjectivity}

We now prove local surjectivity of $\Psi_S$ and existence of the fixed point $h$ defined above. We first bound the derivatives of the approximation obtained from the quadratic term in Lemma~\ref{lem:local_correction_cancellation}. Let $R_x$ be the rotation induced by $\exp(0,x)$ from Lemma~\ref{fact:quaternion-rotation}, and define
\[
\overline{\Psi}_S(x)\coloneqq\frac{4}{\|s\|^2}\Bigl[
R_x\bigl((R_x^{-1}s)_1(0,(R_x^{-1}s)_3,(R_x^{-1}s)_2)\bigr)-s_1(0,s_3,s_2)\Bigr].
\]

\begin{lemma}\label{lem:local_correction_derivatives}
For every $\epsilon>0$ and every $(e_0,\epsilon)$-directed unitary $S$, the derivative $D\overline{\Psi}_S(0)$ is invertible, with $\|(D\overline{\Psi}_S(0))^{-1}\|\le3$ and $\|D^2\overline{\Psi}_S(x)\|\le72$ for every $x\in B_0(1)$.
\end{lemma}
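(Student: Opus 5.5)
The plan is to reduce everything to an explicit $3\times3$ matrix depending only on the unit direction $\hat s\coloneqq s/\|s\|$, and then handle the directedness constraint by perturbation. Write $q(v)\coloneqq v_1(0,v_3,v_2)$, a homogeneous quadratic map $\R^3\to\R^3$. Then
\[
\overline\Psi_S(x)=4\bigl[R_x\,q(R_x^{-1}\hat s)-q(\hat s)\bigr],
\]
so $\overline\Psi_S$ depends on $S$ only through $\hat s$. By Definition~\ref{def:directed_unitary}, $\hat s$ satisfies $\|\hat s-e_0\|\le1/20$, and the value of $\epsilon$ plays no role. This is why the lemma holds for every $\epsilon>0$.

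\textbf{First derivative at $0$.} By Lemma~\ref{fact:quaternion-rotation} we have $R_x=I+2[x]_\times+O(\|x\|^2)$, and likewise $R_x^{-1}=I-2[x]_\times+O(\|x\|^2)$. Differentiating at $x=0$ gives
\[
D\overline\Psi_S(0)\,x=8\bigl[x\times q(\hat s)-Dq(\hat s)(x\times\hat s)\bigr],
\qquad
Dq(v)w=w_1(0,v_3,v_2)+v_1(0,w_3,w_2).
\]
I will write this as $8M(\hat s)x$, where the matrix $M(v)$ is quadratic in $v$ with explicit entries. At $v=e_0=(1,1,2)/\sqrt6$ we have $q(e_0)=(0,2,1)/6$, so $M(e_0)$ can be written out explicitly. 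I will then verify directly that its smallest singular value is comfortably above $1/24$, for instance by computing $\det M(e_0)$ and bounding $\|M(e_0)^{-1}\|$ through the adjugate.

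\textbf{Perturbation to nearby directions.} Next I bound $\|M(\hat s)-M(e_0)\|$ for $\|\hat s-e_0\|\le1/20$. Each entry is a quadratic form in $v$ with small integer coefficients, so a crude bound of the form $\|M(v)-M(w)\|\le K\|v-w\|$ holds on the unit sphere. A Neumann-series argument then gives $\sigma_{\min}(8M(\hat s))\ge1/3$, which is the claim $\|(D\overline\Psi_S(0))^{-1}\|\le3$. This numerical step is the main obstacle, since the constant $3$ leaves limited room. If the crude Lipschitz bound is too lossy, I will instead bound $\sigma_{\min}$ over the whole cap $\{\|v-e_0\|\le1/20\}$ directly, using a lower bound on $|\det M(v)|$ together with an upper bound on $\|M(v)\|$.

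\textbf{Second derivative on $B_0(1)$.} Here I apply the chain and product rules to $x\mapsto R_x\,q(R_x^{-1}\hat s)$. The needed ingredients are:
\begin{itemize}
\item bounds on the first and second derivatives of $x\mapsto R_x$ and $x\mapsto R_x^{-1}$ for $\|x\|\le1$, which follow from the quaternion formula in Lemma~\ref{fact:quaternion-rotation} because $R_x$ is the rotation by angle $2\|x\|$ about $x$ (each derivative contributes at most a factor $2$ and $4$ respectively);
\item $\|q(v)\|\le\|v\|^2$, $\|Dq\|\le2\|v\|$ and $\|D^2q\|\le2$, with $\|R_x^{-1}\hat s\|=1$ throughout.
\end{itemize}
Summing the resulting terms and multiplying by the prefactor $4$ should give a total of at most $72$.
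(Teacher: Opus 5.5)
Your reduction to $\hat s$ and your formula $D\overline\Psi_S(0)=8M(\hat s)$ are correct; the resulting matrix is exactly the paper's. The genuine gap is in the second-derivative step: the ingredients you list give $144$, not $72$, and the constant $72$ has no slack. With $y=R_x^{-1}\hat s$, the product and chain rules give
\[
\|D^2\overline\Psi_S(x)\|\le4\bigl(\|D^2R_x\|\,\|q(y)\|+2\|DR_x\|\,\|Dq(y)\|\,\|DR_x^{-1}\|+\|D^2q\|\,\|DR_x^{-1}\|^2+\|Dq(y)\|\,\|D^2R_x^{-1}\|\bigr).
\]
With $\|DR_x^{\pm1}\|\le2$, $\|D^2R_x^{\pm1}\|\le4$ and your bounds $\|q\|\le1$, $\|Dq\|\le2$, $\|D^2q\|\le2$, this is $4(4+16+8+8)=144$. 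Each of your three bounds on $q$ is a factor $2$ too weak. You need the sharp bounds, which use the specific form of $q$: $\|q(v)\|=|v_1|\sqrt{v_2^2+v_3^2}\le\frac12\|v\|^2$ by AM--GM, and $\|Dq(v)w\|=\|w_1(v_3,v_2)+v_1(w_3,w_2)\|\le\|v\|\|w\|$, $\|D^2q[w,w']\|\le\|w\|\|w'\|$ by Cauchy--Schwarz. Together with the rotation bounds $2$ and $4$ (justified, e.g., from $DR_x[v]y=2\int_0^1R_{(1-t)x}(v\times R_{tx}y)\,\d t$), these give $4(2+8+4+4)=72$, which is exactly the paper's count.

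For the inverse bound, your perturbative route differs from the paper's and can be made to work, but not with a crude Lipschitz constant. The characteristic polynomial of $N^TN$ for $N=6M(e_0)$ is $\lambda^3-42\lambda^2+468\lambda-576$, giving $\sigma_{\min}(M(e_0))\approx0.197$. The margin over the required $1/24$ is therefore about $0.156$ at radius $1/20$, so you need an effective constant below about $3.1$. Generic estimates such as $\bigl(\sum_{ij}\|\nabla M_{ij}\|^2\bigr)^{1/2}=\sqrt{11+7v_1^2}\approx3.5$ fail. What works is the exact expansion $M(\hat s)=M(e_0)+DM(e_0)[\delta]+M(\delta)$, combined with the largest eigenvalue (about $5.6$) of the quadratic form $\delta\mapsto\|DM(e_0)[\delta]\|_F^2$ and the bound $\|M(\delta)\|\le\sqrt2\|\delta\|^2$.

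The paper avoids perturbation altogether. On the whole cap it uses the closed forms $|\det M(v)|=2v_1^2(v_3^2-v_2^2)(1-2v_1^2)$ and $\|M(v)\|_F^2=1+v_1^2$, the crude bounds $1/3<v_1<1/2$, $v_3-v_2>3/10$, $v_3+v_2>1$, and the estimate $\|A^{-1}\|=\sigma_1\sigma_2/|\det A|\le\|A\|_F^2/(2|\det A|)$. If you take your fallback, be aware that the naive bound $\sigma_{\min}\ge|\det|/\|M\|_F^2$ with those crude estimates only yields $\|(D\overline\Psi_S(0))^{-1}\|\le75/16$. Bringing it under $3$ requires the AM--GM step $\sigma_1\sigma_2\le\frac12(\sigma_1^2+\sigma_2^2)$, or a sharper determinant bound.
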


\begin{proof}
By Lemma~\ref{fact:quaternion-rotation}, $R_xv=v+2x\times v+O(\|x\|^2\|v\|)$ for every $v\in\mathbb R^3$. Therefore,
\begin{equation}\label{eq:P24_conjugation_derivative}
D\overline{\Psi}_S(0)
=\frac{8}{\|s\|^2}
\begin{pmatrix}
0&s_1s_2&-s_1s_3\\
-2s_1s_2&s_1^2-s_3^2&s_2s_3\\
2s_1s_3&-s_2s_3&s_2^2-s_1^2
\end{pmatrix}.
\end{equation}
Since $S$ is $(e_0,\epsilon)$-directed, $\|s/\|s\|-e_0\|\le1/20$, which implies $1/3<s_1/\|s\|<1/2$, $(s_3-s_2)/\|s\|>3/10$, and $(s_3+s_2)/\|s\|>1$. Consequently,
\[
|\det D\overline{\Psi}_S(0)|
=\frac{1024s_1^2(s_3^2-s_2^2)(\|s\|^2-2s_1^2)}{\|s\|^6}
>\frac{256}{15}.
\]
Let $\sigma_1\ge\sigma_2\ge\sigma_3>0$ be the singular values of $D\overline{\Psi}_S(0)$. Summing the squares of the entries in \eqref{eq:P24_conjugation_derivative} gives $\sigma_1^2+\sigma_2^2+\sigma_3^2=64(1+s_1^2/\|s\|^2)<80$. Hence,
\[
\|(D\overline{\Psi}_S(0))^{-1}\|
=\frac{\sigma_1\sigma_2}{|\det D\overline{\Psi}_S(0)|}
\le\frac{\sigma_1^2+\sigma_2^2}{2|\det D\overline{\Psi}_S(0)|}
<3.
\]
To bound the second derivative, we may assume $\|s\|=1$, since $\overline{\Psi}_S$ depends only on $s/\|s\|$. Set $F(y)=y_1(0,y_3,y_2)$. For every unit vector $y$, direct calculation gives $\|F(y)\|\le1/2$, $\|DF(y)\|\le1$, and $\|D^2F(y)\|\le1$. For $y,v\in\mathbb R^3$, differentiating the rotation equation $\partial_tR_{tx}y=2x\times R_{tx}y$ gives
\[
DR_x[v]y=2\int_0^1 R_{(1-t)x}\bigl(v\times R_{tx}y\bigr)\,\d t.
\]
Rotations preserve norms, so
\begin{equation}\label{eq:rotation_derivative_bounds}
\|DR_x\|\le2,\qquad \|D^2R_x\|\le4.
\end{equation}
Since $R_x^{-1}=R_{-x}$, Eq.~\eqref{eq:rotation_derivative_bounds} also holds with $R_x$ replaced by $R_x^{-1}$. Applying the product and chain rules to $\overline{\Psi}_S(x)=4[R_xF(R_x^{-1}s)-F(s)]$ yields
\[
\|D^2\overline{\Psi}_S(x)\|
\le4\left(\tfrac12\|D^2R_x\|+2\|DR_x\|\,\|DR_x^{-1}\|+\|DR_x^{-1}\|^2+\|D^2R_x^{-1}\|\right)
\le72.
\]
\end{proof}

\begin{lemma}\label{lem:local_correction_coverage}
There exists an absolute constant $\epsilon_0>0$ such that, for every $0<\epsilon<\epsilon_0$ and every $(e_0,\epsilon)$-directed unitary $S$, the derivative $D\Psi_S(0)$ is invertible, with $\|(D\Psi_S(0))^{-1}\|\le5$ and $\|D^2\Psi_S(x)\|\le100$ for every $x\in B_0(1)$.
\end{lemma}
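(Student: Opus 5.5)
The plan is to compare $\Psi_S$ with its quadratic model $\overline{\Psi}_S$ and show that the difference has small first and second derivatives on $B_0(1)$. Then the bounds of Lemma~\ref{lem:local_correction_derivatives} transfer with constant losses: $\|(D\Psi_S(0))^{-1}\|\le5$ and $\|D^2\Psi_S\|\le100$.

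\textbf{Step 1: express $\Psi_S$ through $E$.} Conjugation commutes with the product, so
\[
\mathcal P_{24}(S,V(x)Q_1V(x)^{-1},V(x)Q_2V(x)^{-1})=V(x)\,\mathcal P_{24}(V(x)^{-1}SV(x),Q_1,Q_2)\,V(x)^{-1}.
\]
The logarithm is equivariant under conjugation: $\log(VWV^{-1})=R_V\log W$. By Lemma~\ref{fact:quaternion-rotation}, $\log(V(x)^{-1}SV(x))=R_x^{-1}s$. Writing $y(x)\coloneqq R_x^{-1}s$ and $F(y)=y_1(0,y_3,y_2)$, the definition \eqref{eq:P24_remainder} gives
\[
\log\mathcal P_{24}(S,V(x)Q_1V(x)^{-1},V(x)Q_2V(x)^{-1})=R_x\bigl(4F(R_x^{-1}s)+E(R_x^{-1}s)\bigr).
\]
Subtracting the value at $x=0$ and dividing by $\|s\|^2$ yields
\[
\Psi_S(x)=\overline{\Psi}_S(x)+\|s\|^{-2}\bigl[R_xE(R_x^{-1}s)-E(s)\bigr].
\]
This step needs $\|s\|\le100\epsilon<1/16$, so that Lemma~\ref{lem:local_correction_remainder} applies. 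Since $\|R_x^{-1}s\|=\|s\|$, every argument of $E$ stays in the ball $\|y\|=\|s\|$.

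\textbf{Step 2: bound the error term $G(x)\coloneqq\|s\|^{-2}[R_xE(R_x^{-1}s)-E(s)]$.} Differentiate with the product and chain rules, using $\|DR_x\|\le2$ and $\|D^2R_x\|\le4$ from \eqref{eq:rotation_derivative_bounds}, together with the same bounds for $R_x^{-1}$. For $y=R_x^{-1}s$, the map $x\mapsto y$ has first derivative $O(\|s\|)$ and second derivative $O(\|s\|)$. Lemma~\ref{lem:local_correction_remainder} gives $E(y)=O(\|s\|^3)$, $DE(y)=O(\|s\|^2)$, and $D^2E(y)=O(\|s\|)$. Every term of $DG$ and $D^2G$ then carries at least a factor $\|s\|^3$: examples are $D^2R\cdot E$, $DR\cdot DE\cdot Dy$, $DE\cdot D^2y$, and $D^2E\cdot(Dy)^2$. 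Hence
\[
\|DG(x)\|,\ \|D^2G(x)\|=O(\|s\|^3)/\|s\|^2=O(\epsilon)
\]
uniformly on $B_0(1)$, with an absolute implied constant.

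\textbf{Step 3: conclude.} For $\epsilon_0$ small, $\|DG(0)\|\le1/15$. Write $D\Psi_S(0)=D\overline{\Psi}_S(0)(I+M)$ with $\|M\|\le3\cdot(1/15)=1/5$. A Neumann series gives
\[
\|(D\Psi_S(0))^{-1}\|\le3/(1-1/5)=15/4\le5.
\]
Likewise $\|D^2\Psi_S\|\le72+O(\epsilon)\le100$.

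The main obstacle is Step 2. The naive bound $D^2E=O(\|y\|)$ looks too weak until one notes that the chain rule contributes a factor $\|s\|^2$ through $(Dy)^2$. One must also check that the $x$-dependence enters only through rotations, which have bounded derivatives on $B_0(1)$. This holds because the $D^kR_x$ bounds are uniform there.
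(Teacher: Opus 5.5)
Your proposal is correct and follows the paper's proof essentially step for step: the same common-conjugation identity reducing $\Psi_S-\overline{\Psi}_S$ to $\|s\|^{-2}[R_xE(R_x^{-1}s)-E(s)]$, the same chain-rule accounting (every term of the first and second derivatives carries a factor $\|s\|^3$ because $D_x(R_x^{-1}s)=O(\|s\|)$), and the same perturbation of the bounds in Lemma~\ref{lem:local_correction_derivatives}. Your Neumann-series step is just an equivalent form of the paper's estimate $\|D\Psi_S(0)v\|\ge\|D\overline{\Psi}_S(0)v\|-\|v\|/10\ge\|v\|/5$.
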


\begin{proof}
Take $\epsilon_0\le1/1600$. By the definition of $\mathcal P_{24}$ in Eq.~\eqref{eq:local_correction_sequence},
\begin{equation}\label{eq:P24_common_conjugation}
\mathcal P_{24}(S,V(x)Q_1V(x)^{-1},V(x)Q_2V(x)^{-1})
=V(x)\mathcal P_{24}(V(x)^{-1}SV(x),Q_1,Q_2)V(x)^{-1}.
\end{equation}
By Lemma~\ref{fact:quaternion-rotation},
\[
\log(V(x)^{-1}SV(x))=R_x^{-1}s,
\qquad \|R_x^{-1}s\|=\|s\|\le100\epsilon<1/16.
\]
Therefore, by the definition of $E$ in Eq.~\eqref{eq:P24_remainder}, we have
\[
\log\mathcal P_{24}(V(x)^{-1}SV(x),Q_1,Q_2)
=4(R_x^{-1}s)_1(0,(R_x^{-1}s)_3,(R_x^{-1}s)_2)+E(R_x^{-1}s).
\]
Together with \eqref{eq:P24_common_conjugation}, this leads to
\begin{equation}\label{eq:P24_conjugation_remainder}
\Psi_S(x)-\overline{\Psi}_S(x)
=\frac{R_xE(R_x^{-1}s)-E(s)}{\|s\|^2}.
\end{equation}
Differentiating the conjugation formula in Lemma~\ref{fact:quaternion-rotation} gives, for $x\in B_0(1)$,
\[
\|DR_x\|+\|D^2R_x\|=O(1),
\qquad
\|D_x(R_x^{-1}s)\|+\|D_x^2(R_x^{-1}s)\|=O(\|s\|).
\]
Lemma~\ref{lem:local_correction_remainder} and the chain rule give $\|E(R_x^{-1}s)\|+\|D_x[E(R_x^{-1}s)]\|=O(\|s\|^3)$ and
\[
\bigl\|D_x^2[E(R_x^{-1}s)]\bigr\|
\le\|D^2E(R_x^{-1}s)\|\,\|D_x(R_x^{-1}s)\|^2
+\|DE(R_x^{-1}s)\|\,\|D_x^2(R_x^{-1}s)\|
=O(\|s\|^3).
\]
Differentiating \eqref{eq:P24_conjugation_remainder} now gives
\begin{equation}\label{eq:P24_derivative_comparison}
\|D\Psi_S(0)-D\overline{\Psi}_S(0)\|=O(\|s\|),
\qquad
\sup_{x\in B_0(1)}\|D^2\Psi_S(x)-D^2\overline{\Psi}_S(x)\|=O(\|s\|),
\end{equation}
with absolute implied constants.
We choose $\epsilon_0$ so that the two bounds in Eq.~\eqref{eq:P24_derivative_comparison} are at most $1/10$ and $1$, respectively, which is possible because $\|s\|\le100\epsilon<100\epsilon_0$. Lemma~\ref{lem:local_correction_derivatives} then gives
\[
\|D\Psi_S(0)v\|\ge\|D\overline{\Psi}_S(0)v\|-\frac{\|v\|}{10}\ge\frac{\|v\|}{5},
\qquad v\in\mathbb R^3.
\]
Consequently, $\|(D\Psi_S(0))^{-1}\|\le5$ and $\|D^2\Psi_S(x)\|\le100$ for $\|x\|\le1$.
\end{proof}

\begin{proof}[Proof of Proposition~\ref{prop:local_correction_net}]
We first establish the existence and uniqueness of $h$. Choose $0<c\le1/40000$ and $\epsilon_0>0$ small enough for Lemmas~\ref{lem:local_correction_cancellation}--\ref{lem:local_correction_coverage} to apply. Set $u\coloneqq(\log U-\log\mathcal P_{24}(S,Q_1,Q_2))/\|s\|^2$, which satisfies
\[
\|u\|\le\frac{2\|U-\mathcal P_{24}(S,Q_1,Q_2)\|}{\|s\|^2}\le2c.
\]
For every $x\in B_0(1/1000)$, the derivative bounds in Lemma~\ref{lem:local_correction_coverage} give
\[
\|I-(D\Psi_S(0))^{-1}D\Psi_S(x)\|
\le5\|D\Psi_S(x)-D\Psi_S(0)\|
\le500\|x\|\le\frac12.
\]
Since $\Psi_S(0)=0$, we further have
\[
\left\|x+(D\Psi_S(0))^{-1}(u-\Psi_S(x))\right\|
\le5\|u\|+\frac12\|x\|\le\frac{3}{4000}<\frac{1}{1000}.
\]
Moreover, $\forall x,y\in B_0(1/1000)$, we have
\begin{equation}\label{eq:P24_contraction_estimate}
\begin{aligned}
&\left\|x-y-(D\Psi_S(0))^{-1}\bigl(\Psi_S(x)-\Psi_S(y)\bigr)\right\|\\
&\qquad=\left\|\int_0^1\left[I-(D\Psi_S(0))^{-1}D\Psi_S((1-t)y+tx)\right](x-y)\,\d t\right\|\\
&\qquad\le\int_0^1\frac12\|x-y\|\,\d t=\frac12\|x-y\|.
\end{aligned}
\end{equation}
Therefore, applying Lemma~\ref{lem:banach_fixed_point} with $D=B_0(1/1000)$ and $q=1/2$ gives a unique fixed point $h$ in this ball. Its fixed-point equation is $\Psi_S(h)=u$, which gives $\mathcal P_{24}(S,C_1,C_2)=U$.

For the perturbation bound, suppose $\|\widetilde C_j-C_j\|\le c\epsilon$ for $j=1,2$. Then, we have $\|V(h)^{-1}\widetilde C_jV(h)-Q_j\|=\|\widetilde C_j-C_j\|\le c\epsilon$ and $\|\log(V(h)^{-1}SV(h))\|=\|s\|\le100\epsilon$. Lemma~\ref{lem:local_correction_error_control} applied to these matrices gives, after conjugation by $V(h)$,
\begin{equation}\label{eq:P24_anchor_approximation_error}
\|\mathcal P_{24}(S,\widetilde C_1,\widetilde C_2)-U\|
=O(c\epsilon^3).
\end{equation}
Choosing a small enough $c$ makes the bound in Eq.~\eqref{eq:P24_anchor_approximation_error} at most $\epsilon^3$.

In the rest of the proof, we discuss the time complexity for computing $h$. To approximate $h$, we iterate the map in Eq.~\eqref{eq:P24_fixed_point_update}:
\[
h_0=0,\qquad
h_{k+1}=h_k+(D\Psi_S(0))^{-1}\bigl(u-\Psi_S(h_k)\bigr),
\qquad k\ge0.
\]
Since $\Psi_S(h)=u$, Eq.~\eqref{eq:P24_contraction_estimate} gives
\[
\|h_{k+1}-h\|
=\left\|h_k-h-(D\Psi_S(0))^{-1}\bigl(\Psi_S(h_k)-\Psi_S(h)\bigr)\right\|
\le\frac12\|h_k-h\|.
\]
Consequently, we have $\|h_k-h\|\le2^{-k}\|h\|$ and
\[
\|V(h_k)Q_jV(h_k)^{-1}-C_j\|
\le2\|V(h_k)-V(h)\|
\le2\|h_k-h\|
\le2^{1-k},
\qquad j=1,2.
\]
Taking $k=\lceil\log_2(2/(c\epsilon))\rceil=O(\log(1/\epsilon))$ gives $\|V(h_k)Q_jV(h_k)^{-1}-C_j\|\le c\epsilon$ for $j=1,2$. Applying Eq.~\eqref{eq:P24_anchor_approximation_error} with $\widetilde C_j=V(h_k)Q_jV(h_k)^{-1}$ therefore gives
\[
\left\|\mathcal P_{24}\bigl(S,V(h_k)Q_1V(h_k)^{-1},V(h_k)Q_2V(h_k)^{-1}\bigr)-U\right\|
\le\epsilon^3.
\]
The initial computation of $u$ and $(D\Psi_S(0))^{-1}$ and each subsequent update take time $\poly\log(1/\epsilon)$, giving the claimed runtime.
\end{proof}

\subsection{The compilation algorithms}\label{sec:full_compilation}

We now show how to use gates from $G$ to approximate a target unitary to the desired accuracy. We start with a coarse approximation and improve it by appending a correction constructed from a directed unitary. Algorithm~\ref{alg:full_compile} uses Proposition~\ref{prop:local_correction_net} to construct the correction, while Algorithm~\ref{alg:directed_unitary} uses Proposition~\ref{prop:directed_stick} to supply the directed unitary. Both algorithms obtain the coarser approximations they need recursively. Throughout, we fix constants $0<c\le1$ and $0<\epsilon_0<\min\{c/2,1/1000\}$ small enough for Propositions~\ref{prop:directed_stick} and~\ref{prop:local_correction_net} to apply.

\begin{algorithm}[htbp!]
\caption{\textsc{FullCompile}$(U,\epsilon)$}\label{alg:full_compile}
\begin{algorithmic}[1]
\Require target $U\in\SU(2)$; accuracy $0<\epsilon<1$
\Ensure a $G$-word $w$ with $\|W-U\|\le\epsilon$
\lIf{$\epsilon\ge\epsilon_0^3$}{\Return $\epsilon_0^3\text{-}\textsc{ConstantNet}(U)$}
\State $w_S\gets\Call{DirectedUnitary}{e_0,\epsilon^{1/3}}$
\State $S\gets\val(w_S)$
\State $w_0\gets\Call{FullCompile}{U\mathcal P_{24}(S,Q_1,Q_2)^{-1},c\epsilon^{2/3}}$
\State Compute $h$ using Proposition~\ref{prop:local_correction_net} with inputs $S,W_0^{-1}U,\epsilon^{1/3}$.
\State $V\gets\Phi(\exp(0,h))$
\For{$j\in\{1,2\}$}
  \State $w_j\gets\Call{FullCompile}{VQ_jV^{-1},c\epsilon^{1/3}}$
\EndFor
\State \Return $w_0\mathcal P_{24}(w_S,w_1,w_2)$
\end{algorithmic}
\end{algorithm}

\begin{algorithm}[htbp!]
\caption{\textsc{DirectedUnitary}$(e,\epsilon)$}\label{alg:directed_unitary}
\begin{algorithmic}[1]
\Require unit vector $e\in\mathbb R^3$; scale $0<\epsilon<\epsilon_0$
\Ensure a $G$-word $w$ such that $W$ is $(e,\epsilon)$-directed
\State Compute $A$ from $e,\epsilon$ using Proposition~\ref{prop:directed_stick}.
\State $w_A\gets\Call{FullCompile}{A,\epsilon^{2/15}/100}$
\State Compute $B$ from $e,\epsilon,\widetilde A=W_A$ using Proposition~\ref{prop:directed_stick}.
\State $w_B\gets\Call{FullCompile}{B,\epsilon^{1/5}/100}$
\State \Return $\Pthirtytwo(w_Bw_A,w_Aw_B)$
\end{algorithmic}
\end{algorithm}

Write $L_{\mathrm{full}}(\epsilon),T_{\mathrm{full}}(\epsilon)$ for the worst-case output length and runtime of Algorithm~\ref{alg:full_compile} for $0<\epsilon<1$, and $L_{\mathrm{dir}}(\epsilon),T_{\mathrm{dir}}(\epsilon)$ for those of Algorithm~\ref{alg:directed_unitary} for $0<\epsilon<\epsilon_0$. For the base case $\epsilon_0^3\le\epsilon<1$, Lemma~\ref{lem:compilation_base_case} gives $L_{\mathrm{full}}(\epsilon)=O(1)$ and $T_{\mathrm{full}}(\epsilon)=O(1)$. The next two lemmas give recursive bounds on the output lengths and runtime of both algorithms.

\begin{lemma}\label{lem:directed_unitary_recursion}
Let $e\in\mathbb R^3$ be a unit vector and $0<\epsilon<\epsilon_0$. Algorithm~\ref{alg:directed_unitary} returns a $G$-word $w$ such that $W$ is $(e,\epsilon)$-directed. Its length and runtime satisfy
\[
\begin{aligned}
L_{\mathrm{dir}}(\epsilon)
&\le32L_{\mathrm{full}}(\epsilon^{2/15}/100)
+32L_{\mathrm{full}}(\epsilon^{1/5}/100),\\
T_{\mathrm{dir}}(\epsilon)
&\le T_{\mathrm{full}}(\epsilon^{2/15}/100)
+T_{\mathrm{full}}(\epsilon^{1/5}/100)
+L_{\mathrm{dir}}(\epsilon)\poly\log(1/\epsilon).
\end{aligned}
\]
\end{lemma}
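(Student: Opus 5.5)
The plan is to read the lemma off Algorithm~\ref{alg:directed_unitary}, taking correctness of \textsc{FullCompile} as given. That is legitimate because the recursive calls are at coarser precision: $\epsilon^{2/15}/100$ and $\epsilon^{1/5}/100$ both exceed $\epsilon$ for $\epsilon<\epsilon_0$. In the overall argument correctness of both algorithms is established by a joint induction on the precision, and here we only need the inductive hypothesis that \textsc{FullCompile}$(U',\epsilon')$ returns a $G$-word within $\epsilon'$ of $U'$ for these coarser $\epsilon'$.

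For correctness, first invoke Proposition~\ref{prop:directed_stick} with $e$ and $\epsilon$. This yields $A$, which is computable in $O(1)$ time. The call $w_A\gets\textsc{FullCompile}(A,\epsilon^{2/15}/100)$ gives $\widetilde A=W_A$ with $\|\widetilde A-A\|\le\epsilon^{2/15}/100$, so the proposition then supplies $B$, computed from $e,\epsilon,\widetilde A$. The second call gives $\widetilde B=W_B$ with $\|\widetilde B-B\|\le\epsilon^{1/5}/100$. The proposition now guarantees that $\Pthirtytwo(\widetilde B\widetilde A,\widetilde A\widetilde B)$ is $(e,\epsilon)$-directed. Since $\val$ is multiplicative under concatenation of words, this matrix equals $\val(\Pthirtytwo(w_Bw_A,w_Aw_B))$, which proves the correctness claim. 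The one point requiring care is the order of quantifiers in Proposition~\ref{prop:directed_stick}: $B$ may depend on $\widetilde A$, and this is exactly why the algorithm compiles $A$ before computing $B$.

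For the length bound, count occurrences in \eqref{eq:P32}. The word $\Pthirtytwo(Y_1,Y_2)$ consists of $8$ blocks of $4$ letters, so it has $32$ letters, and each letter $Y_1=w_Bw_A$ or $Y_2=w_Aw_B$ contains exactly one copy of $w_A$ and one of $w_B$. Hence the output contains $32$ copies of $w_A$ and $32$ of $w_B$. Because $L_{\mathrm{full}}$ is a worst case over all targets, this gives
\[
L_{\mathrm{dir}}(\epsilon)\le 32L_{\mathrm{full}}(\epsilon^{2/15}/100)+32L_{\mathrm{full}}(\epsilon^{1/5}/100).
\]
The step most easily botched is this letter count, i.e. checking that each block indeed has four letters; it is immediate from \eqref{eq:P32}.

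For the runtime, the two recursive calls cost at most $T_{\mathrm{full}}(\epsilon^{2/15}/100)+T_{\mathrm{full}}(\epsilon^{1/5}/100)$. Computing $A$ and $B$ takes $O(1)$ arithmetic operations by Proposition~\ref{prop:directed_stick}. This requires $\widetilde A=W_A$ as a matrix, obtained by multiplying out $w_A$ at cost $\len(w_A)\poly\log(1/\epsilon)\le L_{\mathrm{dir}}(\epsilon)\poly\log(1/\epsilon)$, with precision $\poly\log(1/\epsilon)$ bits per entry. Writing the output word takes $O(L_{\mathrm{dir}}(\epsilon))$ time. Summing these costs gives the stated bound.
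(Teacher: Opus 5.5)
Your proposal is correct and follows the paper's proof essentially step for step. Correctness comes from Proposition~\ref{prop:directed_stick}, with $B$ computed only after $\widetilde A=W_A$ is known. The length bound comes from counting 32 copies each of $w_A$ and $w_B$ in $\Pthirtytwo(w_Bw_A,w_Aw_B)$. The runtime is the two recursive calls, plus $O(1)$ work for $A,B$, plus $\len(w)\poly\log(1/\epsilon)\le L_{\mathrm{dir}}(\epsilon)\poly\log(1/\epsilon)$ for evaluating $W_A$ and writing $w$. Your remark that correctness of \textsc{FullCompile} at the coarser precisions comes from a joint induction is a useful clarification that the paper leaves implicit.
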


\begin{proof}
The correctness claim of Algorithm~\ref{alg:directed_unitary} follows from Proposition~\ref{prop:directed_stick}. Note that the output $w=\Pthirtytwo(w_Bw_A,w_Aw_B)$ contains 32 copies of each of $w_A$ and $w_B$, so
\[
\len(w)=32\len(w_A)+32\len(w_B)
\le32L_{\mathrm{full}}(\epsilon^{2/15}/100)+32L_{\mathrm{full}}(\epsilon^{1/5}/100).
\]
For the runtime, the calls producing $w_A$ and $w_B$ cost at most $T_{\mathrm{full}}(\epsilon^{2/15}/100)+T_{\mathrm{full}}(\epsilon^{1/5}/100)$ in total. Computing $A$ and $B$ takes $O(1)$ time by Proposition~\ref{prop:directed_stick}, while evaluating $W_A$ and computing $w$ takes $O(\len(w)\poly\log(1/\epsilon))$ time. Since $\len(w)\le L_{\mathrm{dir}}(\epsilon)$, adding these costs gives the claimed bound.
\end{proof}

\begin{lemma}\label{lem:full_compile_recursion}
Let $U\in\mathrm{SU}(2)$ and $0<\epsilon<\epsilon_0^3$. Algorithm~\ref{alg:full_compile} returns a $G$-word $w$ satisfying $\|W-U\|\le\epsilon$. Its length and runtime satisfy
\[
\begin{aligned}
L_{\mathrm{full}}(\epsilon)
&\le L_{\mathrm{full}}(c\epsilon^{2/3})
+8L_{\mathrm{dir}}(\epsilon^{1/3})
+16L_{\mathrm{full}}(c\epsilon^{1/3}),\\
T_{\mathrm{full}}(\epsilon)
&\le T_{\mathrm{full}}(c\epsilon^{2/3})
+T_{\mathrm{dir}}(\epsilon^{1/3})
+2T_{\mathrm{full}}(c\epsilon^{1/3})
+L_{\mathrm{full}}(\epsilon)\poly\log(1/\epsilon).
\end{aligned}
\]
\end{lemma}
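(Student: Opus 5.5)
We prove Lemma~\ref{lem:full_compile_recursion} by strong induction on the recursion depth: correctness for $\epsilon$ is assumed for all recursive calls, which have accuracies $c\epsilon^{2/3}$, $c\epsilon^{1/3}$ and, inside \textsc{DirectedUnitary}, $\epsilon^{2/45}/100$ and $\epsilon^{1/15}/100$. All of these are strictly larger than $\epsilon$, and each bounded-below interval reaches the base case after finitely many steps, so termination is automatic. Write $\epsilon'\coloneqq\epsilon^{1/3}$. Since $\epsilon<\epsilon_0^3$, we have $\epsilon'<\epsilon_0$. Hence Lemma~\ref{lem:directed_unitary_recursion} applies and $S=\val(w_S)$ is $(e_0,\epsilon')$-directed.

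\textbf{Correctness.} The first step is to check that Proposition~\ref{prop:local_correction_net} applies with target $W_0^{-1}U$ at scale $\epsilon'$. By the induction hypothesis, $\|W_0-U\mathcal P_{24}(S,Q_1,Q_2)^{-1}\|\le c\epsilon^{2/3}=c\epsilon'^2$. Unitary invariance of the operator norm then gives
\[
\|W_0^{-1}U-\mathcal P_{24}(S,Q_1,Q_2)\|=\|U\mathcal P_{24}(S,Q_1,Q_2)^{-1}-W_0\|\le c\epsilon'^2.
\]
Thus $h$ is well defined and $\mathcal P_{24}(S,C_1,C_2)=W_0^{-1}U$ with $C_j=VQ_jV^{-1}$. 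The recursive calls give $\|W_j-C_j\|\le c\epsilon'$, so the perturbation clause of that proposition yields $\|\mathcal P_{24}(S,W_1,W_2)-W_0^{-1}U\|\le\epsilon'^3=\epsilon$. Multiplying on the left by $W_0$ gives $\|W-U\|\le\epsilon$.

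One subtlety needs care here. The algorithm computes $h$ only approximately, via the iterates $h_k$, so the compiled $C_j$ are $V(h_k)Q_jV(h_k)^{-1}$ rather than the exact ones. We handle this by splitting the tolerance $c\epsilon'$: we compute $h_k$ to accuracy $c\epsilon'/4$, which the proof of Proposition~\ref{prop:local_correction_net} shows is achievable in $O(\log(1/\epsilon))$ iterations. We then compile to accuracy $c\epsilon'/2$ (or shrink $c$ by a constant factor), and apply the triangle inequality. Alternatively, we read line 5 as computing $h$ exactly in the idealized sense and treat numerical precision separately. The cleanest route is to fold the slack into the constant $c$, which was allowed to be chosen small.

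\textbf{Length.} The output $w_0\mathcal P_{24}(w_S,w_1,w_2)$ contains one copy of $w_0$. From \eqref{eq:local_correction_sequence}, $\mathcal P_{24}$ contains eight copies of $S$ and eight copies each of $C_1$ and $C_2$, hence sixteen $C$-slots in total. Bounding each $\len(w_j)\le L_{\mathrm{full}}(c\epsilon^{1/3})$ and $\len(w_S)\le L_{\mathrm{dir}}(\epsilon^{1/3})$ gives the stated bound. One point to verify is that $Q_j$ and $S$ appear only through these words; this holds because $\mathcal P_{24}(S,Q_1,Q_2)^{-1}$ is used only as a classical target.

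\textbf{Runtime.} The algorithm makes four recursive calls, one to \textsc{DirectedUnitary} and three to \textsc{FullCompile}, with the costs listed in the bound. The remaining non-recursive work has three parts: evaluating $S$ and $W_0$, which takes time linear in the word length times arithmetic cost; computing $\mathcal P_{24}(S,Q_1,Q_2)^{-1}$, which is $O(1)$ matrix products; and computing $h$, which takes $\poly\log(1/\epsilon)$ time by Proposition~\ref{prop:local_correction_net}. Since $\len(w_S),\len(w_0)\le L_{\mathrm{full}}(\epsilon)$, all of this is $L_{\mathrm{full}}(\epsilon)\poly\log(1/\epsilon)$. Writing out the final word is covered by the same term.

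The main obstacle is the accounting of the approximate computation of $h$ against the tolerance in the perturbation clause. Everything else follows directly from the earlier propositions.
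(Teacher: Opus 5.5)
Your proposal is correct and follows the paper's proof essentially step for step: the same norm identity to invoke Proposition~\ref{prop:local_correction_net} at scale $\epsilon^{1/3}$, its perturbation clause with the recursive tolerances $c\epsilon^{1/3}$, the count of eight copies of each argument of $\mathcal P_{24}$, and the same tally of recursive and non-recursive runtime. Your remarks on termination and on computing $h$ only approximately go beyond the paper, which treats $h$ as exact. Of your fixes for the approximate $h$, absorbing the factor-of-two slack into the constant $c$ is the one that keeps the recursion exactly as stated, since compiling to $c\epsilon^{1/3}/2$ would change the argument of $L_{\mathrm{full}}$.
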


\begin{proof}
Lemma~\ref{lem:directed_unitary_recursion} gives an $(e_0,\epsilon^{1/3})$-directed unitary $S=\val(w_S)$. Moreover,
\[
\|W_0^{-1}U-\mathcal P_{24}(S,Q_1,Q_2)\|
=\|U\mathcal P_{24}(S,Q_1,Q_2)^{-1}-W_0\|
\le c\epsilon^{2/3}.
\]
Proposition~\ref{prop:local_correction_net} therefore applies with target $W_0^{-1}U$ and scale $\epsilon^{1/3}<\epsilon_0$, so the vector $h$ used in Algorithm~\ref{alg:full_compile} is well defined. For $V=\Phi(\exp(0,h))$, the recursive calls producing $w_1,w_2$ give $\|W_j-VQ_jV^{-1}\|\le c\epsilon^{1/3}$ for $j=1,2$, which leads to
\[
\|W-U\|
=\|\mathcal P_{24}(S,W_1,W_2)-W_0^{-1}U\|
\le\epsilon.
\]
For the sequence length, note that $\mathcal P_{24}$ contains each of its three arguments eight times. Hence the output $w=w_0\mathcal P_{24}(w_S,w_1,w_2)$ satisfies
\[
\begin{aligned}
\len(w)
&=\len(w_0)+8\len(w_S)+8\len(w_1)+8\len(w_2)\\
&\le L_{\mathrm{full}}(c\epsilon^{2/3})+8L_{\mathrm{dir}}(\epsilon^{1/3})+16L_{\mathrm{full}}(c\epsilon^{1/3}).
\end{aligned}
\]
For the runtime, the call producing $w_S$ and the three calls producing $w_0,w_1,w_2$ cost at most
\[
T_{\mathrm{dir}}(\epsilon^{1/3})+T_{\mathrm{full}}(c\epsilon^{2/3})+2T_{\mathrm{full}}(c\epsilon^{1/3})
\]
in total. Computing $\val(w_S)$ and $W_0$ and writing $w$ take $O(\len(w)\poly\log(1/\epsilon))$ time. Computing $h$ takes $\poly\log(1/\epsilon)$ time by Proposition~\ref{prop:local_correction_net}, as do the fixed number of matrix operations used to form $V$ and the target matrices. Since $\len(w)\le L_{\mathrm{full}}(\epsilon)$, adding these costs gives the claimed runtime bound.
\end{proof}

\begin{proof}[Proof of Theorem~\ref{thm:general_inverse_free}]
The guarantee that $\|W-U\|\le\epsilon$ follows directly from Lemmas~\ref{lem:compilation_base_case} and~\ref{lem:full_compile_recursion}. As for the sequence length, combining Lemmas~\ref{lem:directed_unitary_recursion} and~\ref{lem:full_compile_recursion} gives
\begin{equation}\label{eq:full_compile_combined_length}
\begin{aligned}
L_{\mathrm{full}}(\epsilon)
\le L_{\mathrm{full}}(c\epsilon^{2/3})
+16L_{\mathrm{full}}(c\epsilon^{1/3})
+256L_{\mathrm{full}}(\epsilon^{2/45}/100)
+256L_{\mathrm{full}}(\epsilon^{1/15}/100).
\end{aligned}
\end{equation}
Note that the exponent $2.988$ satisfies
\[
\left(\frac23\right)^{2.988}
+16\left(\frac13\right)^{2.988}
+256\left(\frac{2}{45}\right)^{2.988}
+256\left(\frac1{15}\right)^{2.988}
<0.999886<1.
\]
Therefore, for all sufficiently small $\epsilon$,
\[
\begin{aligned}
&\log^{2.988}\left(\frac{1}{c\epsilon^{2/3}}\right)
+16\log^{2.988}\left(\frac{1}{c\epsilon^{1/3}}\right)\\
&\qquad+256\log^{2.988}\left(\frac{100}{\epsilon^{2/45}}\right)
+256\log^{2.988}\left(\frac{100}{\epsilon^{1/15}}\right)
<\log^{2.988}\left(\frac1\epsilon\right).
\end{aligned}
\]
Thus, the induction in Eq.~\eqref{eq:full_compile_combined_length} gives
\begin{equation}\label{eq:full_compile_length_bound}
L_{\mathrm{full}}(\epsilon)\le O(\log^{2.988}(1/\epsilon)),\qquad 0<\epsilon\le e^{-1}.
\end{equation}

For the runtime, a call to Algorithm~\ref{alg:full_compile} with $\epsilon<\epsilon_0^3$ makes three recursive calls directly and two more through Algorithm~\ref{alg:directed_unitary}. Using Eq.~\eqref{eq:full_compile_length_bound}, Lemmas~\ref{lem:directed_unitary_recursion} and~\ref{lem:full_compile_recursion} give
\[
\begin{aligned}
T_{\mathrm{full}}(\epsilon)
\le T_{\mathrm{full}}(c\epsilon^{2/3})
+2T_{\mathrm{full}}(c\epsilon^{1/3})
+T_{\mathrm{full}}(\epsilon^{2/45}/100)
+T_{\mathrm{full}}(\epsilon^{1/15}/100)
+\poly\log(1/\epsilon).
\end{aligned}
\]
With $c$ fixed, decrease $\epsilon_0$ if necessary so that, for every $0<\epsilon<\epsilon_0^3$,
\[
\min\left\{c\epsilon^{2/3},c\epsilon^{1/3},\frac{\epsilon^{2/45}}{100},\frac{\epsilon^{1/15}}{100}\right\}
\ge\epsilon^{3/4}.
\]
Starting from input error $\epsilon$, every call at depth $k$ therefore has error tolerance at least $\epsilon^{(3/4)^k}$. All branches have reached the base case once $\epsilon^{(3/4)^k}\ge\epsilon_0^3$, which takes $O(\log\log(1/\epsilon))$ rounds. Since each node has at most five children, the recursion tree has at most $5^{O(\log\log(1/\epsilon))}=\poly\log(1/\epsilon)$ nodes. Multiplying by the $\poly\log(1/\epsilon)$ nonrecursive work per node gives $T_{\mathrm{full}}(\epsilon)=\poly\log(1/\epsilon)$.
\end{proof}

\bibliographystyle{alpha}
\bibliography{main}

\end{document}